\documentclass{article} 
\usepackage[nonatbib, preprint]{neurips_2020}
\usepackage[utf8]{inputenc}
\usepackage{amsmath, amsthm, xcolor, algorithm, url, amssymb}
\usepackage{float}
\usepackage{mathtools}
\usepackage[noend]{algpseudocode}
\usepackage[xcolor]{changes}
\usepackage{glossaries}
\usepackage[hidelinks,breaklinks=true]{hyperref}
\usepackage{varwidth}

\usepackage{tikz}
\usepackage{graphicx}
\usetikzlibrary{matrix}
\usetikzlibrary{trees}
\usetikzlibrary{arrows.meta}
\usetikzlibrary{positioning, fit}
\usetikzlibrary{shapes, decorations}
\usetikzlibrary{decorations.pathreplacing}
\hypersetup{
    colorlinks,
    citecolor=blue,
    filecolor=black,
    linkcolor=black,
    urlcolor=black
}

\theoremstyle{definition}

\newtheorem{theorem}{Theorem}[section]

\newtheorem{lemma}[theorem]{Lemma}
\newtheorem{prop}[theorem]{Proposition}
\newtheorem{define}[theorem]{Definition}
\newtheorem{ex}[theorem]{Example}
\newtheorem*{remark}{Remark}

\numberwithin{equation}{section}
\numberwithin{algorithm}{section}

\DeclareMathOperator*{\argmax}{arg\,max}

\newcommand{\prob}[1]{\mathbb{P}\left ( #1 \right)} 
\newcommand{\e}[1]{\exp \left( #1 \right)} 
\newcommand{\E}[1]{\mathbb{E}\left[ #1 \right]} 

\newcommand{\validatorset}{\mathcal{V}}

\newcommand{\gblock}{B_{\text{genesis}}}
\newcommand{\mainprotocol}{\mathsf{Gasper}}

\newcommand{\eps}{\epsilon}

\newcommand{\mlmd}{\mathsf{HLMD}}
\newcommand{\ebb}{\mathsf{EBB}}
\newcommand{\lastepoch}{\mathsf{LEBB}}
\newcommand{\lastepochpair}{\mathsf{LE}}

\newcommand{\lastjustpair}{\mathsf{LJ}}
\newcommand{\block}{\mathsf{block}}
\newcommand{\fork}{\mathsf{fork}}
\newcommand{\slot}{\mathsf{slot}}
\newcommand{\epoch}{\mathsf{ep}}
\newcommand{\attestepoch}{\mathsf{aep}}
\newcommand{\chain}{\mathsf{chain}}
\newcommand{\fview}{\mathsf{fv}}
\newcommand{\view}{\mathsf{view}}
\newcommand{\ffgview}{\mathsf{ffgview}}

\newcommand{\network}{\mathsf{NW}}
\newcommand{\newattests}{\mathsf{newattests}}
\newcommand{\votes}{\xrightarrow[]{\mathsf{V}}}
\newcommand{\supervotes}{\xrightarrow[]{\mathsf{J}}}

\title{Combining GHOST and Casper}

\author{%
  Vitalik Buterin \\
  Ethereum Foundation \\
  \texttt{v@buterin.com} \\
  \And
  Diego Hernandez \\
  SJSU \\
  \texttt{diego.hernandez@sjsu.edu} \\
  \AND
  Thor Kamphefner \\
  SJSU \\
  \texttt{thor.kamphefner@sjsu.edu} \\
  \And
  Khiem Pham \\
  VinAI Research \\
  \texttt{v.khiempd1@vinai.io} \\
  \And
  Zhi Qiao \\
  SJSU \\
  \texttt{qiaozhi9092@gmail.com} \\
  \And
  Danny Ryan \\
  Ethereum Foundation \\
  \texttt{danny@ethereum.org} \\
  \And
  Juhyeok Sin \\
  SJSU \\
  \texttt{homin203@gmail.com} \\
  \And
  Ying Wang \\
  SJSU \\
  \texttt{yingwang1001@gmail.com} \\
  \And
  Yan X Zhang \\
  SJSU \\
  \texttt{yan.x.zhang@sjsu.edu} \\
}
\date{\today}

\begin{document}

\maketitle

\begin{abstract}
We present ``Gasper,'' a proof-of-stake-based consensus protocol, which is an idealized version of the proposed Ethereum 2.0 beacon chain. The protocol combines Casper FFG, a finality tool, with LMD GHOST, a fork-choice rule. We prove safety, plausible liveness, and probabilistic liveness under different sets of assumptions.
\end{abstract}

\tableofcontents

\newpage

\section{Introduction}
Our goal is to create a consensus protocol for a proof-of-stake blockchain. This paper is motivated by the need to construct Ethereum's ``beacon chain'' in its sharding design. However, the construction can also be used for a solo blockchain with no sharding. 

We present $\mainprotocol$, a combination of Casper FFG (the Friendly Finality Gadget) and LMD GHOST (Latest Message Driven Greediest Heaviest Observed SubTree). Casper FFG \cite{buterin2017casper} is a ``finality gadget,'' which is an algorithm that marks certain blocks in a blockchain as \emph{finalized} so that participants with partial information can still be fully confident that the blocks are part of the canonical chain of blocks. Casper is not a fully-specified protocol and is designed to be a ``gadget'' that works on top of a provided blockchain protocol, agnostic to whether the provided chain were proof-of-work or proof-of-stake. LMD GHOST is a \emph{fork-choice rule} where \emph{validators} (participants) \emph{attest} to blocks to signal support for those blocks, like voting. $\mainprotocol$ is a full proof-of-stake protocol that is an idealized abstraction of the proposed Ethereum implementation.


In Sections~\ref{sec:setup} and ~\ref{sec:stage-1}, we define our primitives, provide background knowledge, and state our goals, In Section~\ref{sec:stage-2}, we give our main protocol $\mainprotocol$. In Sections~\ref{sec:safety}, \ref{sec:plausible-liveness}, and \ref{sec:liveness}, we formally prove $\mainprotocol$'s desired qualities. In Section~\ref{sec:practice-vs-theory}, we summarize some differences between $\mainprotocol$ and the actual planned Ethereum implementation, such as delays in accepting attestations, delaying finalization, and dynamic validator sets. We conclude with some thoughts and ideas for future research in Section~\ref{sec:conclusion}.



\section{Setup and Goals}
\label{sec:setup}

The main goal of this paper is to describe and prove properties of a proof-of-stake blockchain with certain safety and liveness claims. In this section, we give some general background on consensus protocols and blockchain. As the literature uses a diverse lexicon, the primary purpose of this section is to pick out a specific set of vocabulary for the rest of the paper.


\subsection{Consensus Protocols, Validators, Blockchain}

Our goal is to create a \emph{consensus protocol} (or \emph{protocol} for short from this point on), which is an agreed suite of algorithms for a set of entities (nodes, people, etc.) to follow in order to obtain a consensus history of their state, even if the network is unreliable and/or if many validators are malicious.

We call the entities (people, programs, etc.) who participate in protocol \emph{validators}, denoted by the set $\validatorset$. They are called ``validators'' because of their data-validation role in the Ethereum 2.0 beacon chain, though the specifics of this role are not part of our abstract protocol. Other works use many different words for the same concept: e.g. ``replicas'' in classic PBFT literature, ``block producers'' in DPoS and EOS, ``peers'' on Peercoin and Nxt, or ``bakers'' on Tezos. 
The validators are connected to each other on a (peer-to-peer) \emph{network}, which means they can broadcast \emph{messages} (basically, packets of data) to each other. In the types of protocols we are interested in, the primary types of messages are \emph{blocks}, which collect bundles of messages. The first block is called the \emph{genesis block} and acts as the initial ``blank slate'' state. Other blocks are descriptions of state transitions with a pointer to a \emph{parent} block\footnote{Our choice that each block has a unique parent is almost taken for granted in blockchains; one can imagine blockchains where blocks do not need to have parents, or can have more than one parent. See, for instance, Hashgraph \cite{hashgraph}. Even the very general structure of ``block'' itself may be limiting!}. A \emph{blockchain} is just an instantiation (i.e. with specific network conditions, validators, etc.) of such a consensus protocol where we use blocks to build the overall state.

We do not want all blocks seen on the network to be accepted as common history because blocks can conflict with each other. These conflicts can happen for honest reasons such as network latency, or for malicious reasons such as byzantine validators wanting to double-spend.  \emph{Byzantine} refers to a classic problem in distributed systems, the Byzantine Generals Problem, and is used in this instance to describe a validator who does not follow the protocol, either because they are malicious or experiencing network latency. Graph-theoretically, one can think of a choice of history as a choice of a ``chain'' going from the genesis block to a particular block. Thus, a consensus history is a consensus of which chain of blocks to accept as correct. This is why we call such an instantiation of our protocol a ``blockchain.''

\begin{ex}
\label{ex:yunice}
Assume the goal of a protocol is to keep a ledger of every validator's account balances. Then their collective state is the ledger, and each block captures a state transition, such as a monetary transaction from one validator to another (e.g., ``Yunice sends coin with ID $538$ to Bureaugard,''). One type of conflict we may want to avoid is having both the aforementioned transaction and another transaction with content ``Yunice sends coin with ID $538$ to Carol'' be accepted into consensus history, an act commonly called \emph{double-spending}. The current consensus state can be determined by starting from a ``genesis state'' and sequentially processing each message in the consensus history (hopefully one that all honest validators will agree with), starting from the genesis block.
\end{ex}

\subsection{Messages and Views}


In our model, a validator $V$ interacts with the network by broadcasting \emph{messages}, which are strings in some language. When an honest validator $V$ broadcasts a message $M$, we assume $M$ is sent to all validators on the network\footnote{We model the validators as sending a single message to an abstract ``network,'' which handles the message. In practice, it may be part of the protocol that e.g., every honest validator rebroadcasts all messages that they see, for robustness. We consider these implementation details not important to our paper.}. 

The main type of message \emph{proposes} a \emph{block}, which is a piece of data, to the network. Other messages can be  bookkeeping notices such as voting for blocks (``attestation''), putting new validators on the blockchain (``activation''), proving bad actions of other validators (``slashing''), etc. depending on the specific protocol. We assume that each message is digitally signed by a single validator, which means we can accurately trace the author of each message and attacks such as impersonation of honest validators are not possible. 

Due to network latency and dishonest validators (delaying messages or relaying incorrect information), validators may have different states of knowledge of the full set of messages given to the network; we formalize this by saying that a validator either \emph{sees} or \emph{does not see} each message given to the network at any given time.

Each message may have one or more \emph{dependencies}, where each dependency is another message. At any time, we \emph{accept} a message if and only if all of its dependencies (possibly none) are accepted, defined recursively. We now define the \emph{view} of a validator $V$ at a given time $T$, denoted as $\view(V, T)$, as the set of all the \textbf{accepted} messages the validator has seen so far. We also have a ``God's-eye-view'' that we call the \emph{network view}, defined to be the set of accepted messages for a hypothetical validator that has seen (with no latency) all messages any validator has broadcast at any time (this includes messages sent by a malicious validator to only a subset of the network). We will treat the network as a ``virtual validator'', so we use $\view(\network, T)$ to denote the network view at time $T$. For any validator $V$ and any given time $T$, $\view(\network, T)$ includes all the messages in $\view(V, T)$, though the timestamps may be mismatched. Finally, since usually the context is that we are talking about a specific point in time, we will usually just suppress the time and use notation such as $\view(V)$ to talk about $\view(V,T)$, unless talking about the specific time is necessary. 

\begin{ex}
At time $T$, suppose validator $V$ sees the message $M$ with content ``Yunice sends coin number $5340$ to Bob,'' but has not yet seen the message $M'$ containing ``Yunice obtains coin number $5340$'' (which is on the network but has not yet made its way to $V$). Also suppose that in our protocol $M$ depends on $M'$ (and no other message), which captures the semantic meaning that we need to obtain a coin before spending it, and that $V$ cannot and should not act on $M$ without seeing $M'$. In this situation we say that $V$ has seen but has not accepted $M$, so $M \notin \view(V,T)$. However, the network has seen both; so if the network accepts $M'$, we have $M \in \view(\network, T)$.
\end{ex}

We now make some assumptions on the structure of views:

\begin{itemize}
    \item Everyone (the validators and the network) starts with a single message (with no signature) corresponding to a single agreed-upon \emph{genesis block}, denoted $\gblock$, with no dependencies (so it starts out automatically accepted into all views).
    \item Each block besides $\gblock$ refers to (and depends on) a \emph{parent block} as part of its data. Thus, we can visualize $\view(V)$ as a directed acyclic graph (in fact, a directed tree\footnote{This statement already captures the primary reason we define both seeing and accepting; the blocks a validator \emph{accepts} must be a single rooted tree, though the blocks a validator \emph{sees} can be some arbitrary subset of the tree, so limiting our attention to the tree allows cleaner analysis and avoids pathological cases. Beyond this reason, the distinction between the two words is unimportant for the rest of this paper.}) rooted at $\gblock$, with an edge $B \leftarrow B'$ if $B$ is the parent of $B'$, in which case we say $B'$ is a \emph{child} of $B$. We call these \emph{parent-child} edges to differentiate from other types of edges.
    \item We say that a block is a \emph{leaf} block if it has no children.
    \item A \emph{chain} in such a protocol would then be a sequence of pairwise parent-child edges $B_1 \leftarrow B_2 \leftarrow \cdots$ If there is a chain from $B$ to $B'$, we say $B'$ is a \emph{descendant} of $B$. We say $B$ is an \emph{ancestor} of $B'$ if and only if $B'$ is a descendant of $B$. We say two blocks $B$ and $B'$ \emph{conflict} if they do not equal and neither is a descendant of the other. 
    \item The above setup implies each block $B$ uniquely defines a (backwards) chain starting from $\gblock$ to $B$, and this must be the same chain in any view that includes $B$ (thus we do not need to include a view as part of its definition). We call this \emph{the chain of $B$}, or $\chain(B)$.
\end{itemize}


\begin{figure}[H]
\begin{center}
\begin{tikzpicture}
\tikzset {
    basic/.style = {rectangle, minimum width=3em, minimum height = 1.3em},
    root/.style = {basic, thin, align=center},
    block/.style = {draw, basic, thin, align=center}
}
\node[draw, basic](A0) {{$\gblock$}};
\node[block, below of = A0](A) {{$A$}};
\node[block, below of = A](B){$B$};
\node[block, below left of = B](C){$C$};
\node[block, below right of = B](D){$D$};
\node[block, below of = C](E){$E$};

\draw[->] (A) -- (A0);
\draw[->] (B) -- (A);
\draw[->] (C) -- (B);
\draw[->] (D) -- (B);
\draw[->] (E) -- (C);



\tikzstyle{dash}=[dashed,fill=gray!50]
\end{tikzpicture}
\end{center}
\caption{A graph visualization of a view (only visualizing blocks and not other messages) that occurs in the types of protocols we care about.  \label{fig:view}} 
\end{figure}
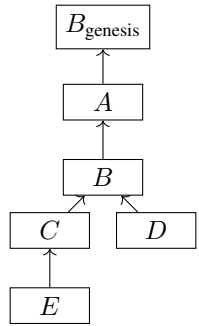

\begin{ex}
See Figure~\ref{fig:view} for an example of a view, limited to blocks. Arrows portray parent-child edges. The blocks in a view form a tree rooted at a single genesis block $\gblock$. Here blocks $C$ and $D$ conflict (as well as $E$ and $D$). $E$ has the longest chain, which is $\chain(E) = (\gblock, A, B, C, E)$. 
\end{ex}

There are some technicalities to our wording in this section that motivates our definition of \emph{views} but are not very important to the core of this paper. To not distract the reader, we defer these details to Appendix~\ref{sec:view}.

\subsection{Proof-of-stake}

The first and most influential approach to blockchain protocols is Bitcoin \cite{nakamoto2008bitcoin}, using a proof-of-work model. This approach makes it computationally difficult to propose blocks, using the simple and intuitive fork-choice rule, ``the block with the heaviest chain is the head of the chain." Miners (the analogy of validators) simply propose blocks to build on the heaviest chain (the chain with the most amount of computational work), which does not mathematically guarantee any non-genesis block as ``correct'' but offers a probabilistic guarantee since it becomes less and less likely to conflict with a block once other miners build on top of it. Then the consensus history is simply the chain with the heaviest amount of work, which ideally keeps growing. The elegance of proof-of-work is paid for by the expensive cost of computational work / electricity / etc. to propose a block. 

In this paper, we want to create a \emph{proof-of-stake} blockchain, where a validator's voting power is proportional to their bonded stake (or money) in the system. Instead of using computational power to propose blocks, proposing blocks is essentially free. In exchange, we need an additional layer of mathematical theory to prevent perverse incentives 
that arise when we make proposing blocks ``easy.'' We now shift our paper's focus to a family of blockchain designs with proof-of-stake in mind.


To start, we assume we have a set of $N$ \emph{validators} $\validatorset = \left\{V_1, \ldots, V_N\right\}$ and that each validator $V \in \validatorset$ has an amount of \emph{stake} $w(V)$, a positive real number describing an amount of collateral. We make the further assumption that the average amount of stake for each validator is $1$ unit, so the sum of the total stake is $N$. All of our operations involving stake will be linear, so this scaling does not change the situation and makes the bookkeeping easier. We assume validator sets and stake sizes are fixed for the sake of focusing on the theoretical essentials of the protocol. We address issues that occur when we relax these conditions in practical use in later sections, especially Section~\ref{sec:practice-vs-theory}.

The main ingredients we need are:
\begin{itemize}
    \item A \emph{fork-choice rule}:  a function $\fork()$ that, when given a view $G$, identifies a single leaf block $B$. This choice produces a unique chain $\fork(G) = \chain(B)$ from $\gblock$ to  $B$ called the \emph{canonical chain}. The block $B$ is called the \emph{head of the chain} in view $G$. Intuitively, a fork-choice rule gives a validator a ``law'' to follow to decide what the ``right'' block should be. For example, the \emph{longest chain rule} is a fork-choice rule that returns the leaf block which is farthest from the genesis block. This rule is similar with the ``heaviest chain rule'' used by Bitcoin.
    \item A concept of \emph{finality}: formally, a deterministic function $F$ that, when given a view $G$, returns a set $F(G)$ of \emph{finalized} blocks. Intuitively, finalized blocks are ``blocks that everyone will eventually think of as part of the consensus history" or ``what the blockchain is sure of.''
    \item \emph{Slashing conditions}: these are conditions that honest validators would never violate and violating validators can be provably caught, with the idea that violators' stake would be \emph{slashed}, or destroyed. Slashing conditions incentivize validators to follow the protocol (the protocol can reward honest validators for catching dishonest validators violating the conditions, for example). 
\end{itemize}

The key concept we use to define these ingredients are  \emph{attestations}, which are votes (embedded in messages) for which blocks ``should'' be the head of the chain. To prevent validators from double-voting or voting in protocol-breaking ways, we enforce \emph{slashing conditions} which can be used to destroy a validator's stake, incentivizing validators to follow the protocol. 


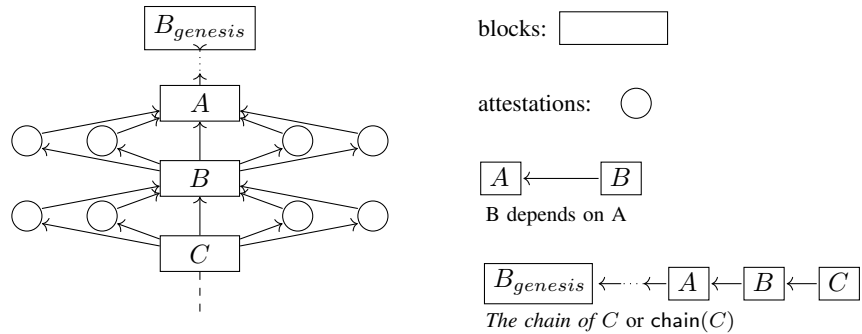
\begin{figure}[H]
    \begin{center}
\begin{tikzpicture}
\tikzset {
    basic/.style = {rectangle, minimum width=3em, minimum height = 1.3em},
    root/.style = {basic, thin, align=center},
    block/.style = {draw, basic, thin, align=center}
}
\node[draw, basic](A0) {{$B_{genesis}$}};
\draw[->] (A0) -- (0, -0.3);
\draw[dotted] (0, -0.3) -- (0, -0.7);
\draw[->] (A) -- (0, -0.6);
\node[block, below of = A0](A) {{$A$}};
\node[block, below of = A](B){$B$};
\node[block, below of = B](C){$C$};
\draw[dashed] (C) -- (0, -3.8);

\draw[] (-2.3,-1.5) circle(0.2cm);
\draw[] (-1.3,-1.5) circle(0.2cm);
\draw[] (1.3,-1.5) circle(0.2cm);
\draw[] (2.3,-1.5) circle(0.2cm);

\draw[] (-2.3,-2.5) circle(0.2cm);
\draw[] (-1.3,-2.5) circle(0.2cm);
\draw[] (1.3,-2.5) circle(0.2cm);
\draw[] (2.3,-2.5) circle(0.2cm);

\draw[->] (B) -- (A);
\draw[->] (C) -- (B);

\draw[->] (B) -- (-2.1, -1.6);
\draw[->] (B) -- (-1.1, -1.6);
\draw[->] (B) -- (1.1, -1.6);
\draw[->] (B) -- (2.1, -1.6);

\draw[->] (-2.1, -1.4) -- (A);
\draw[->] (-1.1, -1.4) -- (A);
\draw[->] (1.1, -1.4) -- (A);
\draw[->] (2.1, -1.4) -- (A);

\draw[->] (-2.1, -2.4) -- (B);
\draw[->] (-1.1, -2.4) -- (B);
\draw[->] (1.1, -2.4) -- (B);
\draw[->] (2.1, -2.4) -- (B);

\draw[->] (C) -- (-2.1, -2.6);
\draw[->] (C) -- (-1.1, -2.6);
\draw[->] (C) -- (1.1, -2.6);
\draw[->] (C) -- (2.1, -2.6);

\node[text width = 1.2cm, text height = 0.2cm, draw] at (5.5, 0){};
\node[text width = 2cm, font=\fontsize{9}{0}\selectfont] at (4.7, 0){blocks: };

\draw[] (5.8, -1) circle(0.2cm);
\node[text width = 2cm, font=\fontsize{9}{0}\selectfont] at (4.7, -1){attestations: };

\node[text width = 0.3cm, text height = 0.2cm, draw] at (5.6, -2){$B$};
\node[text width = 0.3cm, text height = 0.2cm, draw] at (4.0, -2){$A$};
\draw[<-] (4.3, -2) -- (5.3, -2);
\node[text width = 5cm, font=\fontsize{8}{0}\selectfont] at (6.3, -2.5){B depends on A};

\node[text width = 1.2cm, text height = 0.2cm, draw] at (4.5, -3.4){$B_{genesis}$};
\node[text width = 0.3cm, text height = 0.2cm, draw] at (6.5, -3.4){$A$};
\node[text width = 0.3cm, text height = 0.2cm, draw] at (7.5, -3.4){$B$};
\node[text width = 0.3cm, text height = 0.2cm, draw] at (8.5, -3.4){$C$};

\draw[->] (8.2, -3.4) -- (7.8, -3.4);
\draw[->] (7.2, -3.4) -- (6.8, -3.4);
\draw[->] (6.2, -3.4) -- (5.9, -3.4);
\draw[dotted] (5.9, -3.4) -- (5.6, -3.4);
\draw[->] (5.6, -3.4) -- (5.3, -3.4);
\node[text width = 5cm, font=\fontsize{8}{0}\selectfont] at (6.3, -3.9){\emph{The chain of $C$} or $\chain(C)$};

\end{tikzpicture}

\end{center}
     \caption{An ideal view with both blocks and attestations.}
\label{fig:attestations}
\end{figure}

One can see an idealized version of the protocol we will present in Figure~\ref{fig:attestations}; in each time period, a new block is created with the parent block being the last head of chain (which is ideally just the last block created), and validators in the corresponding committee all perfectly attest to the new block. In less idealized situations, the blocks may fork, we may have missing attestations, etc. and the attestations will come to play by helping validators decide the correct head of the chain. In Section~\ref{sec:stage-2} we will define all the details.

\subsection{Byzantine Validators, PBFT}

In a protocol, we call a validator \emph{honest} if he/she follows the protocol, and \emph{byzantine} otherwise. We typically assume strictly less than $p = \frac{1}{3}$ of validators are byzantine. This constant can be traced to Practical Byzantine Fault Tolerance (PBFT) from \cite{castro1999practical}, a classic consensus protocol in byzantine tolerance literature; PBFT ensures that the system runs correctly as long as less than $\frac{1}{3}$ of the \emph{replicas} (synonymous with our \emph{validators}) are byzantine, as proven in \cite{bracha1985asynchronous}. This constant $\frac{1}{3}$ tolerance appears frequently in many works based on PBFT (the main idea is that a pigeonhole principle argument needs the $1/3$ to work, such as in our proof in Section~\ref{sec:safety}), with Casper FFG \cite{buterin2017casper} being the most directly relevant to our work, which we review in Section~\ref{sec:ingredients-ffg}.

Concretely, we define a blockchain running the protocol to be \emph{$p$-slashable} if validators with a total of $pN$ stake can be provably slashed by a validator with the network view, and most of our results will take the form ``at any time, our blockchain either has (good) property X or is $(1/3)$-slashable ,'' because we really cannot guarantee having any good properties when we have many byzantine actors, so this is the strongest type of statement we can have. Also note that having $(1/3)$-slashable is a stronger and much more desirable result than just having $(1/3)$ of the stake belonging to byzantine validators, because if we only had the latter guarantee, we may still incentivize otherwise honest validators to cheat in a way they do not get caught.

One of the most important aspects of PBFT is that it works under \emph{asynchronous} conditions, which means we have no bounds on how long it may take for messages to be received. Thus, it is robust to the demands of blockchains and cryptocurrencies, where we often worry about adversarial contexts. We will discuss synchrony assumptions further in Section~\ref{sec:time-synchrony}.

\subsection{Safety and Liveness}

The ultimate goal for the honest validators is to grow a finalized chain where all blocks form ``logically consistent'' state transitions with each other (despite having validators potentially go offline, suffer latency problems, or maliciously propose conflicting state changes). Formally, this translates to two desired properties: 
\begin{define} \label{def:soundness}
We say a consensus protocol has:
\begin{itemize}
    \item \emph{safety}, if the set of finalized blocks $F(G)$ for any view $G$ can never contain two conflicting blocks. A consequence of having safety is that any validator view $G$'s finalized blocks $F(G)$ can be ``completed'' into a unique subchain of $F(\view(\network))$ that starts at the genesis block and ends at the last finalized block, which we call the \emph{finalized chain}.
    \item \emph{liveness}, if the set of finalized blocks can actually grow. There are different ways to define liveness. For our paper, we say the protocol has:
      \begin{itemize}
        \item \emph{plausible liveness}, if regardless of any previous events (attacks, latency, etc.) it is always possible for new blocks to be finalized (alternatively, it is impossible to become ``deadlocked''). This is to prevent situations where honest validators cannot continue unless someone forfeits their own stake. 
        \item \emph{probabilistic liveness}, if regardless of any previous events, it is probable for new blocks to be finalized (once we make some probabilistic assumptions about the network latency, capabilities of attackers, etc.).
        \end{itemize}
    On first glance, the latter implies the former, but the situation is more subtle; the former is a purely non-probabilistic property about the logic of the protocol; the latter requires (potentially very strong) assumptions about the context of the implementation to guarantee that the protocol ``usually works as intended.'' We discuss this contrast further in Section~\ref{sec:prob-just-to-prob-final}.
\end{itemize}
\end{define}

Our main goal in this paper is to prove these vital properties for our main protocol in Section~\ref{sec:stage-2}.

\begin{remark}[Syntax versus Semantics]

It should not be immediately obvious that chains have anything to do with logical consistence of state transitions, and this requires more work on the part of the protocol designer to ensure. For example, one can dictate a  protocol where a block that allows a user to spend a coin $S$ forbids an ancestor block from using $S$ in another transaction. 
In this situation, if Xander obtained a coin $S$ from block $B_x$ and writes two blocks $B_y$ and $B_z$, such that $B_y$'s data includes Xander paying $S$ to Yeezus  and  $B_z$'s data includes Xander paying $S$ to Zachariah, then the logical idea that these actions are inconsistent corresponds to the graph-theoretical property that $B_y$ and $B_z$ conflict as blocks. The language of chains and conflicting blocks is enough to serve any such logic, so we assume in our paper that the syntax of what blocks can be children of what other blocks has already been designed to embed the semantics of logical consistency, which allows us to ignore logic and only think in terms of chains and conflicting blocks.
\end{remark}

\subsection{Time, Epochs, and Synchrony}
\label{sec:time-synchrony}

In our model, we measure time in \emph{slots}, defined as some constant number of seconds (tentatively $12$ seconds in \cite{beacon}). We then define an \emph{epoch} to be some constant number $C$ (tentatively $64$ slots in \cite{beacon}) of slots. The \emph{epoch $j$ of slot $i$} is $\epoch(i) = j=\lfloor \frac{i}{C} \rfloor$. In other words, the blocks belonging to epoch $j$ have slot numbers $jC+k$ as $k$ runs through $\{0, 1, \ldots, C-1\}$. The genesis block $\gblock$ has slot number $0$ and is the first block of epoch $0$. 

The main purpose of epochs is to divide time into pieces, the boundaries between which can be thought of as ``checkpoints.'' This allows concepts from Casper FFG to be used, as we will see in Section~\ref{sec:ingredients-ffg}.

We should not assume the network is guaranteed to have the validators see the same messages at any given time, or even that different validators have the same view of time. The study of consensus protocols address this issue by having different \emph{synchrony conditions}, such as:
\begin{itemize}
    \item a \emph{synchronous} system has explicit upper bounds for time needed to send messages between nodes;
    \item an \emph{asynchronous} system has no guarantees; recall that PBFT \cite{castro1999practical} works under asynchrony.
    \item a \emph{partially synchronous} system can mean one of two things depending on context: (i) explicit upper bounds for delays exist but are not known a priori; (ii) explicit upper bounds are known to exist after a certain unknown time $T$. The work \cite{dwork1988consensus} establishes bounds on fault tolerance in different fault and synchrony models, focusing on partial synchrony. 
\end{itemize}

For us, we make no synchrony assumptions when studying safety and plausible liveness. When studying probabilistic liveness (i.e. trying to quantify liveness under ``realistic'' conditions), we will use the notion (ii) of partial synchrony above. We say that the network is \emph{$t$-synchronous} at time $T$ (where $T$ and $t$ are both in units of slots) if all messages with timestamps at or before time $(T-t)$ are in the views of all validators at time $T$ and afterwards;  e.g., if each slot is $12$ seconds, then $(1/2)$-synchrony means that all messages are received up to $6$ seconds later. 

\begin{remark}It is possible to e.g., receive messages ``from the future.'' Suppose Alexis sends a message timestamped at 00:01:30 PT and Bob receives it 1 second later with a clock that is 3 seconds behind Alexis's. Bob sees this 00:01:30 PT message on his own clock timestamped at 00:01:28 PT because he is a net total of 2 seconds behind. To make analysis easier, we can assume that all (honest) validators always delay the receiving of a message (i.e., do not add it into their view) until their own timestamps hit the message's timestamp. This allows us to assume no messages are read (again, by honest validators) in earlier slots than they should.
\end{remark}

\section{Main Ingredients}
\label{sec:stage-1}



\subsection{Casper FFG}
\label{sec:ingredients-ffg}

Buterin and Griffith introduce Casper the Friendly Finality Gadget (FFG) in~\cite{buterin2017casper}. This tool defines the concepts of \emph{justification} and \emph{finalization} inspired by practical Byzantine Fault Tolerance (PBFT) literature. Casper is designed to work with a wide class of blockchain protocols with tree-like structures.

\begin{itemize}
    \item Every block has a \emph{height} defined by their distance from the genesis block (which has height $0$). Equivalently, the height of $B$ is the length of $\chain(B) - 1 .$
    \item We define \emph{checkpoint} blocks to be blocks whose height is a multiple of a constant $H$ (in \cite{buterin2017casper}, $H = 100$). We define the \emph{checkpoint height} $h(B)$ for a checkpoint block $B$ as the height of $B$ divided by $H$, which is always an integer. Thus, we can think of the subset of checkpoints in the view as a subtree, containing only blocks whose heights are multiples of $H$.
    \item \emph{Attestations} are signed messages containing ``checkpoint edges'' $A \rightarrow B$, where $A$ and $B$ are checkpoint blocks. We can think of each such attestation as a ``vote'' to move from block $A$ to $B$. The choices of $A$ and $B$ depend on the underlying blockchain and is not a part of Casper. Each attestation has a \emph{weight}, which is the stake of the validator writing the attestation. Ideally, $h(B) = h(A) + 1$, but this is not a requirement. For example, if $H = 100$, it is possible for an honest validator to somehow miss block $200$, in which case their underlying blockchain may want them to send an attestation from checkpoint block $100$ to checkpoint block $300$.
    \end{itemize}

\begin{remark}
Observe we setup time in units of \emph{slots}, which are grouped into \emph{epochs}. This is analogous to (but not exactly identical to) block heights and checkpoint blocks. We discuss this further in Section~\ref{sec:ebb}.
\end{remark}

Casper FFG also introduces the concepts of \emph{justification} and \emph{finalization}, which are  analogous to phase-based concepts in the PBFT literature such as \emph{prepare} and \emph{commit}, e.g., see \cite{castro1999practical}:
    \begin{itemize}
    \item In each view $G$, there is a set of \emph{justified} checkpoint blocks $J(G)$ and a subset $F(G) \subset J(G)$ of \emph{finalized} checkpoint blocks. The genesis block is always both justified and finalized.
    \item In a view $G$, a checkpoint block $B$ is \emph{justified} (by a justified checkpoint block $A$) if $A$ is justified and there are attestations voting for $A \rightarrow B$ with total weight at least $2/3$ of total validator stake. Equivalently, we say there is a \emph{supermajority link} $A \supervotes B$. This is a view-dependent condition, because the view in question may have or have not seen all the relevant attestations to break the $2/3$ threshold (or even having seen the block $B$ itself), which is why the set of justified blocks is parametrized by $G$.
    \item In a view $G$, if  $A \in J(G)$ (equivalently, $A$ is justified), and $A \supervotes B$ is a supermajority link with $h(B) = h(A) + 1$, then we say $A \in F(G)$ (equivalently, $A$ is \emph{finalized}). 
    
\end{itemize}

Finally, Casper introduces \emph{slashing conditions}, which are assumptions about honest validators. When they are broken by a validator $V$, a different validator $W$ can \emph{slash} $V$ (destroy $V$'s stake and possibly getting some sort of ``slashing reward'') by offering proof $V$ violated the conditions. 
\begin{define}
\label{def:casper-slashing} 
The following \emph{slashing conditions}, when broken, cause a validator violating them to have their stake slashed:
\begin{itemize}
    \item (S1) No validator makes two distinct attestations $\alpha_1$ and $\alpha_2$ corresponding to checkpoint edges $s_1 \rightarrow t_1$ and $s_2 \rightarrow t_2$ respectively  with $h(t_1) = h(t_2)$.
    \item (S2) No validator makes two distinct attestations $\alpha_1$ and $\alpha_2$ corresponding respectively to checkpoint edges $s_1 \rightarrow t_1$ and $s_2 \rightarrow t_2$, such that 
    \[
h(s_1) < h(s_2) < h(t_2) < h(t_1).
    \]
\end{itemize}
\end{define}

As Casper is a finality gadget and not a complete protocol, it assumes the underlying protocol has its own fork-choice rule, and at every epoch all the validators run the fork-choice rule at some point to make one (and only one) attestation. It is assumed that honest validators following the underlying protocol will never be slashed. For example, if the protocol asks to make exactly one attestation per epoch, then honest validators will never violate (S1).

The main theorems in Casper are (slightly paraphrased):

\begin{theorem}[Accountable Safety]
Two checkpoints on different branches cannot both be finalized, unless a set of validators owning stake above some total provably violated the protocol (and thus can be held accountable).
\end{theorem}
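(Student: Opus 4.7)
The plan is to prove the contrapositive: if two conflicting checkpoints $A_1$ and $A_2$ are both finalized in some view, then a set of validators holding combined stake at least $N/3$ must have signed attestations provably violating either (S1) or (S2) from Definition~\ref{def:casper-slashing}. Without loss of generality I take $h(A_1) \le h(A_2)$ and write $A_1 \supervotes B_1$ and $A_2 \supervotes B_2$ for the supermajority links witnessing finalization, so that $h(B_i) = h(A_i) + 1$. The workhorse throughout is the pigeonhole principle: every supermajority link is backed by $\ge \tfrac{2}{3}N$ stake, so any two supermajority links share signers of combined stake $\ge \tfrac{1}{3}N$.

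\textbf{Equal-height case.} If $h(A_1) = h(A_2)$, the conflicting checkpoints are distinct and each is justified, so there exist supermajority links $X_1 \supervotes A_1$ and $X_2 \supervotes A_2$ with targets of equal checkpoint height but different target blocks. Pigeonhole on their signer sets produces $\ge N/3$ stake whose attestations meet the hypotheses of (S1).

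\textbf{Strictly-lower case.} If $h(A_1) < h(A_2)$, I expand a justification chain $\gblock = D_0 \supervotes D_1 \supervotes \cdots \supervotes D_m = A_2$ (with strictly increasing heights, as is standard) and pick the smallest $i$ with $h(D_i) > h(A_1)$. The straddling link $D_{i-1} \supervotes D_i$ satisfies $h(D_{i-1}) \le h(A_1) < h(D_i)$, and I compare it to $A_1 \supervotes B_1$. Two clean outcomes yield a violation directly: if $h(D_{i-1}) < h(A_1)$ and $h(D_i) > h(B_1)$, then $D_{i-1} \supervotes D_i$ strictly surrounds $A_1 \supervotes B_1$, and pigeonhole on their signers gives (S2); if $h(D_i) = h(B_1) = h(A_1) + 1$ with $D_i \ne B_1$, pigeonhole on the same pair of links gives (S1). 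A parallel (S1) violation appears when $h(D_{i-1}) = h(A_1)$ with $D_{i-1} \ne A_1$, by comparing $D_{i-2} \supervotes D_{i-1}$ with the link justifying $A_1$.

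\textbf{Main obstacle.} The genuine work lies in the degenerate ``aligned'' subcases $D_{i-1} = A_1$ and $D_i = B_1$, where source- or target-heights coincide block-for-block; here no local pairwise comparison violates a slashing condition outright, since (S2) requires strictly surrounded intervals. My plan is to handle these by a monovariant argument along $A_2$'s justification chain: each alignment forces a restart with a taller ``base'' checkpoint (e.g., $D_i$ replacing $A_1$, noting that $D_i$ is itself justified), but with a strictly shorter remaining suffix of the chain. Since $m$ is finite and $h(A_2)$ is fixed, the iteration must terminate in one of the clean (S1)/(S2) cases above. The crux will be verifying that this reduction always strictly advances --- equivalently, that the justification chain cannot reach $A_2$ without first exhibiting either a strictly surrounding link or an equal-height crossing to a different block.
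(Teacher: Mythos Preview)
First, note that this paper does not prove this theorem; it is quoted from~\cite{buterin2017casper} as background in Section~\ref{sec:ingredients-ffg}. The closest thing the paper actually proves is the $\mainprotocol$ analogue, Lemma~\ref{lem:finalized-no-split}, whose minimal-counterexample argument is a close cousin of your straddling-link approach.

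Your plan is the standard Casper FFG argument and is essentially correct, but the ``main obstacle'' you flag is a phantom, and your proposed monovariant workaround would not rescue it. The observation you are missing is that in Casper FFG a vote $s \to t$ is only counted when $s$ is an ancestor of $t$ in the block tree; votes violating this are invalid and never contribute to supermajority links. Consequently every $D_i$ in the justification chain of $A_2$ is an ancestor of $A_2$. Since $A_1$ conflicts with $A_2$, it is not an ancestor of $A_2$, so $D_{i-1}=A_1$ cannot occur; and since the finalizing link $A_1 \supervotes B_1$ forces $A_1$ to be an ancestor of $B_1$, having $D_i=B_1$ would make $A_1$ an ancestor of $A_2$ as well, again impossible. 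With both aligned cases vacuous, your ``strictly-lower'' paragraph already finishes the proof. The paper's own proof of Lemma~\ref{lem:finalized-no-split} leans on the same ancestry fact, though only implicitly (it is what makes ``$(B_J,j)$ is the earliest such violation $\Rightarrow l<f$'' go through).

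The ancestry constraint is essential, not cosmetic: without it the theorem is false. Take $h(A_1)=1$, $h(B_1)=2$, $h(A_2)=3$, $h(B_2)=4$ with $A_1,A_2$ on different branches, and let every validator sign the four edges with $(h(s),h(t)) \in \{(0,1),(1,2),(1,3),(3,4)\}$; the target heights are all distinct (no (S1)) and no pair is strictly nested (no (S2)), yet $A_1$ and $A_2$ are conflicting finalized checkpoints. For the same reason your monovariant cannot work: replacing the finalized $A_1$ by a merely-justified $D_i$ discards the adjacent link $A_1 \supervotes B_1$ that the surrounding argument needs, and a bare justified checkpoint gives you nothing to surround.
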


\begin{theorem}[Plausible Liveness]
It is always possible for new checkpoints to become finalized, provided that new blocks can be created by the underlying blockchain.
\end{theorem}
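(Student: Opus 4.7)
The plan is to exhibit, from any view $G$, a short sequence of attestations that actually finalizes a fresh checkpoint, and then verify that none of these attestations triggers (S1) or (S2) against any honest validator's own history. The key design choice is to push everything into a height range strictly above all previous activity in $G$.

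Let $M \in J(G)$ be a justified checkpoint of maximum checkpoint height (it exists since $\gblock \in J(G)$), and set
\[
h^{\ast} \;:=\; 1 + \max\{\, h(t) : \text{some attestation in } G \text{ has target block } t \,\}.
\]
By the hypothesis that new blocks can still be produced, I would extend $\chain(M)$ to obtain two fresh checkpoint blocks $B_1, B_2$ descending from $M$ with $h(B_1) = h^{\ast}$ and $h(B_2) = h^{\ast} + 1$. First, the at least $2N/3$ honest validators each cast an attestation $M \to B_1$; since $M$ is already justified and these carry total weight at least $2/3$ of all stake, they form a supermajority link $M \supervotes B_1$, so $B_1$ enters $J$. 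Next, the same validators cast $B_1 \to B_2$; because $h(B_2) = h(B_1) + 1$, the link $B_1 \supervotes B_2$ simultaneously justifies $B_2$ and \emph{finalizes} $B_1$, producing a new finalized checkpoint.

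The heart of the proof is then the slashing check. The critical invariant I would make explicit is that every honest validator's source in any prior attestation was itself justified in her view at that time, and since justification is preserved under view enlargement, that source has checkpoint height at most $h(M)$. Condition (S1) cannot fire, because no prior attestation has target height $h^{\ast}$ or $h^{\ast}+1$ by construction. For (S2), each new attestation gives two subcases: if it plays the outer bracket, we would need a prior target above $h^{\ast}$ (or $h^{\ast}+1$), contradicting the definition of $h^{\ast}$; if it plays the inner bracket, we would need a prior source above $h(M)$ (or above $h^{\ast}$), contradicting the monotonicity invariant. The main obstacle, such as it is, is not any single step but keeping the bookkeeping of these four subcases straight; once the two invariants \emph{past honest source height $\leq h(M)$} and \emph{past target height $< h^{\ast}$} are established, every case collapses immediately.
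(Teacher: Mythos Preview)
The paper does not actually supply its own proof of this statement: it is quoted in Section~3.1 as one of the two main results of Casper FFG from \cite{buterin2017casper}, stated as background without argument. Your proof is correct and is essentially the standard argument from that source: take the justified checkpoint $M$ of maximal height and a fresh target height $h^{\ast}$ strictly above every previously cast target, extend the chain through $M$ to $B_1, B_2$, and let the honest supermajority vote $M \to B_1$ then $B_1 \to B_2$; the two invariants you isolate (every past honest source has height $\le h(M)$; every past target has height $< h^{\ast}$) are exactly what collapses the (S1)/(S2) case analysis.

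One minor bookkeeping slip worth fixing: your outer/inner labels in the (S2) discussion are swapped. If the \emph{new} attestation is the outer bracket (i.e., it surrounds a prior vote), the contradiction comes from the prior \emph{source} exceeding $h(M)$; if the new attestation is the inner bracket, the contradiction is the prior \emph{target} exceeding $h^{\ast}$. Both contradictions are present in your write-up, just attached to the opposite case names, so the argument itself is sound. For contrast, the paper's own plausible-liveness result is Theorem~6.1 for the full $\mathsf{Gasper}$ protocol; that proof is operational rather than a direct slashing-condition check, arguing that under synchrony HLMD GHOST steers honest validators onto a common chain whose epoch boundary gets justified and then $1$-finalized over two consecutive epochs.
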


\begin{remark}
Palmskog et al. ~\cite{palmskog2018verification} provided a mechanical proof assistant for Casper FFG's accountable safety and plausible liveness in the Coq Proof Assistant. The authors modified the blockchain model in Coq, Toychain, to use for the Casper FFG with Ethereum's beacon chain specs in order to see how Casper FFG's proofs will work in practice.
\end{remark}


\subsection{LMD GHOST Fork-Choice rule}
\label{sec:ingredients-lmd}

The Greediest Heaviest Observed SubTree rule (GHOST) is a fork-choice rule introduced by Sompolinsky and Zohar \cite{sompolinsky2015secure}. Intuitively, GHOST is a greedy algorithm that grows the blockchain on sub-branches with the ``most activity.'' Zamfir \cite{lmdghost}, in looking for a ``correct-by-construction'' consensus protocol, introduces a natural adapation of GHOST that happens to be well-suited for our setup\footnote{This part of the field can create phantasmal confusion: the name of  \cite{lmdghost} is ``Casper the Friendly GHOST,'' which introduces Casper CBC (also see \cite{caspercbc}), a completely separate protocol from Casper FFG. Our paper is in some sense recombining these two ideas, but the ``Casper'' in \cite{lmdghost} is very different from Casper FFG.}. We call this variant \emph{Latest Message Driven Greediest Heaviest Observed SubTree (LMD GHOST)}, which we can define after having a notion of \emph{weight}:

\begin{define} \label{def:weight}
Given a view $G$, Let $M$ be the set of latest attestations, one per validator. The \emph{weight} $w(G, B, M)$ is defined to be the sum of the stake of the validators $i$ whose last attestation in $M$ is to $B$ or descendants of $B$.
\end{define}



\begin{algorithm}
    \caption{LMD GHOST Fork Choice Rule.}\label{alg:lmd-ghost}
    \begin{algorithmic}[1]
        \Procedure{LMD-GHOST}{$G$}
        \State $B \gets \gblock$
        \State $M \gets$~the most recent attestations of the validators (one per validator)
        \While{$B$ is not a leaf block in $G$}
            \State $B \gets \displaystyle\argmax_{B' \text{ child of } B} w(G, B', M)$ 
            \State (ties are broken by hash of the block header)
        \EndWhile
        \State \Return{B}
    \EndProcedure
    \end{algorithmic}
\end{algorithm}

\begin{figure}[H]
\begin{center}
\begin{tikzpicture}
\tikzset {
    basic/.style = {rectangle, minimum width=2.2em, minimum height = 1.2em},
    root/.style = {basic, thin, align=center},
    block/.style = {draw, basic, thin, align=center},
    attest/.style = {draw, circle, minimum size=0.15cm, align=center}
}
\node[draw, basic,fill = blue!50](R) {{$\gblock$}};
\node[block, below of = R, fill = blue!50](A) {{$8$}};
\node[block, below of = A, fill = blue!50](B){$8$};
\node[block, below left of = B, fill = blue!50](C){$5$};
\node[block, below right of = B](D){$3$};
\node[block, below right of = C](F){$1$};
\node[block, left of = F](E){$1$};
\node[block, left of = E,fill = blue!50](G){$3$};
\node[block, below of = E](H){$1$};
\node[block, below right of = D](I){$3$};
\node[block, below left of = I](J){$1$};
\node[block, below right of = I](K){$2$};
\node[block, below of = K](L){$2$};
\node[block, below of = L](M){$1$};
\node[attest, right of = L](a7){};
\node[attest, right of = M](a8){};
\node[attest, below of = J](a6){};
\node[attest, left of = G](a1){};
\node[attest, below  left of = G](a2){};
\node[attest, below of = G](a3){};
\node[attest, below of = H](a4){};
\node[attest, below of = F](a5){};
\draw[->] (A) -- (R);
\draw[->] (B) -- (A);
\draw[->] (C) -- (B);
\draw[->] (D) -- (B);
\draw[->] (E) -- (C);
\draw[->] (F) -- (C);
\draw[->] (G) -- (C);
\draw[->] (H) -- (E);
\draw[->] (I) -- (D);
\draw[->] (J) -- (I);
\draw[->] (K) -- (I);
\draw[->] (L) -- (K);
\draw[->] (M) -- (L);
\draw[->] (a7) -- (L);
\draw[->] (a8) -- (M);
\draw[->] (a6) -- (J);
\draw[->] (a1) -- (G);
\draw[->] (a2) -- (G);
\draw[->] (a3) -- (G);
\draw[->] (a4) -- (H);
\draw[->] (a5) -- (F);

\end{tikzpicture}
\end{center}
\caption{\label{fig:lmd-ghost} An example of the LMD-GHOST fork-choice rule. The number in each block $B$ is the weight (by stake), with all attestations (circles) having weight $1$ in our example. A validator using this view will conclude the blue chain to be the canonical chain, and output the latest blue block on the left, with weight $3$, to be the head of the chain.}
\end{figure}
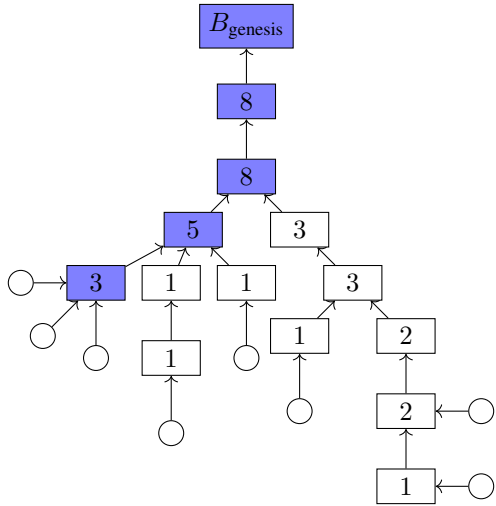

The idea of LMD GHOST is that at any fork, we use the weights of the subtrees created by the fork as a heuristic and assume the subtree with the heaviest weight is the ``right'' one, as evident from the name of the algorithm. We will always end up at a leaf block, which defines a canonical chain. See Figure~\ref{fig:lmd-ghost} for a graphical visualization. In \cite{lmdghost}, LMD GHOST was tied to a particular protocol. Our presentation of it in this paper treats the specifics of attestations as details to be filled in by the protocol, like how Casper does not specify details on how blocks are constructed. We merely need that the concepts of \emph{blocks} and \emph{attestations} exist, and then LMD GHOST defines an algorithm to find a block given a view. In Section~\ref{sec:stage-2} we will modify the rule to satisfy our other purposes.




\section{Main Protocol: Gasper}
\label{sec:stage-2}

We now define our main protocol $\mainprotocol$, which is a combination of the GHOST and Casper FFG ideas. The main concepts are:
\begin{itemize}
    \item \emph{(Epoch boundary) pairs} of a chain: given a chain, certain blocks are picked out, ideally one per epoch, to play the role of Casper's  \emph{checkpoints}. However, a block may appear more than once as a checkpoint on the same chain (this is a nuance not found in Casper; we expound on this in Section~\ref{sec:ebb}), so we use ordered pairs $(B,j)$, where $B$ is a block and $j$ is an epoch, to disambiguate. These will be called \emph{epoch boundary pairs}, or \emph{pairs} for short.
    
    \item \emph{Committees}: the validators are partitioned into \emph{committees} in each epoch, with one committee per slot. In each slot, one validator from the designated committee proposes a block. Then, all the members of that committee will attest to what they see as the head of the chain (which is hopefully the block just proposed) with the fork-choice rule HLMD GHOST (a slight variation of LMD GHOST).

    \item \emph{Justification and Finalization}: these concepts are virtually identical to that of Casper FFG, except we justify and finalize pairs instead of justifying and finalizing checkpoint blocks, i.e., given a view $G$, $J(G)$ and $F(G)$ are sets of pairs. 
\end{itemize}

\subsection{Epoch Boundary Blocks and Pairs}
\label{sec:ebb}

Recall any particular block $B$ uniquely determines a chain, $\chain(B)$.  For a block $B$ and an epoch $j$, define $\ebb(B, j)$, the $j$-th \emph{epoch boundary block} of $B$, to be the block with the highest slot less than or equal to $jC$ in $\chain(B)$. Let the latest such block be $\lastepoch(B)$, or the \emph{last epoch boundary block} (of $B$). We make a few observations:
\begin{itemize}
\item For every block $B$, $\ebb(B, 0) = \gblock.$ 
\item More generally, if $\slot(B) = jC$ for some epoch $j$, $B$ will be an epoch boundary block in every chain that includes it.
\item However, without such assumptions, a block could be an epoch boundary block in some chains but not others.
\end{itemize}

To disambiguate situations like these, we add precision by introducing \emph{epoch boundary pairs} (or \emph{pairs} for short) $(B,j)$, where $B$ is a block and $j$ is an epoch. Our main concepts of justification and finalization will be done on these pairs. Given a pair $P = (B,j)$, we say $P$ has  \emph{attestation epoch} $j$, using the notation $\attestepoch(P) = j$, which is \textbf{not} necessarily the same as $\epoch(B)$.


\begin{figure}[H]
\begin{center}
\begin{tikzpicture}[edge from parent/.style={draw,latex-}, level distance=25mm,sibling distance=10mm, grow=east]


\node [draw] at (0, 0) (63) {63};
\node [draw] at (2, 0) (64) {64};
\node [draw, dashed] at (2, -2) (63') {``63''};

\node [draw] at (4, 0) (65) {65};
\node [draw] at (6, -2) (66) {66};

\draw [->] (65) edge (64) (64) edge (63);
\draw [->] (66) edge (63);

\draw [->, dashed] (66) edge (63') (63') edge (63);





\draw (1.4,0.8) -- (1.4,-2.8);

\node[text width=2cm, font=\fontsize{8}{0}\selectfont]  at (1,1) {Epoch 0};
\node[text width=2cm, font=\fontsize{8}{0}\selectfont]  at (5,1) {Epoch 1};


\node[text width=3cm, font=\fontsize{8}{0}\selectfont]  at (8,0) {$\lastepoch(65) = 64$};
\node[text width=3cm, font=\fontsize{8}{0}\selectfont]  at (8,-2) {$\lastepoch(66) = 63$};

\node[text width=3cm, font=\fontsize{8}{0}\selectfont]  at (2.9,-2.8) {$\ebb(66,1) = (63, 1)$};
\end{tikzpicture}
\end{center}

\caption{\label{fig:ebb-example} Example of epoch boundary blocks and pairs. Blocks are labeled with their slot numbers. ``$63$'' is not an actual block but illustrates the perspective of $66$ needing an epoch boundary block at slot $64$ and failing to find one.}
\end{figure}
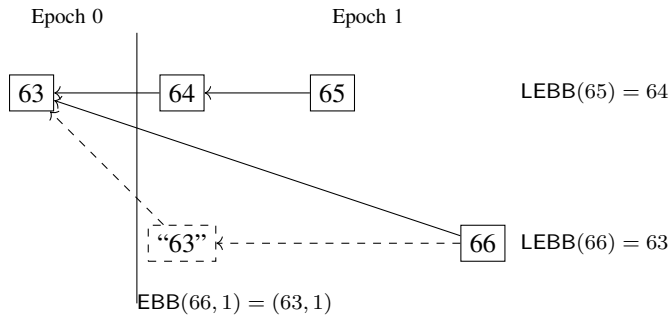

\begin{ex}
Suppose blocks (labeled by slots) form a chain $63 \leftarrow 64 \leftarrow 65$ and there is a fork $63 \leftarrow 66$. Then $\ebb(65, 1) = 64$ since $64$ is in $\chain(65)$. When we try to find $\ebb(66,1)$, we look for block $64$ because we are in epoch $1$ and realize it is not in $\chain(66)$, so we have to look backwards and ``pull up'' block $63$ from the previous epoch to serve as the epoch boundary block in this chain. This is visualized by the dashed components in Figure~\ref{fig:ebb-example}. Observe even though $\attestepoch(63,1)$ and $\epoch(63) = 0$, $\epoch()$ is a local property that only depends on the block's slot, and epoch boundary concepts like $\attestepoch()$ depend on the context of the chain.
\end{ex}

\begin{remark}
We further address why we use pairs instead of checkpoint blocks. Casper FFG is a ``finality gadget,'' meaning it is designed to place a layer of finality on top of a blockchain which has probabilistic liveness, which gives a steady new supply of checkpoint blocks. Probabilistic liveness assumes good synchrony conditions. For the safety part of the design, we would want to make fewer assumptions and take into account worst-case scenarios where we may be put into a state where we have not seen a new block for a while. 

As an example, block $B$ may be the head of the chain from epoch $1$, but be currently at epoch $3$ with no new block built on top of $B$. In the original Casper FFG, we expect probabilistic liveness so we would have a different checkpoint at every epoch. However, when we have no synchrony assumptions, in the analysis we would need to differentiate the idea of ``checkpoint in epoch 2'' and ``checkpoint in epoch 3'' even though the best candidate for both is $B$, which is itself still only in epoch $1$! This naturally induces the concepts $(B,2)$ and $(B,3)$, which represents the best approximation to a checkpoint in a later epoch if the checkpoint block that ``should have been there'' is missing.

Also, Casper FFG makes no assumptions about time in the underlying blockchain -- only block heights are important, with no notion of \textit{slots} or \textit{epochs}. The use of block height is a natural choice in proof-of-work blockchains due to the Poisson process of mining new blocks serving somewhat as a system clock, with blocks coming at fairly irregular times. $\mainprotocol$, as a proof-of-stake protocol, can have blocks coming in at controlled regular intervals as part of the protocol (instead of depending on a random process), so instead the notion of time is explicitly desired in the protocol. To capture this notion of time, each object in our blockchain then naturally requires the knowledge of both the data (captured in a block) and the time (captured in the epoch count), which also leads naturally to the idea of pairs.


\end{remark}


\subsection{Committees}

The point of committees is to split up the responsibilities among the validators. To start, assume the validators have access to a sequence of random length-$N$ permutations $\rho_0, \rho_1, \ldots $, as functions $\{1, 2, \ldots, N\} \rightarrow \{1, 2, \ldots, N\}$. In the scope of this article, we assume we get these random permutations from a random oracle\footnote{Achieving non-exploitable randomness on the blockchain is itself an interesting problem, stimulating research such as verifiable delay functions; however, in this article we will take this randomness for granted.}. 

Recall time is split into epochs, of $C$ slots each. Each permutation $\rho_j$ will be used only during epoch $j$. Its role is to pseudorandomly select validators into $C$ \emph{committees}, each of whom has responsibilities for one slot of the epoch. To be precise:
\begin{itemize}

\item During epoch $j$, we would like to split the set of validators $V$ into $C$ equal-size committees $S_0, S_1, \ldots, S_{C-1}$ (we assume $C|N$ for easier notation; dealing with ``roughly equal" size committees does not change the essence of our approach).
\item Therefore, for each $k \in \{0,1,  \ldots, C-1\}$, we define $S_k$ to be the set of the $N/C$ validators of the form $\validatorset_{\rho_j(s)}$, where $s \equiv k \pmod{C}$. Note, for an epoch $j$, the sets $S_0, S_1, \ldots, S_{C-1}$ partition the entire set of validators $\{V_1, \ldots, V_N  \}$ across all the slots of epoch $j$, as desired.
\end{itemize}

To summarize, in each epoch $j$, $\rho_j$ allows us to shuffle the validators into $C$ committees. Our work does not assume more than this intuition.

\subsection{Blocks and Attestations}

Now, in each slot, the protocol dictates $2$ types of ``committee work'' for the committee assigned to their respective slot: one person in the committee needs to \emph{propose} a new block, and everyone in the committee needs to \emph{attest} to their head of the chain. Both proposing blocks and attesting mean immediately adding a corresponding message (a block and an attestation, respectively) to the validator's own view and then broadcasting the message to the network. Recall the messages proposing (non-genesis) blocks and publishing attestations have digital signatures. 

Both proposing and attesting requires the committee member to run the same fork-choice rule $\mlmd()$ on the validator's own view, which is a variation of the LMD GHOST fork-choice rule. $\mlmd()$'s definition requires concepts that we have not introduced yet, so we delay its definition to Section~\ref{sec:hlmd-ghost}. For now, the only thing we need to know is, like LMD GHOST, $\mlmd()$ takes a view $G$ and returns a leaf block $B$ as the head of the chain, making $\chain(B)$ the canonical chain of $G$.

We now present the responsibilities of the protocol during slot $i = jC+k$, where $k \in \{0, 1, \ldots, C-1\}$. All mentions of time are computed from the point-of-view of the validator's local clock, which we assume to be synced within some delta.

\begin{define}
At the beginning of slot $i = jC+k$, designate validator $V = V_{\rho_j(k)}$, i.e., the first member of the committee $S_k$ of epoch $j$, as the \emph{proposer} for that slot. The proposer computes the canonical head of the chain in his/her view, i.e., $\mlmd(\view(V, i)) = B'$, then \emph{proposes a block} $B$, which is a message containing:
    \begin{enumerate}
    \item $\slot(B) = i$, the slot number.
    \item $P(B) = B'$, a pointer to the parent block; in other words, we always build a block on top of the head of the chain.
    \item $\newattests(B)$, a set of pointers to all the \emph{attestations} (to be defined next) $V$ has accepted, but have not been included in any $\newattests(B')$ for a block $B'$ who is an ancestor of $B$. 
    \item Some implementation-specific data (for example, ``Yunice paid 4.2 ETH to Brad'' if we are tracking coins), the semantics of which is irrelevant for us.
    \end{enumerate}
For dependencies, $B$ depends on $P(B)$ and all attestations in $\newattests(B)$. (so for example, if we see a block on the network but do not see its parent, we ignore the block until we see its parent). 
\end{define}

In typical (honest) behavior, we can assume each slot number that has a block associated with it also has at most one such block in $\view(\network)$. By default, $\gblock$ is the unique block with slot $0$. A dishonest validator can theoretically create a block with a duplicate slot number as an existing block, but we can suppose the digital signatures and the pseudorandom generator for block proposal selection are set up so such behavior would be verifiably caught. For example, every time a digitally signed block enters the view of a validator, the validator can immediately check if the block proposer is the unique person allowed to propose a block for that respective slot. A validator can also prove the same person proposed two blocks in the same slot by just pointing to both blocks. 

\begin{define} 
At time $(i+1/2)$, the middle of slot $i = jC+k$, each validator $V$ in committee $S_k$ computes $B' = \mlmd\left(\view(V, i+1/2)\right)$, and publishes an \emph{attestation} $\alpha$, which is a message containing:
    \begin{enumerate}
        \item $\slot(\alpha) = jC+k$, the slot in which the validator is making the attestation. We will also use $\epoch(\alpha)$ as shorthand for $\epoch\left(\slot(\alpha)\right)$.
        \item $\block(\alpha) = B'$. We say $\alpha$ \emph{attests to $\block(\alpha)$}. We will have $\slot\left(\block(\alpha)\right) \leq \slot(\alpha)$, and  ``usually'' get equality by quickly attesting to the block who was just proposed in the slot. 
        \item A \emph{checkpoint edge} $\lastjustpair(\alpha) \votes  \lastepochpair(\alpha)$. Here, $\lastjustpair(\alpha)$ and $\lastepochpair(\alpha)$ are epoch boundary pairs in $\view(V, i+1/2)$. We define them properly in Section~\ref{subsec:justification}. 
    \end{enumerate}
    For dependencies, $\alpha$ depends on $\block(\alpha)$. So we ignore an attestation until the block it is attesting to is accepted into our view (this is one of the bigger differences between theory and implementation; we discuss this more in in Section~\ref{sec:practice-vs-theory}).
\end{define}

Intuitively, $\alpha$ is doing two things at once: it is simultaneously a ``GHOST vote'' for its block and also a ``Casper FFG vote'' for the transition between the two epoch boundary pairs (akin to Casper's checkpoint blocks).



\subsection{Justification}
\label{subsec:justification}

\begin{define}
\label{def:dependency-view}
Given a block $B$, we define\footnote{We have tyrannically overworked the notation $\view()$ by this point, but there should be no ambiguity when we know the type of its parameter.} $\view(B)$, the \emph{view of $B$}, to be the view consisting of $B$ and all its ancestors in the dependency graph. We define $\ffgview(B)$, the \emph{FFG view of $B$}, to be $\view(\lastepoch(B))$.
\end{define}

The definition $\view(B)$ is ``agnostic of the viewer'' in which any view that accepted $B$ can compute an identical $\view(B)$, so we do not need to supply a validator (or $\network$) into the argument. Intuitively, $\view(B)$ ``focuses'' the view to $\chain(B)$ and $\ffgview(B)$ looks at a ``frozen'' snapshot of $\view(B)$ at the last checkpoint. Casper FFG operates only on epoch boundary pairs, so the FFG view of a block $B$ extracts exactly the information in $\chain(B)$ that is relevant to Casper FFG.

We now define the main concepts of justification. To start, recall an attestation $\alpha$ for $\mainprotocol$ contains a \emph{checkpoint edge} $\lastjustpair(\alpha) \votes \lastepochpair(\alpha)$, acting as a ``FFG vote'' between two epoch boundary pairs. After a couple of new definitions, we can finally explicitly define these notions.


\begin{define} \label{def:supermajority}
We say there is a \emph{supermajority link} from pair $(A, j')$ to pair $(B, j)$ if the attestations with checkpoint edge $(A,j') \votes (B, j)$ have total weight more than $\frac{2}{3}$ of the total validator stake. In this case, we write $(A, j') \supervotes (B,j)$. 
\end{define}

\begin{define} \label{def:justify}
Given a view $G$, we define the set $J(G)$ of \emph{justified} pairs as follows: 
\begin{itemize}
    \item $(\gblock, 0) \in J(G)$;
    \item if $(A,j') \in J(G)$ and $(A,j') \supervotes (B,j)$, then $(B,j) \in J(G)$ as well.
\end{itemize}
If $(B,j) \in J(G)$, we say \emph{$B$ is justified in $G$ during epoch $j$}.
\end{define}

\begin{define} \label{def:checkpoint-edge}
Given an attestation $\alpha$, let $B = \lastepoch(\block(\alpha))$. We define:
\begin{enumerate}
\item $\lastjustpair(\alpha)$, the \emph{last justified pair of $\alpha$}: the highest attestation epoch (or last) justified pair in $\ffgview(\block(\alpha)) = \view(B)$. 
\item $\lastepochpair(\alpha)$, the \emph{last epoch boundary pair of $\alpha$}: $(B, \epoch(\slot(\alpha)))$.
\end{enumerate}
\end{define}



While we define $J(G)$ for any view $G$, we are typically only interested in $J(G)$ for $G = \ffgview(B)$. We can think of each chain as having its own ``state'' of justified blocks/pairs that is updated only at the epoch boundaries.


\begin{figure}[H]
\begin{center}
\begin{tikzpicture}[edge from parent/.style={draw,-latex}, level distance=10mm,sibling distance=10mm, grow=right, scale=0.90]
\node[minimum height=25, fill=purple!50, draw]  at (-0.2,0) {0}
child {node [draw] {1}
		child {node {\dots}
			child {node [draw] {63}
				child {node [draw, fill=purple!50, minimum height=25] {64}}}}};
				


\draw[] (4.1,0) -- ++(1.9,0);
\draw[] (10.8,0) -- ++(0.8,0);

\node [fill=purple!50, minimum height=25, draw] at (6.4,0){64}
	child [grow=south east] {node [draw]{130}}
	child [grow=right] {node [draw]{129}
		child {node [draw]{131}
			child {node {\dots}
				child {node [draw] {180}}}}};
				
\node [minimum height=25, draw] at (12,0){180}
	child {node [draw]{193}
    	child [grow=north east] {node [draw, circle]{$\alpha$}}};

\draw[dashed] (10.6,0.7) -- ++(2.7,0);

\draw [->,double] (0,-0.3) arc (250:290:5);
\draw [->,double] (0,0.3) arc (110:70:8.8);
\draw [->] (6.8,0.3) arc (110:70:7.1);

\draw (3.3,1.5) -- (3.3,-1);
\draw (5.9,1.5) -- (5.9,-1);
\draw (11.4,1.5) -- (11.4,-1);

\node[text width=2cm, font=\fontsize{8}{0}\selectfont]  at (2.,1.5) {Epoch 0};
\node[text width=2cm, font=\fontsize{8}{0}\selectfont]  at (5.1,1.5) {Epoch 1};
\node[text width=2cm, font=\fontsize{8}{0}\selectfont]  at (9.2,1.5) {Epoch 2};
\node[text width=2cm, font=\fontsize{8}{0}\selectfont]  at (13.5,1.5) {Epoch 3};


\end{tikzpicture}
\end{center}
    \caption{\label{fig:attestation-ebbs} A validator's view $G$ as she writes an attestation in epoch $3$. During epoch $1$, latency issues make her not see any blocks, so block $64$ is both $\ebb(193,1)$ and $\ebb(193,2)$. She ends up writing an $\alpha$ with a ``GHOST vote'' for $\block(\alpha) = 193$ and a ``FFG vote'' checkpoint edge (single arc edge)  $(64, 2) \votes (180, 3)$ for $\alpha$. We use slot numbers for blocks, so block $0$ is just $\gblock$. Blocks in red are justified (in $G$). Double edges corresponding to supermajority links.}
\end{figure}
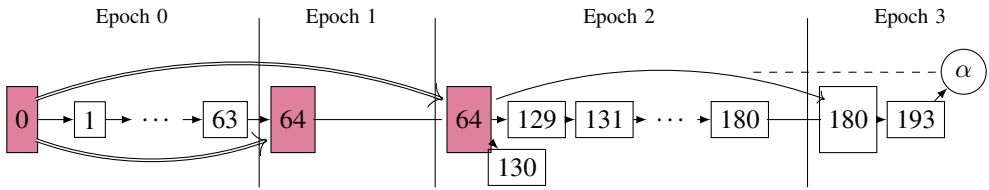

\begin{ex}
A validator runs $\mlmd(G)$ on her view $G$ to obtain her head of the chain, which is block $193$. She is then supposed to attest to $193$ with an attestation $\alpha$.

On $\chain(193)$ portrayed in Figure~\ref{fig:attestation-ebbs}, $\lastepochpair(\alpha) = (180,3)$, even though $\epoch(180) = 2$, because our attestation has epoch $3$, and we were looking for $3*64=192$ but did not see it (one can imagine, like in our figure, that we ``pull up'' block $180$ to show that it is $\ebb(193, 3)$). In $\ffgview(193) = \view(180)$, the last justified (by epoch number, not slot) pair is $(64,2)$, so $\lastjustpair(\alpha) = (64,2)$. 

Note that this relies on $\view(180)$ to include attestations worth at least $(2N/3)$-stake pointing to $(64,2)$. It would be possible to have a case where this did not happen by $180$, but the chain does see $(2N/3)$ worth of attestations by $193$. In this case, $(64,2)$ is in $J(G)$, but $(64,2)$ is not in $J(\ffgview(193))$, so the resulting checkpoint edge would be, e.g., $(64,1) \votes (180,3)$ instead, assuming the attestations for $(64,1)$ are in $\view(180)$. The checkpoint edge could even end up being $(0,0) \votes (180,3)$, if $180$ did not see those attestations; note that this is very possible if the attestations for $(64,1)$ were only included in the blocks that forked off of $\chain(193)$ (such as block $130$), as the forking may be an indication of network problems causing those attestations to be temporarily unavailable to validators on $\chain(193)$.
\end{ex}

Everything here is analogous to Casper FFG. Inside the chain of $\block(\alpha)$ is a sub-chain created by the epoch boundary blocks of that chain, starting from $\gblock$ and ending at $B = \lastepoch(\alpha)$. We want to focus on this subchain of blocks (represented by pairs to allow for boundary cases) and justify epoch boundary pairs with many attestations; this sub-chain is exactly what's captured by $\ffgview(\block(\alpha)))$. $\alpha$ is a vote to transition from some last justified pair $(B', j')$ to the new pair $(B, j)$, visualized as the checkpoint edge $(A, j') \votes (B,j)$. If $(2N/3)$ stake worth of such votes happen, we create a supermajority link $(A, j') \supervotes (B,j)$ and justify $(B,j)$. 

\subsection{Finalization}
Given our notion of \emph{justification} and a new fork-choice rule, we are now ready to define the notion of \emph{finalization}. Finalization is a stronger notion of justification in the sense that the moment \textbf{any} view considers a block $B$ as finalized for some $j$, \textbf{no} view will finalize a block conflicting with $B$ unless the blockchain is $(1/3)$-slashable. 


\begin{define}
\label{def:finalization}

For a view $G$, we say $(B_0, j)$ is \emph{finalized}  (specifically, \emph{$k$-finalized}) in $G$ if $(B_0, j) = (\gblock, 0)$ or if there is an integer $k \geq 1$ and blocks $B_1, \ldots, B_k \in \view(G)$ such that the following holds:
\begin{itemize}
\item $(B_0, j), (B_1, j+1), \ldots, (B_k, j+k)$ are adjacent epoch boundary pairs in $\chain(B_k)$;
\item $(B_0, j), (B_1, j+1), \ldots, (B_{k-1}, j+k-1)$ are all in $J(G)$;
\item $(B_0, j) \supervotes (B_k, j+k)$.
\end{itemize}

We define $F(G)$ to be the set of finalized pairs in the view $G$; we also say that a block $B$ is \emph{finalized} if $(B,j) \in F(G)$ for some epoch $j$.
\end{define}

We expect $1$-finalized blocks for the vast majority of time. Specifically, this just means we have blocks $B_0$ and $B_1$ such that $(B_0, j)$ is justified in $G$ and $(B_0, j) \supervotes (B_1, j+1)$, or ``a justified epoch boundary block justifies the next adjacent epoch boundary block.'' This is essentially the original definition of \emph{finalized} in \cite{buterin2017casper}. With $(1/2)$-synchrony and without implementation details that interfere with accepting messages, we should only get $1$-finalization. We include the $k > 1$ cases to account for situations where network latency and attestation inclusion delays (see Section~\ref{sec:attestation_inclusion_delay}) are relevant. 


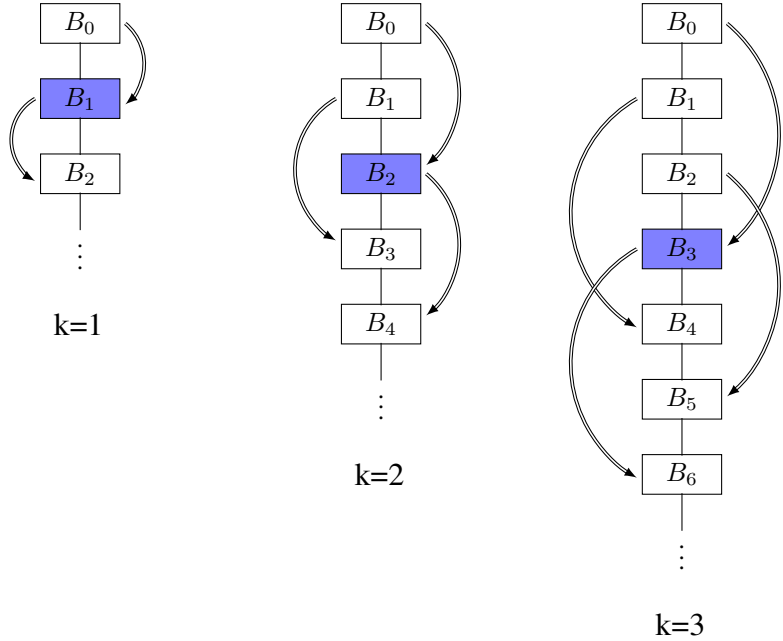
\begin{figure}
\begin{center}

\tikzset {
    basic/.style = {rectangle, minimum width=3em, minimum height = 1.3em},
    root/.style = {basic, thin, align=center},
    block/.style = {draw, basic, thin, align=center}
}
\begin{tikzpicture} [
    level 1/.style={sibling distance = 40mm},
    edge from parent/.style={}, >=latex]

\node [](n) {}
	child {node [block](c1){$B_{0}$}}		
    child {node [block](c2){$B_{0}$}}
    child {node [block](c3){$B_{0}$}};

\begin{scope}[node/.style={block}]
\node [block, below of = c1, , fill = blue!50] (c11) {$B_{1}$};
\node [block, below of = c11] (c12) {$B_{2}$};
\node [below of = c12] (c13) {$\vdots$};
\end{scope}

\begin{scope}[node/.style={block}]
\node [block, below of = c2] (c21) {$B_{1}$};
\node [block, below of = c21, fill = blue!50] (c22) {$B_{2}$};
\node [block, below of = c22] (c23) {$B_{3}$};
\node [block, below of = c23] (c24) {$B_{4}$};
\node [below of = c24] (c25) {$\vdots$};
\end{scope}

\begin{scope}[node/.style={block}]
\node [block, below of = c3] (c31) {$B_{1}$};
\node [block, below of = c31] (c32) {$B_{2}$};
\node [block, below of = c32, fill = blue!50] (c33) {$B_{3}$};
\node [block, below of = c33] (c34) {$B_{4}$};
\node [block, below of = c34] (c35) {$B_{5}$};
\node [block, below of = c35] (c36) {$B_{6}$};
\node [below of = c36] (c37) {$\vdots$};
\end{scope}

\draw [->,double] (-3.4,-1.5) arc (50:-50:20pt); 
\draw [->,double] (-4.6,-2.5) arc (120:240:18 pt); 

\draw [->,double] (0.6,-1.5) arc (50:-50:35pt);  
\draw [->,double] (-0.6,-2.5) arc (120:240:31pt);  
\draw [->,double] (0.6,-3.5) arc (50:-50:35pt);  

\draw [->,double] (4.6,-1.5) arc (50:-50:55pt);  
\draw [->,double] (3.4,-2.5) arc (120:240:50pt);  
\draw [->,double] (4.6,-3.5) arc (50:-50:55pt);  
\draw [->,double] (3.4,-4.5) arc (120:240:50pt);  

\draw [-] (-4,-1.75) -- (-4,-2.25);
\draw [-] (-4,-2.75) -- (-4,-3.25);
\draw [-] (-4,-3.75) -- (-4,-4.25);

\draw [-] (0,-1.75) -- (0,-2.25);
\draw [-] (0,-2.75) -- (0,-3.25);
\draw [-] (0,-3.75) -- (0,-4.25);
\draw [-] (0,-4.75) -- (0,-5.25);
\draw [-] (0,-5.75) -- (0,-6.25);

\draw [-] (4,-1.75) -- (4,-2.25);
\draw [-] (4,-2.75) -- (4,-3.25);
\draw [-] (4,-3.75) -- (4,-4.25);
\draw [-] (4,-4.75) -- (4,-5.25);
\draw [-] (4,-5.75) -- (4,-6.25);
\draw [-] (4,-6.75) -- (4,-7.25);
\draw [-] (4,-7.75) -- (4,-8.25);

\node[text width = 4cm, font=\fontsize{12}{0}\selectfont] at (-2.35, -5.5) {k=1};
\node[text width = 4cm, font=\fontsize{12}{0}\selectfont] at (1.65, -7.5) {k=2};
\node[text width = 4cm, font=\fontsize{12}{0}\selectfont] at (5.65, -9.5) {k=3};

\end{tikzpicture}
\end{center}
\caption{Illustrative examples of Definition~\ref{def:finalization} for $k=1, 2, 3$, respectively. Double arrows are supermajority links. The blue block is the one being finalized in each case.}
\label{fig:finalization-3-cases}
\end{figure}

\begin{remark}
In Figure~\ref{fig:finalization-3-cases}, we see some examples of finalization. On the left, we have $1$-finalization, which is what we expect to happen in the vast majority of the time.  At the center, we have an example of $2$-finalization to account for attestation delays. The $k=3$ example is mainly for the sake of illustration; it takes some contrived orchestration to create. In practice, the planned implementation for finalization in \cite{beacon} does not even include finalization for $k \geq 3$. See Section~\ref{sec:practice-finalization}.


If we define a set of blocks $F$ as finalized and prove safety for them, then safety remains true automatically if we change the definition of \emph{finalized} to any subset of $F$, because safety is defined by the lack of incomparable pairs of finalized blocks. Thus, it does not hurt to define a very general class of blocks as finalized if we can prove safety. This is why we include all $k$ in our definition; the more general definition is easier to analyze.
\end{remark}


\subsection{Hybrid LMD GHOST}
\label{sec:hlmd-ghost}


We now have enough notation to define $\mlmd$. Since the final design is fairly complicated at first glance, we first present Algorithm~\ref{alg:hlmd-ghost-protoype}, a ``prototype'' version; simply put, it starts with the last justified block in the view and then runs LMD GHOST. It is not obvious $(B_J, j)$ should be unique, but it can be proven to be well-defined as a consequence of Lemma~\ref{lem:unique-jep} when the blockchain is not $(1/3)$-slashable. For implementation it is costless to add a tiebreaker (say via block header hash).

\begin{algorithm}[H]
\caption{Prototype Hybrid LMD GHOST Fork Choice Rule}\label{alg:hlmd-ghost-protoype}
\begin{algorithmic}[1]
\Procedure{$\mlmd$}{$G$}
\State $(B_J, j) \gets$~the justified pair with highest attestation epoch $j$ in $G$ 
\State $B \gets B_J$
\State $M \gets$~most recent attestations (one per validator)
\While{$B$ is not a leaf block in $G$}
\State $B \gets \displaystyle\argmax_{B' \text{ child of } B} w(G, B', M)$
\State (ties are broken by hash of the block header)
\EndWhile
\State \Return{B}
\EndProcedure
\end{algorithmic}
\end{algorithm}

However, using this prototype will run into some subtle problems. Because the ``FFG part'' of an attestation is a vote between epoch boundary pairs, and $\lastjustpair(B)$ depends only on $\ffgview(B)$, we can think of the ``last justified pair in a chain'' as ``frozen'' at the beginning of every epoch, which does not mix well with the ``non-frozen'' $(B_J, j)$ being instinctively defined as the last justified block in our entire view, as the latter definition can shift during an epoch (the reader is encouraged to think of problems with this, as an exercise). Thus, we need a version of $(B_J, j)$ that does not change in the middle of an epoch.

Another problem occurs when the algorithm forks: forked blocks can have drastically different last justified pairs, even if they are next to each other. As a result of this mismatch, we can create pathological situations where an honest validator may slash themselves when following Algorithm~\ref{alg:hlmd-ghost-protoype}, if a validator previously attested to a higher last justification epoch but then forks into a chain whose last justification epoch is older (c.f.  Definition~\ref{def:slashing}, S2).

To resolve these issues, first we ``freeze'' the state of the latest justified pair $(B_J, j)$ to the beginning of the epochs; formally, this means when defining $(B_J, j)$ we consider the views of $\ffgview(B_l)$ over the leaf blocks $B_l$ as opposed to the entire view. Then, we filter the branches so we do not go down branches with leaf nodes $B_l$ where $\lastjustpair(B_l)$ has not ``caught up'' to $(B_J, j)$; formally, we create an auxiliary view $G'$ from $G$ that eliminates the problematic branches. These fixes give  Algorithm~\ref{alg:hlmd-ghost}, the final design.

\begin{algorithm}[H]
\caption{Hybrid LMD GHOST Fork Choice Rule}\label{alg:hlmd-ghost}
\begin{algorithmic}[1]
\Procedure{$\mlmd$}{$G$}
\State $L \gets \text{set of leaf blocks $B_l$ in $G$}$
\State \begin{varwidth}[t]{\linewidth}
      $(B_J, j) \gets$~the justified pair with highest attestation epoch $j$ in\par
        \hskip\algorithmicindent
        $J(\ffgview(B_l))$ over $B_l \in L$
      \end{varwidth}
\State $L' \gets \text{set of leaf blocks $B_l$ in $G$ such that  $(B_j, j) \in J(\ffgview(B_l))$} $ 
\State $G' \gets \text{the union of all chains $\chain(B_l)$ over $B_l \in L'$}$
\State $B \gets B_J$
\State $M \gets$~most recent attestations (one per validator)
\While{$B$ is not a leaf block in $G'$}
\State $B \gets \displaystyle\argmax_{B' \text{ child of } B} w(G', B', M)$
\State (ties are broken by hash of the block header)
\EndWhile
\State \Return{B}
\EndProcedure
\end{algorithmic}
\end{algorithm}

A dual (and possibly simpler) way of understanding Algorithm~\ref{alg:hlmd-ghost} is from the perspective of a state-based implementation of  the protocol. We can think of each chain of a leaf block $B_l$ as storing the state of its own last justified pair. During an epoch, new attestations to blocks in the chain updates the GHOST-relevant list of latest attestations $M$ but \textbf{not} the FFG-relevant justification and finalization information of the chain until the next epoch boundary block. This way, the ``FFG part'' of the protocol always works with the ``frozen until next epoch'' information, while the ``GHOST part'' of the protocol is being updated continuously with the attestations. This careful separation allows us to avoid pathological problems from mixing the two protocols.


\subsection{Slashing Conditions}

In this subsection, we add \emph{slashing} conditions, analogous to those from Casper in Definition~\ref{def:casper-slashing}. We prove a couple of desired properties, after which we are ready to use these conditions to prove safety of $\mainprotocol$ in Section~\ref{sec:safety}.

\begin{define} \label{def:slashing}
We define the following \emph{slashing conditions}:
\begin{itemize}
    \item[(S1)] No validator makes two distinct attestations $\alpha_1, \alpha_2$ with $\epoch(\alpha_1) = \epoch(\alpha_2)$. Note this condition is equivalent to $\attestepoch\left(\lastepochpair\left(\alpha_{1}\right)\right) = \attestepoch\left(\lastepochpair\left(\alpha_{2}\right)\right)$.
    \item[(S2)] No validator makes two distinct attestations $\alpha_1, \alpha_2$ with 
    \[
\attestepoch\left(\lastjustpair\left(\alpha_{1}\right)\right)<
\attestepoch\left(\lastjustpair\left(\alpha_{2}\right)\right)<
\attestepoch\left(\lastepochpair\left(\alpha_{2}\right)\right)<
\attestepoch\left(\lastepochpair\left(\alpha_{1}\right)\right).
    \]
\end{itemize}
\end{define}

We now prove a very useful property of the protocol: unless we are in the unlikely situation that we have enough evidence to slash validators with at least $1/3$ of the total stake, in a view $G$ we may assume all elements in $J(G)$ have unique attestation epochs (in other words, the view sees at most one pair justified per epoch).


\begin{lemma}
\label{lem:unique-jep} In a view $G$, for every epoch $j$, there is at most $1$ pair $(B,j)$ in $J(G)$, or the blockchain is $(1/3)$-slashable. In particular, the latter case means there must exist $2$ subsets $\validatorset_1, \validatorset_2$ of $\validatorset$, each with total weight at least $2N/3$, such that their intersection violates slashing condition (S1).
\end{lemma}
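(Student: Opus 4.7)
The plan is to do a standard pigeonhole / inclusion-exclusion argument of the type that underlies PBFT-style safety proofs, using Definition~\ref{def:supermajority} and slashing condition (S1).

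First I would reduce to the nontrivial case. By Definition~\ref{def:justify}, the only pair with attestation epoch $0$ that can lie in $J(G)$ is $(\gblock, 0)$, since any other member of $J(G)$ must be justified via a supermajority link from a pair with strictly smaller attestation epoch. So uniqueness at $j = 0$ is automatic, and we may assume $j \geq 1$ and that there exist two distinct pairs $(B_1, j), (B_2, j) \in J(G)$ with $B_1 \neq B_2$.

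Next, for each $i \in \{1,2\}$, Definition~\ref{def:justify} gives a justified pair $(A_i, j_i')$ with a supermajority link $(A_i, j_i') \supervotes (B_i, j)$. Let $\validatorset_i$ be the set of validators whose attestations contribute to this link, i.e.\ whose attestation $\alpha$ has checkpoint edge $(A_i, j_i') \votes (B_i, j)$. By Definition~\ref{def:supermajority}, each $\validatorset_i$ has total stake strictly greater than $2N/3$. By a standard inclusion-exclusion estimate, $\validatorset_1 \cap \validatorset_2$ has total stake strictly greater than $2 \cdot (2N/3) - N = N/3$.

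Finally I would argue that every $V \in \validatorset_1 \cap \validatorset_2$ can be slashed under (S1). By construction, $V$ produced attestations $\alpha_1, \alpha_2$ with $\lastepochpair(\alpha_i) = (B_i, j)$. Since $B_1 \neq B_2$, these are distinct messages, yet both satisfy $\attestepoch(\lastepochpair(\alpha_i)) = j$, which is exactly the violation of (S1) described in Definition~\ref{def:slashing}. Hence $\validatorset_1, \validatorset_2$ are the two required subsets of total weight at least $2N/3$ whose intersection violates (S1), so the blockchain is $(1/3)$-slashable. The only delicate point is verifying that the two attestations really are distinct messages even if they happen to share a slot number — but this is handled cleanly by the fact that they carry different checkpoint edges, since their targets $(B_1,j)$ and $(B_2,j)$ differ.
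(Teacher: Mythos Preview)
Your proof is correct and follows the same quorum-intersection argument as the paper's. The paper's own proof is even more terse: it simply observes that each of the two distinct justified pairs at epoch $j$ must have received more than $2N/3$ stake in attestations via checkpoint edges, names those two attestor sets $\validatorset_1$ and $\validatorset_2$, and stops, leaving the inclusion-exclusion bound on the intersection and the explicit verification of the (S1) violation implicit.
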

\begin{proof}
Suppose we have $2$ distinct pairs $(B, j)$ and $(B', j)$ in $J(G)$. This means in epoch $j$, more than a total stake of $2N/3$ attested with a checkpoint edge to $(B,j)$ and more than $2N/3$ stake attested with a checkpoint edge to $(B',j)$. These are our desired $\validatorset_1$ and $\validatorset_2$.
\end{proof}

\begin{remark}
The importance of Lemma~\ref{lem:unique-jep} is that assuming reasonable conditions (we are not $(1/3)$-slashable), the choice of $(B_j, j)$ in Algorithm~\ref{alg:hlmd-ghost} is unique and thus well-defined. More generally, we can assume any view only sees at most one justified pair $(B,j)$ per epoch $j$ over \textbf{all} chains.
\end{remark}



\begin{prop}
\label{prop:honest-no-violation}
An honest validator following the protocol will never accidentally violate the slashing conditions.
\end{prop}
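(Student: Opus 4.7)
The plan is to handle (S1) and (S2) separately. For (S1), I would appeal to the committee structure: in each epoch $j$ the sets $S_0, \dots, S_{C-1}$ partition $\validatorset$, so every honest validator belongs to exactly one committee, and the protocol instructs that committee to attest only once, at the midpoint of its assigned slot. Hence an honest validator produces exactly one attestation per epoch, and (S1) cannot be violated.

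For (S2), I would first reduce it to a monotonicity claim about $\lastjustpair$. Suppose the same honest validator casts two distinct attestations $\alpha_1, \alpha_2$; by (S1) their epochs differ, so assume $\epoch(\alpha_1) < \epoch(\alpha_2)$. Since $\attestepoch(\lastepochpair(\alpha_i)) = \epoch(\alpha_i)$ by Definition~\ref{def:checkpoint-edge}, the four-term inequality in (S2) cannot hold as labeled (the outermost inequality fails); swapping labels, it reduces to the single condition $\attestepoch(\lastjustpair(\alpha_2)) < \attestepoch(\lastjustpair(\alpha_1))$. So it suffices to show the opposite inequality, i.e.\ that the attestation epoch of $\lastjustpair$ is weakly monotone along an honest validator's attestation history.

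The core work is to trace that monotonicity through Algorithm~\ref{alg:hlmd-ghost}. Let $G_1 \subseteq G_2$ be the validator's views when she runs $\mlmd$ for $\alpha_1, \alpha_2$ (views grow monotonically, since acceptance is monotone). In the first call, $\mlmd(G_1)$ picks a pair $(B_J^{(1)}, j^{(1)})$ of maximal attestation epoch in $\bigcup_{B_l \in L_1} J(\ffgview(B_l))$, filters $L_1$ down to $L'_1$, and returns a block $B_1 = \block(\alpha_1)$ which is a leaf of $G'_1$; by construction this leaf lies in $L'_1$, so $(B_J^{(1)}, j^{(1)}) \in J(\ffgview(B_1))$. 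Since $\lastjustpair(\alpha_1)$ is the highest-attestation-epoch justified pair in $\ffgview(B_1)$, and $j^{(1)}$ is the maximum across all leaves, we get $\attestepoch(\lastjustpair(\alpha_1)) = j^{(1)}$. For $\mlmd(G_2)$, every leaf $B_l \in L_1$ either remains in $L_2$ or has been extended to a descendant $B_l' \in L_2$; in either case $\lastepoch(B_l')$ is a descendant of (or equal to) $\lastepoch(B_l)$, hence $\ffgview(B_l) \subseteq \ffgview(B_l')$, and the pair $(B_J^{(1)}, j^{(1)})$ is still justified in some leaf's $\ffgview$ under $G_2$. Thus the pair $(B_J^{(2)}, j^{(2)})$ picked by $\mlmd(G_2)$ satisfies $j^{(2)} \geq j^{(1)}$, and since $B_2 = \block(\alpha_2)$ is a leaf of $G'_2$ (so $B_2 \in L'_2$), we conclude $\attestepoch(\lastjustpair(\alpha_2)) \geq j^{(2)} \geq j^{(1)} = \attestepoch(\lastjustpair(\alpha_1))$, as required.

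The main obstacle is precisely this last step — verifying that the filtering from $G$ to $G'$ in Algorithm~\ref{alg:hlmd-ghost} does exactly what it was designed to do, namely prevent an honest validator from drifting onto a chain whose last justified pair has an older attestation epoch than some justified pair she has already seen. The committee argument for (S1) and the reduction of (S2) to a single inequality are essentially bookkeeping; the substance is that the ``freeze and filter'' structure of $\mlmd$, together with view monotonicity and the growth of $\ffgview$ along a chain, produces a weakly increasing sequence of $\attestepoch(\lastjustpair)$ values for an honest validator.
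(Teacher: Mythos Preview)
Your proposal is correct and follows essentially the same argument as the paper. For (S1) you both invoke the one-committee-per-epoch partition; for (S2) you both rely on exactly the two ingredients the paper isolates: (i) justified pairs persist in $\ffgview$ as a chain is extended, and (ii) the filtering step $L'$ in Algorithm~\ref{alg:hlmd-ghost} forces the returned leaf $B$ to satisfy $(B_J,j)\in J(\ffgview(B))$, so that $\attestepoch(\lastjustpair(\alpha))\geq j$. The only difference is packaging: the paper assumes an (S2) violation with epochs $r<s<t<u$ and derives $j\geq s>r$ versus $j\leq r$ for a contradiction, whereas you phrase the same mechanism as weak monotonicity of $\attestepoch(\lastjustpair(\cdot))$ along an honest validator's attestation history and then observe that this already negates the first strict inequality in (S2).
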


\begin{proof}
First, notice an honest validator cannot violate (S1) because each validator is assigned to exactly one committee in each epoch, and is thus asked to attest exactly once per epoch. Thus, it suffices to consider (S2).

Suppose an honest validator $V$ is about to write an attestation that violates (S2) in epoch $u$. This means there are epochs $r < s < t < u$ where in epoch $t$, $V$ wrote an attestation $\alpha_t$ with checkpoint edge $(B_2, s) \votes (B_3, t)$ and the protocol is telling $V$ now to write an attestation $\alpha_u$ with checkpoint edge $(B_1, r) \votes (B_4, u)$. This means running HLMD GHOST on $V$'s current view ends up at some leaf block $B$, which must be a descendent of $B_4$ such that $\lastepochpair(B) = (B_4, u)$.  

Because $V$ wrote $\alpha_t$ in epoch $t < u$, we know $(B_2, s)$ was a justified pair in $\ffgview()$ of some leaf block at that point in epoch $t$. As justified blocks remain justified when their chains grow, we know that now in epoch $u$, the starting justified pair $(B_J, j)$ as defined in HLMD GHOST (Algorithm~\ref{alg:hlmd-ghost}) must satisfy $j \geq s$. Since $s > r$, we know that $j > r$.

However, because of the filtering $L'$ in Algorithm~\ref{alg:hlmd-ghost}, we know that the resulting leaf block $B$ of running HLMD GHOST (in epoch $u$) starting at $(B_J, j)$ must satisfy $(B_J, j) \in J(\ffgview(B)).$ Since $\lastjustpair(B) \in J(\ffgview(B))$ by definition and $\lastjustpair(B) = (B_1,r)$, we know $j \leq r$, a contradiction. Therefore, the protocol cannot force an honest validator to violate (S2).
\end{proof}

\subsection{Rewards and Penalties}

A validator should be rewarded (i.e. have stake increased) for either including valid attestations in his/her proposed block (in the beacon chain specs \cite{beacon}, \emph{proposer reward}) or attesting to the correct block which is justified and finalized as the chain grows (in \cite{beacon}, \emph{attester reward}). Meanwhile, a validator should be penalized (have stake decreased) for violating slashing conditions.

The reward and penalty amounts can be adjusted based on the security level that needs to be achieved, and the game theory of the situation should be such that validators are incentivized to perform their tasks and to not violate slashing conditions. Additionally, it is worth considering more complex incentives, such as incentivizing validators to catch other misbehaving validators. While this makes for potentially interesting game theory work, adding a layer of analysis of these parameters is distracting for the scope of this paper, where we want to focus on the consensus aspect of $\mainprotocol$. To be pragmatic, we abstract this analysis away by assuming these reward and penalty mechanisms provide enough game-theoretical incentives such that the following holds: 
\begin{itemize}
    \item Honest validators follow the protocol.
    \item Any honest validator seeing a slashing condition violated will slash the dishonest validator.
\end{itemize} 

\section{Safety}
\label{sec:safety}



Recall our main goals are proving safety and liveness. In this section, we prove safety, which is done similarly as in Casper FFG \cite{buterin2017casper}; the main differences are that we are using epoch boundary pairs (as opposed to Casper's checkpoint blocks) and our finalization definition is more complex. Thus, we need the following Lemma, which is necessarily more complicated than its counterpart idea in Casper.

\begin{lemma} \label{lem:finalized-no-split} In a view $G$, if $(B_F, f) \in F(G)$ and $(B_J, j) \in J(G)$ with $j > f$, then $B_F$ must be an ancestor of $B_J$, or the blockchain is $(1/3)$-slashable -- specifically, there must exist $2$ subsets $\validatorset_1, \validatorset_2$ of $\validatorset$, each with total stake at least $2N/3$, such that their intersection all violate slashing condition (S1) or all violate slashing condition (S2). 
\end{lemma}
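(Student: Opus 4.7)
The plan is to follow the accountable-safety strategy of Casper FFG (Theorem 1 of \cite{buterin2017casper}), adapted to epoch boundary pairs and to the more permissive $k$-finalization. I will argue the contrapositive: assuming the blockchain is not $(1/3)$-slashable via either (S1) or (S2), I show $B_F$ must be an ancestor of $B_J$.

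First, Lemma~\ref{lem:unique-jep} tells me that the absence of an (S1)-based slashable set forces every epoch to carry at most one justified pair in $J(G)$. Unpacking the definition of $k$-finalization then yields the supermajority link $(B_F, f) \supervotes (B_k, f+k)$ together with $(B_F, f) = (B_0,f), (B_1, f+1), \ldots, (B_{k-1}, f+k-1) \in J(G)$; the super-link forces $(B_k, f+k) \in J(G)$ too. By the just-noted uniqueness, these are the only justified pairs at epochs $f, f+1, \ldots, f+k$, and every $B_i$ lies on $\chain(B_k)$, hence is a descendant of $B_F$. In particular, if $f < j \leq f+k$, the pair $(B_J, j)$ must equal $(B_{j-f}, j)$, and $B_F$ is already an ancestor of $B_J$, settling this subcase.

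In the remaining case $j > f+k$, I will walk up a justification chain $(\gblock, 0) = (D_0, j_0) \supervotes \cdots \supervotes (D_n, j_n) = (B_J, j)$. Because $\lastjustpair$ of any attestation sits inside $\ffgview(\block(\alpha))$ at a strictly earlier attestation epoch than $\lastepochpair$, the sequence $j_0 < j_1 < \cdots < j_n$ is strictly increasing. Pick the minimal $i^*$ with $j_{i^*} > f+k$, so $j_{i^*-1} \leq f+k$. If $j_{i^*-1} < f$, then any attestations $\alpha_1$ contributing to $(D_{i^*-1}, j_{i^*-1}) \supervotes (D_{i^*}, j_{i^*})$ and $\alpha_2$ contributing to $(B_F, f) \supervotes (B_k, f+k)$ satisfy
\[
j_{i^*-1} \;<\; f \;<\; f+k \;<\; j_{i^*},
\]
which is exactly the (S2) pattern on their attestation epochs; intersecting the two super-majorities then produces $>N/3$ stake all violating (S2), contradicting the standing assumption. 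So $f \leq j_{i^*-1} \leq f+k$, and the uniqueness of justified pairs per epoch forces $(D_{i^*-1}, j_{i^*-1}) = (B_{j_{i^*-1}-f}, j_{i^*-1})$, placing $D_{i^*-1}$ on $\chain(B_k)$ as a descendant of $B_F$.

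To close the argument I will use the structural fact that any super-link $(A, j') \supervotes (B, j'')$ forces $A$ to be an ancestor of $B$, since a contributing attestation computes $\lastjustpair = (A, j')$ inside $\ffgview(\block(\alpha)) = \view(\lastepoch(\block(\alpha))) = \view(B)$, a view consisting only of ancestors of $B$. Propagating this fact along the remaining super-links $D_{i^*-1} \to D_{i^*} \to \cdots \to D_n = B_J$ shows $B_J$ is a descendant of $D_{i^*-1}$, and hence of $B_F$. The main obstacle, and the reason we genuinely need $k$-finalization rather than only a single super-link out of $(B_F, f)$, is excluding the middle band $f \leq j_{i^*-1} \leq f+k$: it is precisely the stack of justified pairs $(B_0, f), \ldots, (B_{k-1}, f+k-1)$ on $\chain(B_k)$ together with Lemma~\ref{lem:unique-jep} that pins $D_{i^*-1}$ onto $\chain(B_k)$ and lets ancestry propagate all the way up to $B_J$.
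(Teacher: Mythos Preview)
Your proof follows the same Casper-FFG accountable-safety template as the paper: walk the justification chain for $(B_J,j)$, use Lemma~\ref{lem:unique-jep} to force at most one justified pair per epoch, and exhibit a surrounding (S2) pair of links whenever the chain jumps across the finalization window $[f,f+k]$. The paper phrases this as ``pass to the earliest violation and show its incoming link has source epoch $<f$'', while you fix a crossing index $i^*$ and do a case split; these are minor reorganizations of the same argument.

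The one place you go further than the paper is the explicit ``super-link implies ancestry'' step, which you use to push descendancy from $D_{i^*-1}$ up to $B_J$ in the case $f \le j_{i^*-1} \le f+k$. The paper's proof glosses over exactly this case (its ``earliest violation'' reduction silently needs the same fact to rule out a source $B_l$ with $l\ge f$ that \emph{is} a descendant of $B_F$), so making it explicit is a genuine improvement. Your justification of it, however, is not quite right as written: $\view(B)$ is closed under \emph{message} dependencies, i.e.\ parent pointers \emph{and} $\newattests$, so it can contain blocks off $\chain(B)$ reached through included attestations; it does not literally consist only of block-tree ancestors of $B$. What you actually need is that $\lastjustpair(\alpha)$ always lies on $\chain\bigl(\lastepoch(\block(\alpha))\bigr)$, which is the intended reading throughout the paper but deserves a sentence more care than ``$\view(B)$ consists only of ancestors of $B$''.
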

\begin{proof}

Anticipating contradiction, suppose there is a pair $(B_J, j)$ with $j > f$ and $B_J$ is not a descendant of $B_F$. By definition of finalization, in $G$, we must have $(B_F, f) \supervotes (B_k, f+k)$, where we have a sequence of adjacent epoch boundary pairs $(B_F, f), (B_1, f+1), \dots, (B_k, f+k)$.

Since $(B_J,j)$ is justified and $B_J$ is not a descendant of $B_F$, without loss of generality (by going backwards with supermajority links), we can assume $(B_J, j)$ is the earliest such violation, meaning we can assume $(B_l, l) \supervotes (B,j)$ where $l < f$ but $j > f$ (here we are using Lemma~\ref{lem:unique-jep}, which tells us no two justified blocks have the same $\attestepoch()$, else we are done already with $2$ validator subsets of weight $2N/3$ each violating (S1); this is why we do not worry about the equality case). Since $B_1, \ldots, B_k$ are all justified but are descendants of $B_F$, we know $B_J$ cannot be any of these blocks, so we must have $j > f+k$.  
 This means the view $G$ sees some subset $\validatorset_1$ of $\validatorset$ with total stake more than $2N/3$ have made attestations justifying a checkpoint edge $(B_l, l) \rightarrow (B_J, j)$, so for any such attestation $\alpha_1$, $\attestepoch(\lastjustpair(\alpha_1)) = l$ and $\epoch(\alpha_1) = j$. Similarly, $G$ also sees more than $2N/3$ weight worth of validators $\validatorset_2$ have made attestations justifying $(B_F, f) \rightarrow (B_k, f+k)$ so for any such attestation $\alpha_2$, $\attestepoch(\lastjustpair(\alpha_2)) = f$ and $\epoch(\alpha_2) = f+k$. Thus, for anyone in the intersection $\validatorset_1 \cap \validatorset_2$, they have made two distinct attestations $\alpha_1$ of the former type and $\alpha_2$ of the latter type.  Because
 \[
 l < f < f+k < j,
 \]
 we know
 \[
\attestepoch\left(\lastjustpair\left(\alpha_{1}\right)\right)<
\attestepoch\left(\lastjustpair\left(\alpha_{2}\right)\right)<
\epoch\left(\alpha_{2}\right)<
\epoch\left(\alpha_{1}\right),
 \]
which allows them to be provably slashed by (S2).
\end{proof}

We are now ready to prove safety. 
\begin{theorem}[Safety] \label{thm:safety}
In a view $G$, if we do not have the following properties, then $G$ is $(1/3)$-slashable.
\begin{enumerate}
    \item any pair in $F(G)$ stays in $F(G)$ as the view is updated.
    \item if $(B,j) \in F(G)$, then $B$ is in the canonical chain of $G$.
\end{enumerate}
\end{theorem}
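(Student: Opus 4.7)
The plan is to reduce both assertions to Lemma~\ref{lem:finalized-no-split} (using Lemma~\ref{lem:unique-jep} as an auxiliary tool), treating property (1) as essentially a monotonicity observation and property (2) as the main safety content.

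For property (1), I would note that the conditions in Definition~\ref{def:finalization} for $(B_0, j) \in F(G)$ amount to the existence in $G$ of a supermajority link $(B_0, j) \supervotes (B_k, j+k)$ together with the justifications $(B_0, j), (B_1, j+1), \ldots, (B_{k-1}, j+k-1) \in J(G)$. Each of these is of the form ``there exist accepted attestations of combined weight exceeding $2N/3$,'' and such conditions are preserved under any enlargement of $G$ since accepted messages are never withdrawn and $J$ is itself monotone in the view. So property (1) holds unconditionally, with no slashing required.

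For property (2), let $(B, j) \in F(G)$ and let $(B_J, j^{*})$ be the starting justified pair selected by HLMD GHOST on $G$ (Algorithm~\ref{alg:hlmd-ghost}). By construction, the canonical chain $\chain(\mlmd(G))$ runs from $\gblock$ through $B_J$ out to a leaf in the filtered subview $G'$, so it suffices to show either $B \in \chain(B_J)$ (i.e.\ $B$ is an ancestor of, or equal to, $B_J$) or that $G$ is $(1/3)$-slashable. I would then split into three cases on the comparison of $j$ to $j^{*}$. If $j^{*} > j$, Lemma~\ref{lem:finalized-no-split} applied with $(B_F, f) = (B, j)$ and the given $(B_J, j^{*})$ directly yields $B \in \chain(B_J)$ or $(1/3)$-slashability via (S1) or (S2). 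If $j^{*} = j$, both $(B, j)$ and $(B_J, j^{*})$ lie in $J(G)$ at the same attestation epoch, so Lemma~\ref{lem:unique-jep} forces $B = B_J$ (trivially on the canonical chain) or again $(1/3)$-slashability.

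The main obstacle is the remaining case $j^{*} < j$. Here $(B, j)$ is justified at a strictly higher attestation epoch than any justified pair visible in any leaf's FFG view, which is possible only if the attestations witnessing $(B, j)$'s justification were not absorbed into $\newattests$ of any leaf's chain. To close this case I would exploit the finalization witness: the supermajority link $(B, j) \supervotes (B_k, j+k)$ is realized by attestations whose blocks are descendants of $B_k$, so $G$ contains a chain (through $B$ and $B_k$) extending to some leaf $B_l$, and one would like to argue that the relevant attestations appear in $\ffgview(B_l)$, forcing $j^{*} \geq j$ and contradicting the case hypothesis. If this inclusion cannot be guaranteed by the structural argument alone, the fallback is to mimic the proof of Lemma~\ref{lem:finalized-no-split} by exhibiting two validator subsets $\validatorset_1, \validatorset_2 \subseteq \validatorset$ of total stake exceeding $2N/3$ whose intersection violates (S1) or (S2), thereby concluding $(1/3)$-slashability. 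This last case, with its delicate bookkeeping between the global view $G$ and per-leaf FFG views, is the trickiest part of the write-up.
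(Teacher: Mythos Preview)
Your treatment of property (1) matches the paper's: both dismiss it as monotonicity of the justification and finalization predicates under growing views.

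For property (2), however, your decomposition diverges from the paper's. The paper does \emph{not} try to show directly that each finalized block is an ancestor of the HLMD starting pair $(B_J, j^{*})$. Instead it first proves the purely Casper-style statement that no two finalized pairs can conflict (taking any $(B_1,f_1),(B_2,f_2)\in F(G)$ with $f_2>f_1$, noting $(B_2,f_2)$ is justified, and invoking Lemma~\ref{lem:finalized-no-split} once). Only after establishing that all finalized blocks already lie on a single chain does it connect that chain to the canonical chain through $(B_J,j^{*})$. This separation is what keeps the argument short: the heavy lifting is the pairwise non-conflict of finalized pairs, and the link to HLMD is a one-line observation.

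Your direct case split on $j$ versus $j^{*}$ reproduces the easy cases but leaves the $j^{*}<j$ case genuinely open, and neither of your proposed fixes closes it. The structural argument fails: the attestations $\alpha$ realizing $(B,j)\supervotes(B_k,j+k)$ have $\block(\alpha)$ somewhere in $G$, but nothing forces those attestations into $\newattests$ along any chain reaching a leaf, so they need not appear in any $\ffgview(B_l)$, and you cannot conclude $j^{*}\ge j$. The slashing fallback also fails: in this case there is no pair of conflicting supermajority votes to pit against each other, so there is no (S1) or (S2) violation to exhibit. The difficulty is a bookkeeping mismatch between $J(G)$ and $\bigcup_l J(\ffgview(B_l))$, not a double-voting event. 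Following the paper's route---prove ``no two finalized pairs conflict'' first, then argue the canonical chain passes through the last finalized pair---sidesteps the per-block case analysis and confines the interaction with HLMD to a single application of Lemma~\ref{lem:finalized-no-split} at the top of the finalized chain.
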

\begin{proof}
The first property is straightforward from the definitions of $F(G)$ and $J(G)$, so we omit the proof. 

The definition of Algorithm~\ref{alg:hlmd-ghost} shows that the canonical chain always goes through the justified pair with highest attestation epoch $j$, so by Lemma~\ref{lem:finalized-no-split} must go through the highest finalized pair in $F(G)$. Thus, it suffices to show that no two finalized blocks conflict, because then all the finalized blocks must lie on the same chain, which we just showed must be a subset of the canonical chain.

We now show that if $(B_1, f_1), (B_2, f_2) \in F(G)$ but $B_1$ and $B_2$ conflict, then $G$ is $(1/3)$-slashable. Without loss of generality, $f_2 > f_1$. Since $(B_2, f_2)$ is finalized, it is justified, and we can apply  Lemma~\ref{lem:finalized-no-split}. This shows that either $B_2$ must be a descendant of $B_1$ (assumed to be impossible since they conflict) or $G$ is $(1/3)$-slashable, as desired. 
\end{proof}

Finally, to get our original notion of safety from the Theorem, note that the definition of finalized blocks automatically persist in future views, 

\section{Plausible Liveness}
\label{sec:plausible-liveness}

This section is similar to Casper FFG \cite[Theorem 2]{buterin2017casper}. Since our setup is more complex (in particular, Algorithm~\ref{alg:hlmd-ghost} is much more involved), the results do not immediately follow from the previous result, despite having the same main ideas.

To do this, we give an auxiliary definition: given a chain $c = \chain(B)$ and an epoch $j$, we say that $c$ is \emph{stable at $j$} if $\lastjustpair(\ebb(B, j)) = (B', j-1)$ for some block $B'$.


\begin{theorem}
\label{thm:plausible-liveness}
If at least $2N/3$ stake worth of the validators are honest, then it is always possible for a new block to be finalized with the honest validators continuing to follow the protocol, no matter what happened previously to the blockchain.
\end{theorem}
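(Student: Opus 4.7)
The plan is to follow the structure of Casper FFG's plausible-liveness argument~\cite[Theorem 2]{buterin2017casper} but adapted to $\mlmd$ and the richer $k$-finalization definition. Since the statement is ``it is always possible,'' it suffices to exhibit one future execution in which the honest supermajority causes a new block to become $1$-finalized without any honest validator being slashed. I will build the witness over two carefully chosen epochs, using the honest weight to manufacture two consecutive supermajority links.

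Let $(B_J, j^*)$ be the justified pair with highest attestation epoch currently in $\view(\network)$; Lemma~\ref{lem:unique-jep} makes this unambiguous, since an honest set of weight $\geq 2N/3$ can never help produce two justified pairs in the same epoch. Pick epochs $j'$ and $j'+1$ both strictly later than the current epoch and than $j^*$, schedule honest proposers at slots $j'C$ and $(j'+1)C$ to extend the chain through $B_J$ with new blocks $B^*$ and $B^{**}$, and time message delivery so that the attestations witnessing $(B_J, j^*)$ are included in the chain leading up to $B^*$. This guarantees that $(B_J, j^*) \in J(\view(B^*))$, that $\lastepoch(B^*) = B^*$, and that $\lastepoch(B^{**}) = B^{**}$. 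Every honest validator running $\mlmd$ in epoch $j'$ or $j'+1$ therefore starts at $(B_J, j^*)$, survives the leaf-filtering step that forms $G'$ in Algorithm~\ref{alg:hlmd-ghost}, and returns a block in the honest chain, so the chain is \emph{stable} at both $j'$ and $j'+1$ in the sense introduced just before the theorem.

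During epoch $j'$, every honest attestation $\alpha$ then satisfies $\lastepochpair(\alpha) = (B^*, j')$ and $\lastjustpair(\alpha) = (B_J, j^*)$. Because honest validators, holding total weight $\geq 2N/3$, are partitioned across the $C$ committees of that epoch, their aggregated attestations form the supermajority link $(B_J, j^*) \supervotes (B^*, j')$, justifying $(B^*, j')$. Replaying the argument in epoch $j'+1$ yields $(B^*, j') \supervotes (B^{**}, j'+1)$. Since $(B^*, j')$ and $(B^{**}, j'+1)$ are adjacent epoch boundary pairs in $\chain(B^{**})$ and $(B^*, j')$ is already justified, the $k=1$ clause of Definition~\ref{def:finalization} is met and $(B^*, j')$ becomes $1$-finalized. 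Proposition~\ref{prop:honest-no-violation} guarantees no honest validator is slashed along the way: (S1) holds because each honest validator attests at most once per epoch, and (S2) cannot be triggered because every past honest attestation has $\attestepoch(\lastepochpair) < j'$ while every new one has $\attestepoch(\lastjustpair) \geq j^*$, ruling out the surround inequalities.

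The main obstacle I anticipate is the bookkeeping forced by $\mlmd$ and the pair-based finalization. In particular one must verify (i) that the leaf-set filtering to $L'$ in Algorithm~\ref{alg:hlmd-ghost} really does keep the chain through $B^*$, not discard it because of some dishonest sub-justification, (ii) that every honest attestation in a given new epoch carries \emph{identical} checkpoint edges, so they aggregate into one supermajority link rather than splitting into several sub-threshold votes, and (iii) that the proposers at slots $j'C$ and $(j'+1)C$ produce epoch-aligned boundary blocks whose $\ffgview$ inherits the previously justified pair. Each of these is arrangeable in a constructed execution without any synchrony assumption, which is exactly what ``plausible'' liveness allows, so the theorem follows once these routine scheduling details are discharged.
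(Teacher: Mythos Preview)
Your proposal is correct and follows essentially the same approach as the paper: construct a favorable execution over two consecutive epochs in which the honest supermajority produces two back-to-back supermajority links, yielding $1$-finalization of the intermediate epoch boundary pair; the paper organizes this via the auxiliary ``stable'' notion while you proceed more directly and are more explicit about the $\mlmd$ filtering and the appeal to Proposition~\ref{prop:honest-no-violation}. One minor quibble: by the paper's definition the chain is stable only at $j'+1$, not at $j'$ (stability requires $\lastjustpair$ to come from epoch $j'-1$, whereas you only have $j^* \le j'-1$), but this is purely terminological and does not affect the correctness of your argument.
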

\begin{proof}





Suppose we are starting epoch $j$. Specifically, suppose our current slot is $i = Cj$; then it is plausible (by having good synchrony) for everyone to have the same view. In particular, the proposer, who is plausibly honest, would then propose a new block $B$ with slot $i$, which is a child of the output block of $\mlmd()$ run on the old view. Call this current view (including $B$) $G$ and define $c = \chain(B)$. Keep in mind that $\lastepoch(B) = B$; we are introducing a new epoch boundary block.

Now, we claim that it is plausible that $c$ extends to a stable chain at the beginning of the next epoch $(j+1)$. To see this, note that since at least $2N/3$ stake worth of the validators are honest, it is plausible that they all attest for $B$ or a descendant (for example, if they are all synced with the network view and vote immediately after $B$ is created). This creates a supermajority link  $\lastjustpair(B) \supervotes (B, j)$, justifying $B$. Now, in the next epoch $(j+1)$, it is plausible for the new block $B'$ with slot $i + C = (j+1)C$ to include all of these attestations (which would happen with good synchrony), so $\lastjustpair(B') = (B, j)$ and $\chain(B')$ is indeed stable at epoch $(j+1)$. Thus, by assuming good synchrony and honest validators, it is plausible (possibly having to wait $1$ extra epoch) that the chain of the first epoch boundary block of the new epoch is stable, no matter what the state of the blockchain was initially.

Thus, we can reduce our analysis to the case that $c$ was stable to begin with, meaning that we have a supermajority link $(B', j-1) \supervotes (B, j)$ in $\ffgview(B)$. Then by the same logic above, it is plausible for the next epoch boundary block ($B''$ with slot $(j+1)C$) to also be stable, with another supermajority link $(B, j) \rightarrow (B'', j+1)$. This finalizes the pair $(B, j)$; in particular, it is the special case of $1$-finalization. 
\end{proof}

The original Casper FFG is a ``finality gadget'' on top of a (presumed probabilistically live) blockchain, so its focus was not on the probabilistic liveness of the underlying chain, rather that the participants do not get into a situation where we are forced to slash honest validators to continue. Since our paper also provides the underlying protocol, we also want a probabilistic guarantee of liveness, which we will cover in Section~\ref{sec:liveness}.


\section{Probabilistic Liveness}
\label{sec:liveness}

In this section, we prove probabilistic liveness of our main protocol $\mainprotocol$ from Section~\ref{sec:stage-2}, given some assumptions. Our proof consists of the following steps, which are basically the main ideas of the proof of Theorem~\ref{thm:plausible-liveness}:
\begin{enumerate}
    \item (assumptions lead to high weight after first slot) Under ``good'' conditions (such as having enough honest validators, synchrony conditions, etc. to be formalized)  honest validators have a high probability of finding a block with a high weight after the first slot.
    \item (high weight after first slot leads to high probability of justification) if we have a block with high weight after one slot, its weight ``differential'' over competitors is likely to increase throughout the epoch, meaning that this block (or a descendent) is likely to be justified.
    \item (high probability of justification leads to high probability of finalization) if every epoch is likely to justify a block, then it is very likely for at least one block to be finalized in a stretch of $n$ epochs as $n$ increases.
\end{enumerate}
This division is not just organizational; it also encapsulates the assumptions into different parts of the argument so that each part would remain true even if assumptions change. For example, the main theorem corresponding to the $3$rd item, Theorem~\ref{thm:finalization-liveness}, does not make any synchrony assumptions and only relies on a probability $p$ that a block gets justified (though synchrony assumptions may be required to bound $p$ given the previous parts). Also, we assume in the second part (as we do for simulations in Appendix~\ref{sec:equivocation-game}) that all validators have equal stake, whereas the third part does not. 

This section is the most intricate and dependent on parameters: for one, the current planned implementation for Ethereum 2.0 actually skirts a nontrivial portion of the analysis due to the attestation consideration delay (see Section~\ref{sec:attestation_consideration_delay}); also, the bounds assume equal stake per validator, which may or may not be a useful assumption depending on external factors. However, it is still worthwhile to include an analysis for the ``pure'' protocol because the proof strategies here, such as using concentration inequalities, are very generalizable to a whole class of these potential slot-based approaches to proof-of-stake, and we believe our analysis captures the heart of what makes these approaches work probabilistically. In fact, ``patches'' such as the aforementioned attestation consideration delay exist primarily because of the types of attacks against the pure protocol, so it is useful for intuition-building to address these attacks even if the actual implementation sidesteps them. Thus, the primary value of this section is as a proof-of-concept of the analysis involved in designing probabilistically-live protocols such as $\mainprotocol$, as opposed to a mathematical guarantee of the actual implementation.

\subsection{High Weight After the First Slot - The Equivocation Game}
\label{sec:liveness-equivocation-game}

We define the \emph{equivocation game} to be the following:

\begin{enumerate}
\item The game is parametrized by $(\validatorset, a, \epsilon_1, \epsilon_2)$, where $\validatorset$ is a set of validators and $(a, \epsilon_1, \epsilon_2)$ are real numbers that parameterize network / synchrony conditions (for ease of reading we push the details to the Appendix).
\item As in $\mainprotocol$, $|\validatorset| = N$ and each $V \in \validatorset$ has some fixed stake $w(V)$. We assume the total stake is $N$ (so the average stake is $1$) and the total amount of stake of honest validators is at least $2N/3$. 
\item There are $2$ options to vote for, which we call $O_1$ and $O_2$ for short. These are abstractions for voting on $2$ conflicting blocks in $\mainprotocol$, and the concept of ``blocks'' are not included in this equivocation game. We (meaning the honest validators) \emph{win} if either $O_1$ or $O_2$  obtain at least $2N/3$ stake worth of votes, and we lose (i.e., we are in a state of equivocation) otherwise. Note that this is equivalent to the idea that there is a stake differential of $N/3$ between the two choices after the game.
\end{enumerate}


The equivocation game is defined to capture the idea of $\mainprotocol$ with just $2$ choices in one slot. Most of the necessary assumptions for liveness are encoded into the game so that the other liveness results in this section can be as independent as possible. 


In Appendix~\ref{sec:equivocation-game}, we flesh out the details of the equivocation game and conduct the simulations of equivocation game with three regimes (a pessimistic regime, an optimistic regime, and a regime in between). The main idea is that under ``reasonable'' conditions $(\validatorset, a, \epsilon_1, \epsilon_2)$, it is very likely that after the first slot, we win the game, meaning that we should expect one of the  pairs in $\mainprotocol$ to have $2N/3$ worth of stake supporting it after the first slot.

For the big picture, the only thing we need for the rest of Probabilistic Liveness is the idea that we ``win'' the equivocation game (specifically, have a block with a high number of attestations over the second-best option after the first slot) with some probability $r$, and having higher $r$ increases the guarantees that a block will be justified in Theorem~\ref{thm:justification-liveness}. For practical purposes, the work in the Appendix seems like an $r$ of around $80\%$ seems to be a reasonable heuristic under not too strong or weak synchrony conditions, and that is already assuming a very bad case where close to $1/3$ of validators are dishonest.

\begin{remark}
Clearly, the concept of equivocation games can be generalized, and similar games could be used to study other proof-of-stake models. There are many directions (the protocols for honest validators, latency modeled by something more than the uniform distribution, etc.) which may be interesting to study for future work, but may be too distracting for the purpose of our paper.
\end{remark}

\subsection{High Weight after First Slot Leads to Justification}
\label{sec:justification-liveness}

In this section, we focus on the $j$-th epoch $[T, T+C]$, where $T=jC$. Let $S = N/C$. We define the following set of assumptions, which we call $A(\epsilon)$\label{Assumptions: liveness}:
\begin{itemize}
    \item the network is $(1/2)-$synchronous starting at time $T = jC$;
    \item each validator has $1$ stake;
    \item the total number of byzantine validators is equal to $N/3-C\epsilon$, meaning the average number of byzantine validators in each slot's committee is $S/3-\epsilon$. 
\end{itemize}

For each $i \in \left\{ 0,1,\ldots,C-1\right\}$, define $h_i$ to be the number of guaranteed honest attestors in slot $jC+i$ and $b_i$ to be the number of dishonest attestors in the slot. Let $S = N/C$, so for all $i$, $h_i + b_i = S$. We show that the distribution of honest / dishonest attestors should not stray too much from expectation in Proposition~\ref{prop:justification-liveness}, using ubiquitous concentration inequalities:

\begin{prop}
\label{prop:serfling}
Let $\mathcal{X} = (x_1,...,x_N)$ be a finite list of $N$ values with $a \le x_i \le b$ and let $X_1,...,X_n$ be sampled without replacement from $\mathcal{X}$. The following hold:

\[
\prob{\sum_{i=1}^{n}X_{i}-n\E{X}\ge n\delta}\le \e{\frac{-2n\delta^2}{(1 - (n-1)/N)(b-a)^2}}
\]

\[\prob{\sum_{i=1}^{n}X_{i}-n\E{X}\le -n\delta}\le \e{\frac{-2n\delta^2}{(1 - (n-1)/N)(b-a)^2}}
\]
\end{prop}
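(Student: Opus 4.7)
The plan is to recognize this as Serfling's concentration inequality for sampling without replacement (Serfling, 1974), and to sketch a proof via the Chernoff method combined with a martingale-type bound on the moment generating function.

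First I would reduce the two inequalities to one. The lower-tail bound follows from the upper-tail bound by applying the upper-tail result to the list $(-x_1, \ldots, -x_N)$, which still satisfies the boundedness assumption (now with bounds $-b \le -x_i \le -a$, so $(b-a)$ is preserved). So it suffices to prove the upper tail. By translation and rescaling it also suffices to handle $a = 0, b = 1$, since the target exponent is homogeneous of degree $-2$ in $(b-a)$ and shift-invariant in the $x_i$. Let $\mu = \E{X} = \frac{1}{N}\sum_i x_i$.

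Next I would apply the Chernoff bound: for any $\lambda > 0$,
\[
\prob{\sum_{i=1}^n X_i - n\mu \ge n\delta} \le e^{-\lambda n \delta}\, \E{\exp\!\Bigl(\lambda\sum_{i=1}^n (X_i - \mu)\Bigr)}.
\]
The core task is to bound the MGF on the right by something of the form $\exp\!\bigl(\tfrac{1}{8}\lambda^2 n (1-(n-1)/N)(b-a)^2\bigr)$. The $X_i$ are not independent, but under sampling without replacement the conditional distribution of $X_{k+1}$ given $X_1, \ldots, X_k$ is uniform on the remaining $N - k$ values of $\mathcal{X}$. Building the martingale differences $D_k = X_k - \E{X_k \mid X_1,\ldots,X_{k-1}}$, each $D_k$ is bounded in $[a-b, b-a]$, and the classical Hoeffding lemma applied conditionally yields $\E{e^{\lambda D_k} \mid \mathcal{F}_{k-1}} \le e^{\lambda^2 (b-a)^2/8}$. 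Telescoping and optimizing over $\lambda$ gives the Hoeffding-type bound without the improvement factor.

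To obtain the sharper factor $(1 - (n-1)/N)$, one uses the extra structure of sampling without replacement: the remaining-sum process has strictly shrinking conditional variance, which can be captured either by a careful direct analysis of $\E{e^{\lambda \sum_k D_k}}$ using the Hoeffding reduction to sampling with replacement (Theorem~4 of Hoeffding 1963) followed by a variance correction, or by constructing the Serfling martingale based on the normalized remaining sum $W_k = \frac{1}{N-k}\sum_{i>k} x_{\pi(i)}$ and applying Azuma-Hoeffding with differences bounded by $(b-a)/(N-k+1)$. Summing the squared bounds and rearranging produces exactly the factor $(1-(n-1)/N)$ in the denominator of the exponent. The main obstacle is bookkeeping: getting the precise constant $(1 - (n-1)/N)$ in place of the weaker factor $1$ that a naive independent-sample argument would yield. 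Once this is established, optimizing the Chernoff parameter $\lambda = \frac{4\delta}{(1-(n-1)/N)(b-a)^2}$ gives the stated exponent, completing the proof.
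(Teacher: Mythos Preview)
Your proposal goes well beyond what the paper does: the paper's ``proof'' of this proposition is simply the one-line citation ``See e.g.\ \cite{serfling1974}.'' No argument is given in the paper itself; the inequality is quoted as a known result. Your sketch, by contrast, outlines the actual Chernoff--martingale proof due to Serfling, including the symmetry reduction, the conditional Hoeffding lemma, and the key refinement that extracts the $(1-(n-1)/N)$ factor via the shrinking-range martingale. That outline is essentially correct and matches the standard derivation; the only caveat is that the passage where you invoke ``Azuma--Hoeffding with differences bounded by $(b-a)/(N-k+1)$'' needs the specific reverse-martingale construction from Serfling's paper to make the arithmetic $\sum_{k} (N-k+1)^{-2}$ collapse to the stated constant, which you gesture at but do not carry out. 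For the purposes of this paper, however, a citation suffices, and that is all the authors provide.
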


\begin{proof}
See e.g.  \cite{serfling1974}.
\end{proof}

\begin{prop}
\label{prop:justification-liveness}
Suppose the assumptions $A(\epsilon)$  are met. Assume that $C = 2^L$. Then the conjunction of the following events (we purposefully skip $h_0$):
\begin{enumerate}
\item [$E_1$]: $h_1 \geq 2 \cdot \frac{S}{3}$
\item [$E_2$]: $h_2 + h_3 \geq 2^2 \cdot \frac{S}{3}$
\item [$E_3$]: $h_4 + h_5 + h_6 + h_7 \geq 2^3 \cdot \frac{S}{3}$
\item [$\cdots$]: $\cdots$
\item [$E_L$]: $h_{C/2} +\cdots + h_{C-1} \geq 2^{L} \cdot \frac{S}{3}$
\end{enumerate}
has probability
\[
\mathbb{P}\left(\bigcap_{i=1}^{L}E_{i}\right)\geq 1 - \sum_{i=1}^L \e{-\frac{2^i\eps^2}{\left(1 - \frac{2^{i-1}S-1}{2^LS}\right)S}} \geq 1-\sum_{i=1}^{L}\exp\left( -\frac{2^{i}\epsilon^{2}}{S}\right).
\]

\end{prop}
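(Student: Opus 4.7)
The plan is to apply Serfling's inequality (Proposition~\ref{prop:serfling}) to each event $E_i$ separately and then combine them via a union bound.

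First I would unpack what each $E_i$ says in terms of averages. The event $E_i$ concerns the $2^{i-1}$ slots indexed from $2^{i-1}$ to $2^i - 1$, which collectively contain $n_i := 2^{i-1} S$ validators drawn without replacement from the full population of $N = 2^L S$ validators. Under the assumption $A(\epsilon)$, the proportion of honest validators is $\tfrac{2}{3} + \tfrac{\epsilon}{S}$, so the expected number of honest validators in these slots is
\[
n_i \cdot \left(\tfrac{2}{3} + \tfrac{\epsilon}{S}\right) = 2^i \cdot \tfrac{S}{3} + 2^{i-1}\epsilon.
\]
Thus the failure event $E_i^c$ is exactly the deviation
\[
\sum_{k=2^{i-1}}^{2^i-1} h_k - n_i \mathbb{E}[X] \;\le\; -\, 2^{i-1}\epsilon,
\]
i.e.\ a deviation of $n_i \delta_i$ below the mean with $\delta_i = \epsilon / S$. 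Because honest/byzantine indicators are Bernoulli $\{0,1\}$ variables sampled without replacement, Proposition~\ref{prop:serfling} applies with $b-a = 1$.

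Next I would plug into Serfling. Substituting $n = n_i = 2^{i-1}S$, $\delta = \epsilon/S$ and $N = 2^L S$ gives
\[
\mathbb{P}(E_i^c) \;\le\; \exp\!\left(-\frac{2 n_i \delta_i^{\,2}}{1 - (n_i-1)/N}\right) \;=\; \exp\!\left(-\frac{2^i \epsilon^{2}}{S\bigl(1-\tfrac{2^{i-1}S-1}{2^L S}\bigr)}\right).
\]
Then a union bound over $i = 1,\ldots,L$ gives the first inequality in the statement:
\[
\mathbb{P}\!\left(\bigcap_{i=1}^{L} E_i\right) \;\ge\; 1 - \sum_{i=1}^{L} \mathbb{P}(E_i^c) \;\ge\; 1 - \sum_{i=1}^L \exp\!\left(-\frac{2^{i}\epsilon^{2}}{S\bigl(1-\tfrac{2^{i-1}S-1}{2^L S}\bigr)}\right).
\]
The second (looser) inequality in the statement then follows simply by noting that $1 - \tfrac{2^{i-1}S-1}{2^L S} \le 1$, so dropping that factor from the denominator only enlarges each term in the sum.

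There is no real obstacle here; the proof is essentially bookkeeping once one identifies the correct correspondence between $E_i$ and a Serfling-type deviation event. The only mildly delicate point is lining up the mean $n_i\mathbb{E}[X] = 2^i S/3 + 2^{i-1}\epsilon$ with the threshold $2^i S/3$ so that the required deviation is precisely $2^{i-1}\epsilon$ (and hence $\delta_i = \epsilon/S$, independent of $i$), which is what makes the exponents clean powers of $2$. Everything else — the choice to sample without replacement, the Bernoulli range $[0,1]$, and the final weakening — is routine.
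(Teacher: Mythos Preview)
Your proposal is correct and follows essentially the same approach as the paper: model the honest-validator indicators as $\{0,1\}$ values sampled without replacement, compute the mean $2^{i}S/3 + 2^{i-1}\epsilon$ so that the threshold $2^{i}S/3$ corresponds to a downward deviation of $n_i\delta$ with $\delta=\epsilon/S$, apply Serfling (Proposition~\ref{prop:serfling}) to bound each $\mathbb{P}(E_i^c)$, and combine via a union bound, with the looser bound obtained by dropping the $1-(n_i-1)/N$ factor. This matches the paper's argument line for line.
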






\begin{proof}
 We use the hypergeometric distribution model; that is, we consider the set $\mathcal{X} = (x_1, \cdots, x_{N})$ representing the validators $\mathcal{V}$, where $x_j = 1$ if the $j$-th validator is honest and $0$ otherwise. Let $X_1,\ldots, X_{2^{i-1}S}$ be $2^{i-1}S$ samples from $\mathcal{X}$ without replacement. Then for each $j$, $1\leq j \leq 2^{i-1}S $, $\E{X_j} = 2/3 + \eps / S$, so:
\[
\E{\sum_{j=1}^{2^{i-1}S}X_j} = 2^{i-1}S(2/3+\eps/S) = 2^i S/3 + 2^{i-1}\eps.
\]
We can then bound each $E_i$ (as a sum of $X_j$'s) by
\begin{align*}
\prob{\sum_{j=1}^{2^{i-1}S} X_j \ge 2^i \frac{S}{3}} &= 1 - \prob{\sum_{j=1}^{2^{i-1}S} X_j < 2^i  \frac{S}{3}}\\
&= 1- \prob{\sum_{j=1}^{2^{i-1}S} X_j - \left( \frac{2^iS}{3} + 2^{i-1}\eps \right) < -(2^{i-1}S)\frac{\eps}{S}}\\
&\ge 1 - \e{-\frac{2^i\eps^2}{\left(1 - \frac{2^{i-1}S-1}{2^LS}\right)S}},\\
\end{align*}
 where the last inequality results from Proposition \ref{prop:serfling}, recalling that $N=2^LS$. The probability of each event $E_i$ when considered independently of each other is then bounded below by this value. Using the intersection bound on all $E_i$ we get the first inequality in the desired statement; the second inequality holds by comparing denominators.
\end{proof}

\begin{remark}
 It suffices to also use e.g. Hoeffding's inequality with a binomial model where the validators are assigned as honest or byzantine with replacement. We leave it as an exercise to the reader that this method immediately gets the weaker bound
\[
\mathbb{P}\left(\bigcap_{i=1}^{L}E_{i}\right)\geq1-\sum_{i=1}^{L}\exp\left( -\frac{2^{i}\epsilon^{2}}{S}\right).
\]
While the binomial model is only an approximation, the model with replacement is strictly more mean-reverting than the model without, so the approximation is in the correct direction for us. This could be made rigorous with e.g. coupling methods. We use the weaker bound because it is algebraically cleaner and loses very little compared to the stronger bound.

It can be seen that the exponential terms in the result of proposition \ref{prop:justification-liveness} is fairly small when $\epsilon$ is on the order of $\sqrt{S}$. For example, when $C=64$ (a power of $2$ actually makes Proposition~\ref{prop:justification-liveness} work cleanly, though it is certainly not a crucial part of the bound) and thus $S = 900$, picking $\epsilon = 30$ (meaning we have as many as $17280$ byzantine validators out of $57600$) gives a probability bound of around $85\%$; changing $\epsilon$ to $40$ jumps the probability to around $97\%$.
\end{remark}


\begin{lemma} \label{lem:justification-stays-on-chain}
Suppose the assumptions in $A(\epsilon)$ are met.
 Then, if $\view(\network, T+C)$ justifies a new block $B$ not in $J(\view(\network, T))$, $B$ must be a descendant of the last justified block in $J(\view(\network, T))$.
\end{lemma}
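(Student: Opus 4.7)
The plan is to trace how any newly-justified pair arises and argue that, under the assumptions $A(\epsilon)$, its supporting supermajority link can only be filled by attestations pointing to descendants of $B_J$. Let $(B_J, j^*)$ denote the last justified pair in $J(\view(\network, T))$; since $A(\epsilon)$ keeps byzantine stake strictly below $N/3$, Lemma~\ref{lem:unique-jep} makes this pair unique and, more generally, lets us assume no two justified pairs share an epoch in any relevant view.

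First I would argue that any pair in $J(\view(\network, T+C)) \setminus J(\view(\network, T))$ must have attestation epoch $j$. The only new messages between $T$ and $T+C$ carry timestamps in epoch $j$, so any new attestation $\alpha$ satisfies $\epoch(\alpha) = j$ and can only contribute to a supermajority link whose target pair has second coordinate $j$; a pair that is new only because its source became newly justified still recurses to a genuinely new link, which again sits at epoch $j$, and by Lemma~\ref{lem:unique-jep} such a pair is unique. Hence the new supermajority link is $(A, j'') \supervotes (B, j)$ with $(A, j'') \in J(\view(\network, T))$ and $j'' \le j^*$. Since it needs $2N/3$ stake and byzantine stake is at most $N/3 - C\epsilon$, at least $N/3 + C\epsilon$ of honest stake is responsible for attestations $\alpha$ with $\lastepochpair(\alpha) = (B, j)$; for each such honest $\alpha$, we have $\block(\alpha) = \mlmd(\view(V, T'))$ for some $T' \in [T, T+C]$ and $B = \lastepoch(\block(\alpha))$.

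I would then invoke $(1/2)$-synchrony at $T$: every message timestamped before $T - 1/2$ lies in every honest view throughout $[T, T+C]$, in particular the epoch-$j^*$ attestations that justified $(B_J, j^*)$. Using the structure of Algorithm~\ref{alg:hlmd-ghost}, the starting justified pair chosen by $\mlmd$ on any honest view has attestation epoch at most $j^*$, because the FFG views $\ffgview(B_l) = \view(\lastepoch(B_l))$ used in the algorithm only expose pre-$jC$ information; by Lemma~\ref{lem:unique-jep} any pair of epoch $j^*$ appearing in such an FFG view must coincide with $(B_J, j^*)$. Once an honest proposer early in $[T, T+C]$ bundles the pending attestations that justify $(B_J, j^*)$, that pair enters the relevant FFG views and becomes the unique starting pair; the filtering step $L'$ in Algorithm~\ref{alg:hlmd-ghost} then forces $\block(\alpha)$ into a chain descending from $B_J$, so $B = \lastepoch(\block(\alpha))$ is $B_J$ or a descendant of $B_J$, as desired.

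The main obstacle is precisely this last step: verifying that the starting pair of $\mlmd$ on honest views is $(B_J, j^*)$ throughout $[T, T+C]$, rather than a spurious older or forked pair. Lemma~\ref{lem:unique-jep} rules out a conflicting pair of the same epoch surviving in any leaf's FFG view, but the fact that $(B_J, j^*) \in J(\ffgview(B_l))$ depends on an honest proposer having included the justifying attestations into $\newattests$ of some block on $\chain(\lastepoch(B_l))$; accounting for the few early slots of $[T, T+C]$ where this may not yet have happened and absorbing the resulting honest-vote slippage into the cushion $C\epsilon$ from $A(\epsilon)$ is the delicate piece of bookkeeping.
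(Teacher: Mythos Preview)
Your approach differs substantially from the paper's, which is much shorter. The paper does not trace the HLMD mechanics or perform the bookkeeping you describe. After observing (via the $(1/2)$-synchrony assumption and the half-slot attestation delay) that every honest attestor's view during $[T+1/2, T+C]$ contains $\view(\network, T)$, the paper simply asserts that ``by the nature of Algorithm~\ref{alg:hlmd-ghost}'' honest attestations must land on either a descendant of $B_J$ or a descendant of some block newly justified in epoch $j$. It then closes with a chicken-and-egg argument: a candidate $B_J'$ that does not descend from $B_J$ cannot collect $2N/3$ votes, because until $B_J'$ is already justified no honest attestation will target its subtree; hence it never crosses the threshold and never becomes a competing starting pair.

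You correctly flag the point the paper glosses over: whether $(B_J, j^*)$ actually appears as the HLMD starting pair depends on whether the justifying attestations have been included in blocks on the relevant chains, since HLMD inspects $J(\ffgview(B_l))$ rather than $J(\view(V))$. The paper does not verify this; it folds it into the phrase ``by the nature of Algorithm~\ref{alg:hlmd-ghost}.'' Your proposed resolution---waiting for an honest proposer to package the attestations and absorbing early-slot slippage into the $C\epsilon$ cushion---is not what the paper does and, as you concede, is left as unfinished bookkeeping. The paper's circularity framing is designed to sidestep exactly this tracking: one need not establish that every honest HLMD run starts from $(B_J, j^*)$ once one argues that no non-descendant can ever accumulate the votes needed to supplant it. That said, both arguments ultimately lean on the same informal claim about what Algorithm~\ref{alg:hlmd-ghost} outputs; the paper just packages it more economically.
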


\begin{proof}
Recall that honest attestors wait for $1/2$ time before attesting to their assigned slot. Thus, in the upcoming epoch $j$, all of the honest attestors in this epoch attest in the time period $[T+1/2, T+C]$. Because we have $(1/2)$-synchrony, it means their views during this period all include $\view(\network, T)$. By the nature of Algorithm~\ref{alg:hlmd-ghost}, this means all of their block proposals and attestations must be to either:
\begin{itemize}
\item a descendant of $B_J$, the last justified block in $\view(\network, T)$, or
\item a descendant of some new block justified in epoch $j$.
\end{itemize}
Our Lemma only fails in the second case, and if some new block in this epoch obtains $2/3$ of the attestations of this epoch. 

Luckily, the chicken-and-egg favors us. It is possible for byzantine validators to propose a new block $B_J'$ that's not a descendant of $B_J$. However, it would be impossible for $B_J'$ to receive enough votes in this epoch, as all of the attestations before $B_J'$ is justified must go to a descendant of $B_J$ by Algorithm~\ref{alg:hlmd-ghost}. Thus, we know that if we justify a new block, it must be a descendant of $B_J$.
\end{proof}

\begin{theorem}[Justification Probabilistic Liveness] \label{thm:justification-liveness}
Suppose the assumptions in $A(\epsilon)$ are met, and suppose we win the equivocation game corresponding to the first slot with probability $r$. Let $B_J$ be the last justified block in $J(\view(\network, T))$. Then, $\view(\network, T+C)$ will justify a new descendant of $B_J$ with probability at least
\[
\ensuremath{r-\sum_{i=1}^{L}\exp\left( -\frac{2^{i}\epsilon^{2}}{S}\right) - \frac{1}{3^{C-1}}}.
\]
\end{theorem}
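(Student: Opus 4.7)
The plan is to express the success event as the intersection of three sub-events, one for each term subtracted in the bound, and to apply Lemma \ref{lem:justification-stays-on-chain} to reduce the goal to showing that \emph{some} new epoch boundary pair is justified in epoch $j$ (since any such pair must automatically be a descendant of $B_J$). I would then bound the three failure probabilities separately and combine via the union bound.

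First, the equivocation game: by hypothesis, with probability at least $r$ we obtain after slot $0$ a block $B^{\star}$ (a descendant of $B_J$, with $(B^{\star}, j)$ serving as the candidate epoch boundary pair) such that the first-slot committee contributes at least $2S/3$ of its stake to attestations whose $\lastepochpair$ equals $(B^{\star}, j)$, while no conflicting candidate can match this. Under $(1/2)$-synchrony, every subsequent honest attestor sees this lead as part of its view and, by running Algorithm \ref{alg:hlmd-ghost}, must begin the GHOST descent from $B_J$ and follow the heavier subtree rooted at $B^{\star}$.

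Second, I would invoke Proposition \ref{prop:justification-liveness}: with probability at least $1 - \sum_{i=1}^{L}\exp(-2^{i}\epsilon^{2}/S)$, the dyadic events $E_{1},\ldots,E_{L}$ all hold simultaneously, so the honest attestation counts on the slot-blocks $\{1\},\{2,3\},\{4,\ldots,7\},\ldots,\{C/2,\ldots,C{-}1\}$ are each at least $2^{i}S/3$. Summing the geometric series $\sum_{i=1}^{L}2^{i}S/3$ gives a total of $2N/3 - 2S/3$ honest attestations across slots $1,\ldots,C-1$. Added to the $\geq 2S/3$ stake already attesting for $(B^{\star}, j)$ from slot $0$, this totals $\geq 2N/3$ stake, enough to create the supermajority link $\lastjustpair(B^{\star}) \supervotes (B^{\star}, j)$ and thereby justify $(B^{\star}, j)$.

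The third failure event concerns whether these honest attestations actually land on $B^{\star}$'s subtree: the worry is that byzantine proposers in slots $1,\ldots,C-1$ could publish blocks attempting to split the honest vote or drag attestors onto a sibling chain. Since proposer selection is pseudorandom and the byzantine fraction is strictly below $1/3$, the probability that every one of the $C-1$ proposers after slot $0$ is byzantine is at most $1/3^{C-1}$; outside this event, at least one honest proposer extends $B^{\star}$'s chain, securing the HLMD GHOST head inside $B^{\star}$'s subtree for every subsequent honest attestor (whose $\lastepochpair$ then remains $(B^{\star}, j)$). A union bound over these three bad events yields the claimed lower bound $r - \sum_{i=1}^{L}\exp(-2^{i}\epsilon^{2}/S) - 1/3^{C-1}$.

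The hard part is the third step: making rigorous the claim that, under HLMD GHOST's filtering via $L'$ and its weight-based descent, the $2S/3$ stake differential established in slot $0$ cannot be eroded by adversarial byzantine attestors in later slots so long as at least one non-initial proposer is honest. This requires carefully tracking the maximum possible weight the byzantine budget can place on conflicting subtrees across the $C-1$ remaining slots and confirming it stays strictly below the cumulative honest weight on $B^{\star}$'s subtree at every intermediate slot, not merely at the end. The first two steps are essentially concentration-of-measure calculations supplied by Proposition \ref{prop:justification-liveness} and the equivocation game hypothesis; the third is where the interaction between HLMD GHOST, Lemma \ref{lem:justification-stays-on-chain}, and the adversarial model must be handled delicately.
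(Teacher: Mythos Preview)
Your decomposition into three events with a union bound is the right skeleton, and the arithmetic summing the geometric series to reach $2N/3$ total stake matches the paper. However, you have misassigned the role of the $1/3^{C-1}$ term, and this leaves a genuine gap.

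In the paper, the $1/3^{C-1}$ term has nothing to do with proposers in slots $1,\ldots,C-1$ or with ``securing the HLMD GHOST head.'' It addresses a separate concern: ensuring that the winning block $B_w$ from the equivocation game is actually a \emph{new} block strictly descending from $B_J$, rather than $B_J$ itself. If no honest validator ever proposed a block on top of $B_J$, the equivocation game could be ``won'' by $B_J$ and we would merely re-justify it. The bound $(1/3)^{C-1}$ is the probability that a stretch of $C-1$ proposers are all byzantine, so that no fresh block appears.

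The concern you attach to that term---that byzantine attestors (or proposers) in later slots might erode the lead and pull honest attestors off $B_w$'s subtree---is handled entirely by the events $E_1,\ldots,E_L$ and requires no honest proposer in slots $1,\ldots,C-1$ at all. The paper's argument is a dyadic induction on the weight \emph{differential}: after slot $0$ the lead is at least $S/3$ (not $2S/3$; $B_w$ has $\geq 2S/3$ votes and a rival has $\leq S/3$). Event $E_1$ forces $b_1 \leq S/3$, so byzantine attestors in slot $1$ cannot flip the GHOST choice; honest attestors therefore land on $B_w$'s subtree, and the lead grows to $\geq 2S/3$. Event $E_2$ forces $b_2+b_3 \leq 2S/3$, which cannot overturn the $2S/3$ lead over slots $2,3$; the lead grows to $\geq 4S/3$. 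In general, entering dyadic block $i$ the lead is $\geq 2^{i-1}S/3$, the byzantine budget there is $\leq 2^{i-1}S/3$, and the lead exits at $\geq 2^i S/3$. This is precisely the ``hard part'' you flag, and it is independent of whether any later proposer is honest---honest attestors can simply attest to $B_w$ itself. Your sketch conflates two distinct failure modes and therefore does not actually supply the mechanism that keeps honest attestations on $B_w$.
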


\begin{proof}
We can think of the first slot of this epoch $[T, T+1]$ as an equivocation game with $S$ validators ($h_0$ honest and $b_0$ byzantine) where all of the options are $B_J$ or a descendant\footnote{As stated in the proof of Lemma~\ref{lem:justification-stays-on-chain}, it is possible that there are options that are not, but they will never receive attestations from honest validators, so it would be strictly worse for dishonest validators to create such options.} of $B_J$. Thus, with probability $r$, one of these options receives at least $2S/3$ attestations after slot $jC$. We call this winning block $B_w$ (ideally, $B_w$ is simply just the block $jC$, though this is not necessary). For sake of weighting in Algorithm~\ref{alg:hlmd-ghost}, note that against any other option in the equivocation game, $B_w$'s weight is winning by at least $(2S/3 - S/3) = S/3$. 

By the intersection bound, with probability at least 
\[
\ensuremath{r-\sum_{i=1}^{L}\exp\left( -\frac{2^{i}\epsilon^{2}}{S}\right) }
\]
we also satisfy the events in Proposition~\ref{prop:justification-liveness}. We now show that when these events are satisfied, \textbf{all remaining honest validators in slots $(jC+1), \ldots, (jC+C-1)$ will vote for $B_w$ or a descendant.}

Consider slot $jC+1$. Because of $E_1$, we know $b_1 < S/3$, so even if all $b_1$ potentially byzantine actors conspire to vote for some other option, the weight advantage of $B_w$ cannot be diminished to $0$ by the time the honest validators vote at time $(jC+1.5)$. This means that all honest validators will keep attesting to $B_w$ or a descendant block (an honest proposer in slot $jC+1$ would propose block $jC+1$ as a descendant of $B_w$, for example). By $E_1$, we know that $B_w$ gains at least $2S/3$ weight while a rival block gains at most $S/3$ weight during this slot, which means that the weight differential preferring $B_w$ changes by at least $(2S/3 - S/3) = S/3$. This means by the end of slot $jC+1$, $B_w$ is now winning with at least weight $S/3 + S/3 = 2S/3.$ 

Now consider $E_2$ and the next $2$ slots, $jC+2$ and $jC+3$. Between them, we know $h_2 + h_3 < 2S/3$, so even if all the byzantine actors conspire, they cannot destroy the winning differential of $B_w$, which by the end of these $2$ next slots will be winning with at least weight $(2S/3 - 2S/3 + 4S/3) = 4S/3.$

Inductively, this logic continues for all remaining slots in the epoch (specifically, all honest validators attest to $B_w$ or a descendant. Here the structure of Algorithm~\ref{alg:hlmd-ghost}  is important because while honest attestors may (and probably will) attest to new blocks, the weight of their attestations are added to that of $B_w$ as well.

We have thus concluded that during all remaining slots, $B_w$ accumulates all attestations from honest validators. After slot $jC$, it has at least $2S/3$ votes with probability $r$. Then, we know it picks up weight at least $h_1 + h_2 + \cdots + h_{C-1}$, for a total weight of 
\[
\frac{2}{3}S+\sum_{i=1}^{L}\frac{2^{i}}{3}S=\frac{2}{3}S+\frac{2}{3}\left(2^{L}-1\right)S =\frac{2^{L+1}}{3}S=\frac{2}{3}N, 
\]
so we indeed achieve enough weight for a supermajority link.




To prove $B_J$ is different from $B_w$, we need at least one honest validator to propose a new block on top of $B_J$ before or in the first slot of epoch $\attestepoch(B_w)$. Since $B_J$ receives more than $\frac{2}{3}$ of votes in epoch $\attestepoch(B_j)$, there are more than $\frac{2}{3}$ honest validators in $\attestepoch(B_j)$. The chance of \textbf{no} honest validator proposing a block on top of $B_J$ in $\attestepoch(B_j)$ is then bounded above by $(\frac{1}{3}) ^{C-1}$, which is vanishingly small.
\end{proof}

\subsection{Probabilistic Justification Leads to Probabilistic Finalization}
\label{sec:prob-just-to-prob-final}

The main theorem is the following:

\begin{theorem}[Finalization Probabilistic Liveness] \label{thm:finalization-liveness}
Assume the probability of justifying a block (as in Proposition~\ref{prop:justification-liveness}) is independently $p \geq 1/2$ for each epoch, the probability of failing to finalize a block in the next $n$ epochs approaches $0$ exponentially as a function of $n$.



\end{theorem}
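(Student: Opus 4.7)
The plan is to reduce this to a classical statement about runs in a sequence of independent Bernoulli trials. The first step is to connect \emph{two consecutive epoch-level justifications} to actual $1$-finalization: following the argument of Theorem~\ref{thm:plausible-liveness}, if a new block is justified in epoch $j$ and another new block is justified in epoch $j+1$, then Lemma~\ref{lem:justification-stays-on-chain} forces the latter to sit on the chain of the former, so the supermajority link that witnesses the epoch-$(j+1)$ justification is exactly of the form $(B_0, j) \supervotes (B_1, j+1)$ required by Definition~\ref{def:finalization} for $k=1$. Hence two consecutive ``success'' epochs already finalize a block.

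Let $J_j$ be the indicator that a new block is justified in epoch $j$. By hypothesis, $J_1, \ldots, J_n$ are i.i.d.\ Bernoulli$(p)$ with $p \geq 1/2$, and the event ``no block is finalized in epochs $1, \ldots, n$'' is contained in the event $E$ that no two consecutive $J_j$'s both equal $1$. To bound $\Pr(E)$ I would partition $\{1, 2, \ldots, n\}$ into $\lfloor n/2 \rfloor$ disjoint consecutive pairs $(1,2), (3,4), \ldots$. For each pair, the probability that both indicators equal $1$ is $p^2 \geq 1/4$, and the pairs are independent of each other, so
\[
\Pr(E) \;\leq\; (1 - p^2)^{\lfloor n/2 \rfloor} \;\leq\; (3/4)^{\lfloor n/2 \rfloor},
\]
which decays exponentially in $n$, as required.

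The main obstacle is not the probability estimate (which is textbook) but rather making the reduction in the first paragraph airtight: one must verify that whenever two successive epochs both justify a new block, the two newly justified pairs are the adjacent epoch boundary pairs on a common chain, so that the epoch-$(j+1)$ justification actually witnesses a $1$-finalization of the epoch-$j$ pair rather than living on some forked branch. Lemma~\ref{lem:justification-stays-on-chain} supplies the needed chain-compatibility under the synchrony assumptions baked into ``probability $p$'', and the stability discussion at the start of the proof of Theorem~\ref{thm:plausible-liveness} handles the minor bookkeeping about identifying the block that plays the role of $B_0$ in Definition~\ref{def:finalization}.
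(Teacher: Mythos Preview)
Your proposal is correct and reaches the same conclusion as the paper, but the probability estimate follows a genuinely different and more elementary route. The paper computes $\Pr(E)$ exactly as a sum $\sum_{i \geq n/2} \binom{i+1}{n-i}(1-p)^i p^{n-i}$, recognizes the coefficient sum as the Fibonacci number $F_n$, and then bounds using the Binet formula to get decay like $\bigl(\tfrac{1+\sqrt{5}}{4}\bigr)^n \approx (0.809)^n$. Your disjoint-pairs argument is a one-liner and yields $(1-p^2)^{\lfloor n/2 \rfloor} \leq (3/4)^{\lfloor n/2 \rfloor} \approx (0.866)^n$, which is a slightly looser base but still exponential, and that is all the theorem asks for. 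What you lose in sharpness you gain in transparency: no combinatorial identity or Fibonacci asymptotics are needed.

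On the reduction step, you are actually more careful than the paper. The paper simply asserts ``if 2 adjacent epochs are successful, we finalize a block'' and moves on; you correctly flag that one must check the two newly justified pairs lie on a common chain as adjacent epoch boundary pairs, and you point to Lemma~\ref{lem:justification-stays-on-chain} for this. That is the right lemma to invoke, and it is worth noting that the paper's own proof implicitly relies on the same synchrony assumptions packaged into Theorem~\ref{thm:justification-liveness} (which is where the probability $p$ comes from) even though Theorem~\ref{thm:finalization-liveness} is advertised as agnostic of how $p$ was obtained.
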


\begin{proof}

Recall that we expect most finalization to be $1$-finalization, when one justified block justifies an adjacent epoch boundary block. For this proof, it suffices to show that the probability of failing to even $1$-finalize approaches $0$.

We consider an epoch a ``success'' if a block is justified in the epoch, and a ``failure'' otherwise. Thus, if $2$ adjacent epochs are ``successful,'' we finalize a block. Therefore, the probability of not getting a ($k=1$) finalization  in $n$ epochs is the probability that no $2$ adjacent epochs out of the next $n$ are successful. Using the independence assumption, this has the probability
\[
\sum_{i\ge\frac{n}{2}}^n \binom{i+1}{n-i} (1-p)^i p^{n-i},
\]
because for a particular $i \geq n/2$, there are $\binom{i+1}{n-i}$ ways to select $i$ failures.

Since $p \geq 0.5$, we know $p \geq (1-p)$, so the sum is bounded above by 
\[
\left( \sum_{i=n}^{\left\lfloor \frac{n}{2}\right\rfloor }\binom{i+1}{n-i}\right) (1-p)^{n/2}p^{n/2}
\]
It is well-known (by e.g. induction) that: 
\[
\sum_{i=n}^{\left\lfloor \frac{n}{2}\right\rfloor }\binom{i+1}{n-i}=\binom{n+1}{0}+\binom{n}{1}+\binom{n-1}{2}+\cdots=F_{n},
\]
the $n$-th Fibonacci number, which we know is of the form
\[
F_{n}= \frac{1}{\sqrt{5}}\left[\frac{\left(1+\sqrt{5}\right)}{2}\right]^{n} -\frac{1}{\sqrt{5}}\left[\frac{\left(1-\sqrt{5}\right)}{2}\right]^{n}
\]

The second term vanishes as $n\xrightarrow{} \infty$, so our desired quantity is bounded above by (times a constant)
\[
\left[\frac{\left(1+\sqrt{5}\right)\sqrt{p\left(1-p\right)}}{2}\right]^{n}
\]
We have the bound $\sqrt{p\left(1-p\right)} \le 1/2$ (by, e.g. AM-GM inequality), so the failure rate is bounded above by
\[
\frac{1}{\sqrt{5}}\left(\frac{1+\sqrt{5}}{4}\right)^{n},
\]
which goes to $0$ as $n\xrightarrow{}\infty$. Finally, recall that we are only looking at $1$-finalization, so the chances of finalization is strictly higher (though in practice the other types of finalization should occur very rarely).

\end{proof}

\begin{figure}[H]
\begin{center}
\begin{tikzpicture}
[edge from parent/.style={draw,latex-},sibling distance=10mm, level distance=10mm]
\node [draw]{root}
	child {node [draw] {S}
		child {node [draw] {F}
		  child {node [draw] {S}
			child {node [draw] {F}
				child {node [draw] {F}}}}}};

\node [draw]at (2,0){root}
	child {node [draw] {S}
		child {node [draw] {F}
		  child {node [draw] {F}
			child {node [draw] {S}
				child {node [draw] {F}}}}}};	

\node [draw]at (4,0){root}
	child {node [draw] {S}
		child {node [draw] {F}
		  child {node [draw] {F}
			child {node [draw] {F}
				child {node [draw] {S}}}}}};				

\node [draw]at (6,0){root}
	child {node [draw] {F}
		child {node [draw] {S}
		  child {node [draw] {F}
			child {node [draw] {S}
				child {node [draw] {F}}}}}};

\node [draw]at (8,0){root}
	child {node [draw] {F}
		child {node [draw] {S}
		  child {node [draw] {F}
			child {node [draw] {F}
				child {node [draw] {S}}}}}};				
				
\node [draw]at (10,0){root}
	child {node [draw] {F}
		child {node [draw] {F}
		  child {node [draw] {S}
			child {node [draw] {F}
				child {node [draw] {S}}}}}};
				
\end{tikzpicture}
\end{center}
    \caption{The cases where we \textbf{fail} to finalize a block in $n=5$ epochs with $3$ failing epochs. ``S'' denotes success and ``F'' denotes failsure (with respect to Proposition~\ref{prop:justification-liveness}.}
    \label{fig:fail_final}
\end{figure}
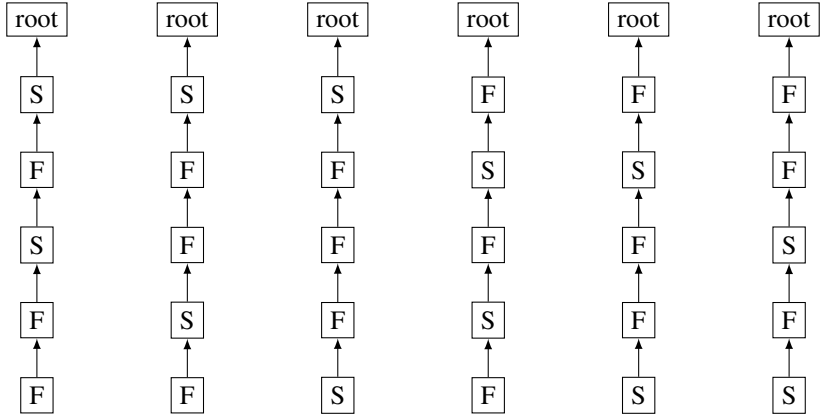

Theorem~\ref{thm:finalization-liveness} is ``agnostic'' of the justification probability $p$ and how we obtained it. This means if we tweak the protocol, change the assumptions, etc. and obtain different bounds/estimates for $p$ than we do from the current Theorem~\ref{thm:justification-liveness}, our result still holds. For this reason, we use $p$ as an \textbf{input} to  Theorem~\ref{thm:finalization-liveness} instead of reusing our actual bound in Theorem~\ref{thm:justification-liveness}. A secondary effect of abstracting away justification liveness is that \textbf{the techniques can be used to prove finalization probabilistic liveness for general $2$-stage protocols, not just $\mainprotocol$} (it could even be extended easily to protocols with $3$ or more stages, though the bounding would be different). Regardless of the protocol, as $n$ increases, as long as $p \geq 0.5$, the probability of finalizing a block increases rapidly, getting around $99\%$ even for $p = 0.5$ after $n=20$ epochs. See Table~\ref{table:no-finalization} for some computations.

\begin{table}[H]
\centering
\caption{The probabilities of \textbf{not} finalizing blocks in $n$ epochs, with $p$ the probability of justification within each epoch.}
\label{table:no-finalization}
\begin{tabular}{|l|l|l|}
\hline
n  & p    & probability of not finalizing any block in n epochs \\ \hline
2  & 0.5  & 0.75                                                \\ \hline
5  & 0.5  & 0.40625                                             \\ \hline
7  & 0.5  & 0.265625                                            \\ \hline
10 & 0.5  & 0.140625                                            \\ \hline
20 & 0.5  & 0.016890525817871094                                \\ \hline
2  & 0.66 & 0.5644                                              \\ \hline
5  & 0.66 & 0.18460210239999997                                 \\ \hline
7  & 0.66 & 0.08322669164799996                                 \\ \hline
10 & 0.66 & 0.025351233503186934                                \\ \hline
20 & 0.66 & 0.0004854107646743359                               \\ \hline
\end{tabular}
\end{table}

\begin{remark}[Relationship with Plausible Liveness] It may feel like we technically get plausible liveness ``for free'' from probabilistic liveness. So why do we treat them separately in this paper? 

For one, plausible liveness is an immediate consequence from the rules of the protocol $\mainprotocol$ and requires very few assumptions, while probabilistic liveness is dependent on probabilistic assumptions (such as network synchrony) and is more fragile with slight changes to the protocol. In particular, our probabilistic liveness treatment makes the very strong assumption that all validators have the same stake. Thus, we choose to present plausible liveness separately to emphasize that even if these assumptions are not satisfied ``in real life,'' plausible liveness remains. 

For another, the emphasis of plausible liveness is that ``honest validators will never be required by the protocol to voluntarily slash themselves to continue'' and the emphasis of probabilistic liveness is ``new blocks will probably be justified/finalized quickly;'' these are different takeaways. 
\end{remark}

\section{Practice versus Theory}
\label{sec:practice-vs-theory}

Ethereum 2.0's practical implementation in~\cite{beacon} contains different design decisions  from $\mainprotocol$, which is meant to be a ``clean'' and more mathematically tractable protocol that captures the theoretical core of the beacon chain design. In this section, we consider the proposed implementation's differences from $\mainprotocol$, without getting lost in the much messier analyses a rigorous study of combining all of these details would require.

\subsection{Sharding}

$\mainprotocol$ is motivated by the Ethereum 2.0 \emph{beacon chain}, which is the ``main'' blockchain in the Ethereum 2.0 design that stores and manages the registry of validators. In this implementation, a \emph{validator} is a registered participant in the beacon chain. Individuals can become a validator of the beacon chain by sending Ether into the Ethereum 1.0 deposit contract. As in $\mainprotocol$, validators create and attest to blocks in the beacon chain. Attestations are simultaneously proof-of-stake votes for a beacon block (as in our design) but also availability votes for a ``shard block,'' which contains data in a different ``shard chain.'' This concept of  \emph{sharding} creates interesting engineering and mathematical questions outside the scope of our paper, which is limited to the beacon chain. 

\subsection{Implementing the View}



In $\mainprotocol$, we can treat views and related concepts as abstract mathematical concepts and validators as perfectly reasoning agents with infinite computational power. In practice, validators will not be directly reasoning with a graphical data structure of a view; instead, they will use software to parse the view given to them and follow the protocol. Thus, in the actual implementation \cite{beacon}, validators run a program that updates the ``store'', which is basically a representation of the view. The store, as the input for LMD GHOST fork choice rule, is updated whenever a block or an attestation is received. The beacon chain also keeps track of a ``state,'' a derived data structure from the view that tracks stake-related data. 

Obviously, mistakes when interpreting these structures and their updates may cause issues with safety and liveness not related to those coming from $\mainprotocol$ itself. Even though in our work we limit our analysis to the mathematical parts of the protocol, we remind the reader these other issues are also important; security holes arising from from a carelessly implemented protocol at the software level are not protected by the mathematical guarantees of $\mainprotocol$.


\subsection{Attestation Inclusion Delay}
\label{sec:attestation_inclusion_delay}

In $\mainprotocol$, when a validator proposes a block, he/she includes all new attestations in his/her view. In the planned implementation, there is an integer parameter for the \emph{attestation inclusion delay} (say $n$) such that when a validator proposes a block, he/she only includes attestations that were made at least $n$ slots ago. 

The purpose of the attestation inclusion delay is to prevent centralization and reward advantage. If the slot length is ``short'' (meaning the latency of attestation propagation is high for normal nodes in the network), then highly-connected nodes might be able to get attestation data faster and be able to publish them faster, which may correspond to various advantages depending on the reward incentives.

The enforced delay allows attestations to disseminate more widely before they can be included in blocks. Thus, nodes have more of an equal opportunity to capture the attestation inclusion rewards for proposing blocks. The plan is to tune the attestation inclusion delay (in the range of $1$ to $4$ slots) depending on real-world network data. Smaller values improve the transaction processing speed, and larger values improve decentralization. Note increasing the slot time is an alternative method to promote decentralization.

\subsection{Attestation Consideration Delay}
\label{sec:attestation_consideration_delay}

In $\mainprotocol$, when a validator is supposed to attest at slot $N$, the validator runs the HLMD GHOST fork-choice rule considering all attestations in the validator's view. In the planned implementation, he/she only considers attestations that are at least $1$ slot old. Specifically, the view used as input to the fork-choice rule contains all valid blocks up to slot $N$ but only contains valid attestations up through slot $N-1$.

This one-slot delay protects validators from a certain class of timing attacks in which byzantine validators eagerly broadcast slot $N$ attestations rather than waiting for the $N+1/2$ time when they are ``supposed'' to attest. This allows them to theoretically take advantage of the network latency to split the vote of the honest validators to keep the chain in a state of equivocation. This is an important attack to counter as publishing attestations at the ``wrong'' time in a block is not a slashable offense (since the adversary can theoretically fake timestamps and we did not assume $\mainprotocol$ has mechanisms that make faking timestamps harder). 

We cover the one-shot delay attack in our analysis of the \emph{equivocation game} (see Appendix~\ref{sec:equivocation-game}), and conclude it is plausible for the attack to be effective in very pessimistic regimes. However, the probability byzantine validators succeed in this particular type of attack is greatly lowered even in pessimistic regimes if we implement something like the attestation consideration delay.

Specifically, this delay improves the parameters for probabilistic liveness as analyzed in Theorem~\ref{thm:justification-liveness}. As long as an honest validator is selected to propose a block (with probability at least $2/3$ in the worst case) $B$, then all honest validators will vote for $B$ in the first slot since they will ignore all the attestations made in the first slot, including those made by the dishonest validators. In other words, the attestation consideration delay neutralizes the $1$-st slot equivocation game as a point of attack, meaning the probabilistic liveness of the proposed implementation comes with even better guarantees than raw $\mainprotocol$. A rigorous analysis of the probabilistic liveness for this setup may make for interesting future work.

\subsection{A Four-Case Finalization Rule}
\label{sec:practice-finalization}

While $\mainprotocol$'s Definition~\ref{def:finalization} captures the ``general'' version of the mathematical idea of finalization, the proposed implementation uses a ``reduced'' version only looking at the last $4$ epochs for practicality. In particular, let $B_1, B_2, B_3, B_4$ be epoch boundary blocks for consecutive epochs, with $B_4$ being the most recent epoch boundary block.


\begin{enumerate}
    
    \item If $B_1, B_2$ and $B_3$ are justified and the attestations $\alpha$ that justified $B_3$ have $LJ(\alpha)=B_1$, we finalize $B_1$.
    
    \item If $B_2, B_3$ are justified and the attestations $\alpha$ that justified $B_3$ have $LJ(\alpha)=B_2$, we finalize $B_2$.
    
    \item If $B_2, B_3$ and $B_4$ are justified and the attestations $\alpha$ that justified $B_4$ have $LJ(\alpha)=B_2$, we finalize $B_2$. 
    
    \item If $B_3, B_4$ are justified and the attestations $\alpha$ that justified $B_4$ have $LJ(\alpha)=B_3$, we finalize $B_3$. 

\end{enumerate}

These are special cases of Definition~\ref{def:finalization} for cases $k=1$ (the first two) and $k=2$. The idea here is Ethereum 2.0 will only honor attestations for up to $2$ epochs. This means only epoch boundary blocks up to $2$ epochs in the past can be newly justified (and thus finalized), which gives the $4$ cases. See Figure~\ref{fig:finalization-real-cases}.

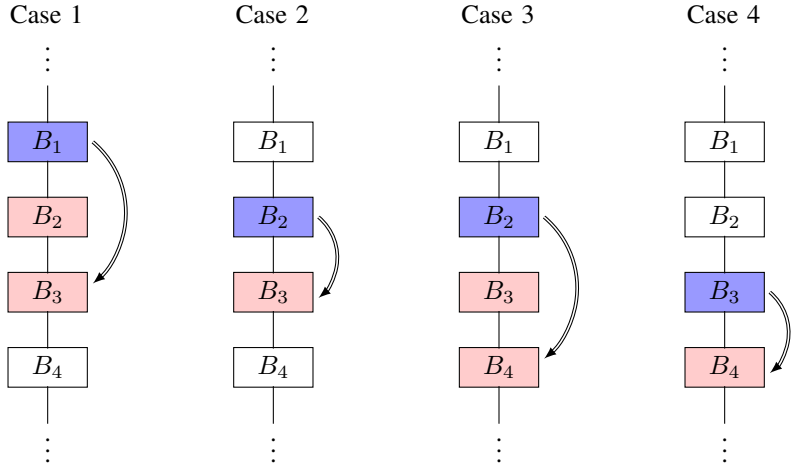
\begin{figure}[H]
\begin{center}

\tikzset {
    basic/.style = {rectangle, minimum width=3em, minimum height = 1.3em},
    root/.style = {basic, thin, align=center},
    block/.style = {draw, basic, thin, align=center}
}
\begin{tikzpicture} [
    level 1/.style={sibling distance = 30mm},
    edge from parent/.style={}, >=latex]

\node  at (-4.5, -0.3)  {\vdots};
\node  at (-1.5, -0.3)  {\vdots};
\node  at (1.5, -0.3)  {\vdots};
\node  at (4.5, -0.3)  {\vdots};

\node [](n) {}
	child {node [block, fill = blue!40](c1){$B_1$}}
    child {node [block](c2){$B_1$}}
    child {node [block](c3){$B_1$}}
    child {node [block](c4){$B_1$}};

\begin{scope}[node/.style={block}]
\node [block, below of = c1, fill = red!20] (c11) {$B_2$};
\node [block, below of = c11, fill = red!20] (c12) {$B_3$};
\node [block, below of = c12] (c13) {$B_4$};
\node [below of = c13] (c14) {$\vdots$};
\end{scope}

\begin{scope}[node/.style={block}]
\node [block, below of = c2, fill = blue!40] (c21) {$B_{2}$};
\node [block, below of = c21, fill = red!20] (c22) {$B_{3}$};
\node [block, below of = c22] (c23) {$B_{4}$};
\node [below of = c23] (c24) {$\vdots$};
\end{scope}

\begin{scope}[node/.style={block}]
\node [block, below of = c3, fill = blue!40] (c31) {$B_{2}$};
\node [block, below of = c31, fill = red!20] (c32) {$B_{3}$};
\node [block, below of = c32, fill = red!20] (c33) {$B_{4}$};
\node [below of = c33] (c34) {$\vdots$};
\end{scope}

\begin{scope}[node/.style={block}]
\node [block, below of = c4] (c41) {$B_{2}$};
\node [block, below of = c41, fill = blue!40] (c42) {$B_{3}$};
\node [block, below of = c42, fill = red!20] (c43) {$B_{4}$};
\node [below of = c43] (c44) {$\vdots$};
\end{scope}


\draw [->,double] (-3.9,-1.5) arc (50:-50:35pt);  
\draw [->,double] (-0.9,-2.5) arc (50:-50:20pt); 
\draw [->,double] (2.1,-2.5) arc (50:-50:35pt);  
\draw [->,double] (5.1,-3.5) arc (50:-50:20pt); 


\draw [-] (-4.5,-.75) -- (-4.5,-1.25);
\draw [-] (-4.5,-1.75) -- (-4.5,-2.25);
\draw [-] (-4.5,-2.75) -- (-4.5,-3.25);
\draw [-] (-4.5,-3.75) -- (-4.5,-4.25);
\draw [-] (-4.5,-4.75) -- (-4.5,-5.25);

\draw [-] (-1.5,-.75) -- (-1.5,-1.25);
\draw [-] (-1.5,-1.75) -- (-1.5,-2.25);
\draw [-] (-1.5,-2.75) -- (-1.5,-3.25);
\draw [-] (-1.5,-3.75) -- (-1.5,-4.25);
\draw [-] (-1.5,-4.75) -- (-1.5,-5.25);

\draw [-] (1.5,-.75) -- (1.5,-1.25);
\draw [-] (1.5,-1.75) -- (1.5,-2.25);
\draw [-] (1.5,-2.75) -- (1.5,-3.25);
\draw [-] (1.5,-3.75) -- (1.5,-4.25);
\draw [-] (1.5,-4.75) -- (1.5,-5.25);

\draw [-] (4.5,-.75) -- (4.5,-1.25);
\draw [-] (4.5,-1.75) -- (4.5,-2.25);
\draw [-] (4.5,-2.75) -- (4.5,-3.25);
\draw [-] (4.5,-3.75) -- (4.5,-4.25);
\draw [-] (4.5,-4.75) -- (4.5,-5.25);

\node[text width = 4cm, font=\fontsize{10}{0}\selectfont] at (-3, 0.2) {Case 1};
\node[text width = 4cm, font=\fontsize{10}{0}\selectfont] at (0, 0.2) {Case 2};
\node[text width = 4cm, font=\fontsize{10}{0}\selectfont] at (3, 0.2) {Case 3};
\node[text width = 4cm, font=\fontsize{10}{0}\selectfont] at (6, 0.2) {Case 4};

\end{tikzpicture}

\end{center}
\caption{\label{fig:finalization-real-cases} From left to right, example of cases $1$ to $4$ respectively. Both red and blue colors indicate justified blocks, where the blue blocks are now finalized because of the observed justification, shown by double arrows.}
\end{figure}


\subsection{Safety - Dynamic Validator Sets}


In $\mainprotocol$'s safety analysis (Section~\ref{sec:safety}), we assumed \emph{static} validator sets, meaning the set of validators cannot change over time. Recall our main result, Theorem~\ref{thm:safety}, tells us we are able to catch $N/3$ weight worth of validators violating the slashing conditions if safety is broken. However, in practice, we would like to support \emph{dynamic} validator sets, which means validators are allowed to \emph{activate} (enter) and \emph{exit} the validator set $V$ over time. This means byzantine activators can act maliciously, but then leave to avoid their stake being slashed. This setup reduces the number of ``active'' validators we can punish for violating slashing conditions.

\begin{define}
Assume there are $2$ validator sets, $\validatorset_1$ and $\validatorset_2$, where $\validatorset_2$ is a validator set later in time compared to $\validatorset_1$. We define $A(\validatorset_1, \validatorset_2$), the validators who \emph{activated (from $\validatorset_1$ to $\validatorset_2$)}, to be the set of validators who are not in $\validatorset_1$ but are in $\validatorset_2$, and $E(\validatorset_1, \validatorset_2)$, the validators who \emph{exited (from $\validatorset_1$ to $\validatorset_2$)}, to be the set of validators who are in $\validatorset_1$ but not in $\validatorset_2$.
\end{define}

First, we note that our key ideas from Section~\ref{sec:safety} still hold:
\begin{lemma} \label{lem:slashing-dvs} Suppose we allow dynamic validator sets. In a view $G$, if $(B_1, f_1)$ and $(B_2, f_2)$ in $F(G)$ conflict, then the blockchain must be $(1/3)$-slashable. Specifically, there must exist $2$ justified pairs $(B_L, j_L)$ and $(B_R, j_R)$ in $G$ and $2$ subsets $\validatorset_1 \subset \validatorset(B_L), \validatorset_2 \subset \validatorset(B_R)$, each with weight at least $2/3$ of the stake of its corresponding attestation epoch, such that their intersection $\validatorset_1 \cap \validatorset_2$ violates (S1) or (S2).
\end{lemma}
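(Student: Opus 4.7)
The plan is to mirror the proof of Lemma~\ref{lem:finalized-no-split} from the static case and track which validator set each supermajority is measured against. Assume without loss of generality that $f_2 > f_1$, so $B_2$ is the ``later'' finalized block. Since $(B_2, f_2) \in F(G)$ implies $(B_2, f_2) \in J(G)$, I would walk backwards along the justification chain of $(B_2, f_2)$ (using Definition~\ref{def:justify}) and find the first supermajority link $(B_L, j_L) \supervotes (B_R, j_R)$ where $(B_L, j_L)$ is a justified pair with $B_L$ an ancestor of $B_1$ but $B_R$ is not an ancestor of $B_1$. This is the analog of the ``earliest violating supermajority link'' in the static proof. Simultaneously, the finalization of $(B_1, f_1)$ gives a supermajority link $(B_1, f_1) \supervotes (B_k, f_1 + k)$ from Definition~\ref{def:finalization}.

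Next, I would argue the nested inequality $j_L < f_1 < f_1 + k < j_R$. That $j_L < f_1$ follows by minimality (the chain of supermajority links from $(\gblock, 0)$ to $(B_2, f_2)$ coincides with $\chain(B_1)$ up through $(B_L, j_L)$, and the $B_i$ in the finalization of $(B_1, f_1)$ are descendants of $B_1$, so they cannot be $(B_R, j_R)$). That $j_R > f_1 + k$ uses the dynamic analog of Lemma~\ref{lem:unique-jep}: if equality held in the epoch indices anywhere between $f_1$ and $j_R$, we would already obtain two distinct justified pairs in the same attestation epoch, which immediately produces the pair of $(2/3)$-supermajority sets (relative to the validator set of that epoch) whose intersection violates (S1), completing the proof through that branch. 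Handling the strict inequality case, any validator in the intersection of attestors for the two supermajority links has produced attestations $\alpha_1$ (for the outer link) and $\alpha_2$ (for the inner link) with $\attestepoch(\lastjustpair(\alpha_1)) = j_L < f_1 = \attestepoch(\lastjustpair(\alpha_2)) < \attestepoch(\lastepochpair(\alpha_2)) = f_1 + k < j_R = \attestepoch(\lastepochpair(\alpha_1))$, which is exactly the (S2) violation.

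Finally, I would set $\validatorset_1$ to be the attestors of $(B_L, j_L) \supervotes (B_R, j_R)$, a subset of $\validatorset(B_R)$ with total weight at least $\tfrac{2}{3} w(\validatorset(B_R))$ by the definition of supermajority link, and $\validatorset_2$ to be the attestors of $(B_1, f_1) \supervotes (B_k, f_1+k)$, a subset of $\validatorset(B_k)$ with total weight at least $\tfrac{2}{3} w(\validatorset(B_k))$. The preceding paragraph then shows that $\validatorset_1 \cap \validatorset_2$ consists of validators that provably violate (S1) or (S2), giving the structure demanded by the lemma. The concluding sentence would use whichever convention the paper adopts for $(1/3)$-slashability under dynamic sets to package this into the headline claim.

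The main obstacle is the quantitative interpretation of ``$(1/3)$-slashable'' once the validator set is dynamic: in the static case, two subsets of weight $2N/3$ inside the same set $\validatorset$ must intersect in weight at least $N/3$ by pigeonhole, but here $\validatorset_1 \subset \validatorset(B_R)$ and $\validatorset_2 \subset \validatorset(B_k)$ are subsets of potentially different validator sets at different epochs, and $\validatorset_1 \cap \validatorset_2$ could \emph{a priori} be small if many validators rotated out between epoch $f_1 + k$ and epoch $j_R$. I would handle this by observing that the lemma's ``specifically'' clause does not assert the intersection is quantitatively large; it asserts only that such subsets exist and that their intersection consists entirely of provable rule-breakers, so accountability is in terms of the attestation records (which persist even after exit) rather than in terms of currently-slashable stake. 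Any stronger claim would require either additional churn-rate assumptions on activations and exits, or the more restrictive notion of $(1/3)$-slashability relative to $\validatorset(B_R) \cap \validatorset(B_k)$; I would flag this as the place where the dynamic-validator subtleties of Section~\ref{sec:practice-vs-theory} enter the safety story.
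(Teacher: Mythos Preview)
Your approach is essentially the paper's: both reduce to the argument of Lemma~\ref{lem:finalized-no-split}, with the only change being to tag each $2/3$-supermajority with the validator set of its attestation epoch rather than with a single static $\validatorset$. The paper is terser---it simply cites Lemma~\ref{lem:finalized-no-split} and the reduction in Theorem~\ref{thm:safety} rather than re-deriving the nested-interval argument inline---but your version is fine.

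One omission: you assume $f_2 > f_1$ strictly and never treat $f_1 = f_2$. The paper handles this case first, observing that two distinct finalized (hence justified) pairs at the same attestation epoch already yield an (S1) violation directly, taking $(B_L, j_L) = (B_1, f_1)$ and $(B_R, j_R) = (B_2, f_2)$.

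Your closing discussion of why the intersection $\validatorset_1 \cap \validatorset_2$ need not be quantitatively large is exactly right and correctly anticipates the paper: the lemma only asserts existence of the two supermajorities and that their intersection consists of provable violators, while the actual lower bound on $w(\validatorset_1 \cap \validatorset_2)$ in terms of activation and exit churn is deferred to the next result, Theorem~\ref{thm:safety-dynamic}.
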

\begin{proof}
The proof is essentially identical to that of Lemma~\ref{lem:finalized-no-split} and the reduction of  Theorem~\ref{thm:safety} to it. The only difference is that we must rephrase the conditions of Lemma~\ref{lem:finalized-no-split} as conditions about the validator sets at the time of their attestations.

As in the proof of Theorem~\ref{thm:safety}, if we have $2$ conflicting blocks $(B_1, f_1)$ and $(B_2, f_2)$, then either $f_1 = f_2$ or $f_1 \neq f_2$. If $f_1 = f_2$, then we can set $(B_L, j_L) = (B_1, f_1)$ and $(B_R, j_R) = (B_2, f_2)$ to satisfy our claim, as their intersection violates $(S1)$. If $f_1 \neq f_2$, without loss of generality, $f_1 < f_2$. Since $(B_2, f_2)$ is finalized, it is also justified, and we now satisfy the the setup for Lemma~\ref{lem:finalized-no-split} where $(B_F, f) = (B_1, f_1)$ and $(B_J, j) = (B_2, f_2)$. Applying the Lemma, we conclude that either $B_2$ is a descendent of $B_1$ (a contradiction as they are conflicting), or there must be some pair of pairs (not necessarily $(B_F, f)$ or $(B_J, j)$ themselves), each of whose attestations have $2N/3$ stake, and whose intersections violate (S1) or (S2).
\end{proof}

We now present our main theorem. The main idea is to first take a block $B_0$ (most naturally, we can take the common parent of the conflicting blocks, but this choice is not necessary) published at a ``reference time'' with a ``reference validator set'' $\validatorset_0$. Then, we can upper bound the total weight of activations and exits between the validator sets of the justified blocks $B_L$ and $B_R$ (from Lemma~\ref{lem:slashing-dvs}) and $\validatorset_0$ as a function of the elapsed time since the referenced time. This allows us to lower bound the size of the slashable intersection of the quorums of those blocks.

\begin{theorem}
\label{thm:safety-dynamic} 
Suppose we have a view $G$ which contains two conflicting finalized blocks, then $G$ has enough evidence to slash validators with total weight at least
\[
\max(w(\validatorset_L) - a_L - e_R, w(\validatorset_R) - a_R - e_L) - w(\validatorset_L)/3 - w(\validatorset_R)/3,
\]
where 
\begin{enumerate}
\item there is a block $B_0 \in G$ with validator set $\validatorset_0$;
\item blocks $B_L$ and $B_R$ exist by Lemma~\ref{lem:slashing-dvs}, with validator sets $\validatorset_L$ and $\validatorset_R$ respectively.
\item $a_{L} = w(A(\validatorset_0, \validatorset_L))$ and $e_{L} = w(E(\validatorset_0, \validatorset_L))$; similarly define $a_R$ and $e_R$.
\end{enumerate}
\end{theorem}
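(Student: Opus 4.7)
The plan is to reduce the theorem to two disjoint combinatorial bookkeeping bounds on set sizes, which can then be combined mechanically. By Lemma~\ref{lem:slashing-dvs}, there already exist justified pairs $(B_L, j_L), (B_R, j_R)$ and subsets $\validatorset_1 \subseteq \validatorset_L, \validatorset_2 \subseteq \validatorset_R$, whose intersection violates (S1) or (S2), with $w(\validatorset_i) \geq 2w(\validatorset_{L \text{ or } R})/3$ for $i=1,2$. Therefore the proof reduces entirely to lower-bounding $w(\validatorset_1 \cap \validatorset_2)$, and all the work consists of tracking how validators move in and out of $\validatorset_L, \validatorset_R$ relative to the reference set $\validatorset_0$.

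First, I would bound the ``common core'' $w(\validatorset_L \cap \validatorset_R)$ from below. Using the definitions of activation/exit, one has $\validatorset_L = (\validatorset_0 \setminus E(\validatorset_0,\validatorset_L)) \cup A(\validatorset_0,\validatorset_L)$ and symmetrically for $\validatorset_R$. A short case check on any $x \in \validatorset_L \setminus \validatorset_R$ (split into $x \in \validatorset_0$ versus $x \notin \validatorset_0$) yields the containment $\validatorset_L \setminus \validatorset_R \subseteq A(\validatorset_0,\validatorset_L) \cup E(\validatorset_0,\validatorset_R)$, so $w(\validatorset_L \setminus \validatorset_R) \leq a_L + e_R$, and hence $w(\validatorset_L \cap \validatorset_R) \geq w(\validatorset_L) - a_L - e_R$. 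By the symmetric argument it is also $\geq w(\validatorset_R) - a_R - e_L$, so we may take the maximum.

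Second, I would relate $\validatorset_1 \cap \validatorset_2$ to the common core. Since $\validatorset_1 \subseteq \validatorset_L$ and $\validatorset_2 \subseteq \validatorset_R$, we have $\validatorset_1 \cap \validatorset_2 \subseteq \validatorset_L \cap \validatorset_R$, and for any $x \in (\validatorset_L \cap \validatorset_R) \setminus (\validatorset_1 \cap \validatorset_2)$, either $x \in \validatorset_L \setminus \validatorset_1$ or $x \in \validatorset_R \setminus \validatorset_2$. This gives $(\validatorset_L \cap \validatorset_R) \setminus (\validatorset_1 \cap \validatorset_2) \subseteq (\validatorset_L \setminus \validatorset_1) \cup (\validatorset_R \setminus \validatorset_2)$, and the Casper-style quorum hypothesis $w(\validatorset_i) \geq 2w(\cdot)/3$ from Lemma~\ref{lem:slashing-dvs} yields $w(\validatorset_L \setminus \validatorset_1) \leq w(\validatorset_L)/3$ and analogously for $R$. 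Combining the two steps produces exactly the claimed bound.

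The main obstacle I anticipate is not any single estimate but rather keeping the bookkeeping clean: making sure that the ``$2/3$'' thresholds in Lemma~\ref{lem:slashing-dvs} are measured against $w(\validatorset_L)$ and $w(\validatorset_R)$ respectively (not against some global total), and that the reference set $\validatorset_0$ actually drops out of the final bound — it appears only implicitly through $a_L, a_R, e_L, e_R$. A minor subtlety worth flagging in the write-up is that the theorem is stated for any $B_0 \in G$, which means the bound is most useful when $B_0$ is chosen so that both $\validatorset_L$ and $\validatorset_R$ differ from $\validatorset_0$ by only a small activation/exit budget; for instance taking $B_0$ to be a common ancestor of $B_L$ and $B_R$ is natural but not required by the argument above.
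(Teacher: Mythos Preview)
Your proof is correct and follows essentially the same two-step strategy as the paper: first lower-bound $w(\validatorset_L\cap\validatorset_R)$ via the activation/exit sets, then lower-bound the quorum intersection inside that common core using the $2/3$ thresholds. The paper carries out the same computation by labeling the seven regions of the Venn diagram of $\validatorset_0,\validatorset_L,\validatorset_R$ and chasing weights through those regions, but your direct set-containment arguments ($\validatorset_L\setminus\validatorset_R\subseteq A(\validatorset_0,\validatorset_L)\cup E(\validatorset_0,\validatorset_R)$ and $(\validatorset_L\cap\validatorset_R)\setminus(\validatorset_1\cap\validatorset_2)\subseteq(\validatorset_L\setminus\validatorset_1)\cup(\validatorset_R\setminus\validatorset_2)$) accomplish exactly the same bookkeeping without the diagram.
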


\begin{proof}
By assumption, the pair of (unnamed) conflicting finalized blocks give rise to $B_L$ and $B_R$ with  Lemma~\ref{lem:slashing-dvs}, which in turn have quorums $Q_L \subset \validatorset_L$ and $Q_R \subset \validatorset_R$, both with at least $2/3$ of the weight of their corresponding validator sets $V_L$ and $V_R$.

We denote the different sets created by the overlap of our $3$ validator sets to be $A, B, C, D, E, F, G$, as in Figure~\ref{fig:Venn-Diagram-DVS}. 
\begin{figure}[]
\begin{center}
\includegraphics[scale=0.35]{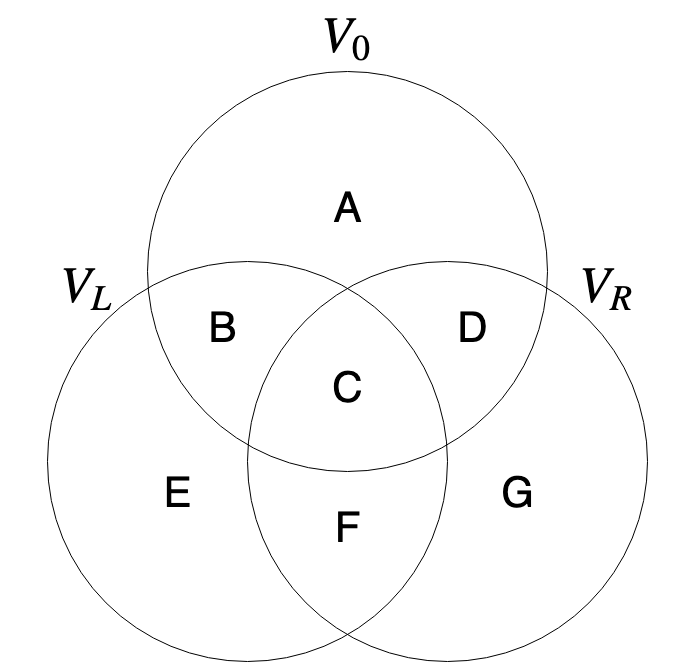}
\end{center}
\caption{A Venn diagram of the initial validator set $\validatorset_0$ and two validator sets $\validatorset_L$ and $\validatorset_R$,  the letters A through G pictorially represent how many validators are in each overlap.}
\label{fig:Venn-Diagram-DVS}
\end{figure}

\begin{figure}[]
\begin{center}
\includegraphics[scale=0.50]{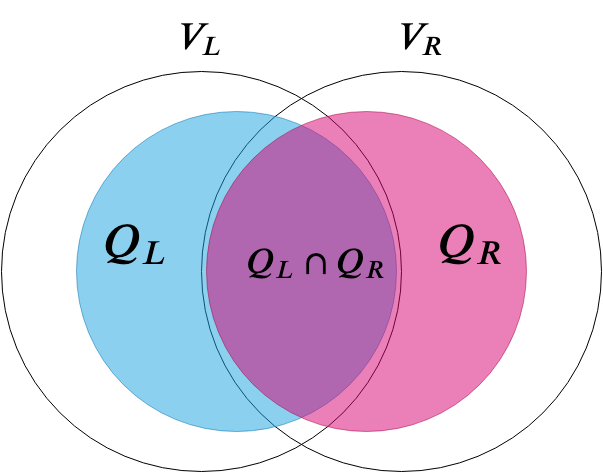}
\end{center}

\caption{The quorums $Q_L$ and $Q_R$ have overlapping validators between the conflicting blocks $B_L$ and $B_R$.}
\label{fig:Venn-Diagram-DVS-Q}
\end{figure}

Next, we wish to bound the intersection of the quorums $Q_L$ and $Q_R$ as depicted in Figure~\ref{fig:Venn-Diagram-DVS-Q}. Let $\validatorset_{LR} = \validatorset_L \cap \validatorset_R = C \cup F$, we have:
\begin{flalign*}
w(Q_L \cap Q_R) &\geq  w(Q_L \cap \validatorset_{LR}) + w(Q_R \cap \validatorset_{LR}) - w(C \cup F) & \\
& \geq   2w(\validatorset_L)/3 - w(B \cup E) +  2w(\validatorset_R)/3 - w(D \cup G \cup C \cup F) & \\
& =  2w(\validatorset_L)/3 + 2w(\validatorset_R)/3 - \left(w(B \cup C \cup D \cup E \cup F \cup G)\right) & \\
& =  w(C) + w(F) - w(\validatorset_L)/3 - w(\validatorset_R)/3. &
\end{flalign*}
Let $X = w(C) + w(F)$. We know
\begin{align*}
X & = w(\validatorset_L) - w(B \cup E) \\
& \geq w(\validatorset_L) - w(A \cup B \cup E \cup F) \\
& = w(\validatorset_L) - a_L - e_R,
\end{align*}
and similarly
\[
X \geq w(\validatorset_R) - a_R - e_L. \qedhere
\]

\end{proof}

As a sanity check, note all the $a$'s and $e$'s are zero in the extreme case of no exits or activations, so we recover the original bound of $w(\validatorset_L)/3 = N/3$. Also, we can avoid negative signs by e.g. using a linear combination of the two bounds for $X$ to get the bound
\[
w(\validatorset_L)/3 - \left(2a_L/3 + 2e_R/3 + a_R/3 + e_L/3\right).
\]

The power of the bound depends on our policies on activating and exiting. Suppose $B_0$, the ``reference block'' was published at time $T_0$, and $B_L$ and $B_R$ are published at $T_{now}$ (more precisely, we should define $T_{now}$ to be the maximum of the two blocks' publication times). Then we can control the $a_*$ and $e_*$ as a function of the time difference and the policies. Here are some examples and observations:
\begin{itemize}
    \item If we allow a constant stake of $k$ worth of new validators to activate per epoch, then we can bound
\[
a_L, a_{R}\le k\left(T_{now}-T_{0}\right).
\]
We can do the same with exiting and obtain bounds on $e_L$ and $e_R$, possibly with a different constant in the same protocol.
\item If we allow activating  proportional to validator stake, e.g., up to $k \cdot w(\validatorset)$ can activate in one epoch, then the bounds become exponential:
\[
a_L, a_R \le w(\validatorset)(1 + k)^{T_{now} - T_0}; 
\]
In the limit this becomes an exponential bound.
\item As the difference between $T_{now}$ and $T_0$ becomes longer, the $w()$ terms will get big and the bound becomes less useful, with the right-hand-side eventually becoming negative. This is unavoidable; one can imagine a ``ship of Thesesus'' situation where the validator sets have changed completely, in which case it would be impossible to slash anyone still in the system.
\item The bounds can also take care of latency; if we are prepared for up to time $\delta$ of latency, then we effectively add $\delta$ to computations where $(T_{now} - T_0)$ occur in our bounds of $a$'s and $e$'s, which we can then feed back into Theorem~\ref{thm:safety-dynamic}.
\end{itemize}

These bounds show \textbf{a tradeoff between the flexibility of entering and exiting the blockchain and the ability to catch malicious actors.} Our static result, Theorem~\ref{thm:safety}, has the intuition ``if our protocol breaks safety, we can provably slash $1/3$ worth of validators'' to imply safeness of our protocol as long as we have a strong enough belief in the validators. Our dynamic result, Theorem~\ref{thm:safety-dynamic}, is of the more nuanced form ``if our protocol breaks safety, then as long as the attackers cannot create a fork with long branch lengths, we can provably slash a little less than $1/3$ worth of validators.'' This means pure $\mainprotocol$ is susceptible to attacks such as malicious actors shutting down the whole network for many\footnote{For reference, it takes roughly $2.5$ months in the concrete protocol to turn over $1/3$ of the validator set, given the queuing mechanism in the Ethereum implementation, which is on the order of $20000$ epochs.} epochs and then suddenly appearing with a conflicting fork, such that we cannot slash much stake worth of validators since they have already exited during this time. Such situations can be safeguarded in practice by simply not including old attestations into one's view; the current Ethereum 2.0 spec, for example, does not accept attestations $2$ or more epochs old, which avoids this attack.


\subsection{Extreme Cases; Hard Forks}

Even though we have fairly unconditional plausible liveness with a $\frac{2}{3}$ majority of honest stakeholders and probabilistic liveness under ``good'' conditions, it is important to take into account of worst-case scenarios. In the case of extended forking and lack of finality due to lots of \emph{non-live} (non-participating) validators who are not necessarily malicious (and thus not slashed), the Ethereum 2.0 beacon chain has a mechanism by which \emph{live} (participating) validators on a fork retain their stake, whereas the non-live validators ``bleed'' stake such that the live validators eventually become a $\frac{2}{3}$ majority, in time on the order of magnitude of weeks. Such a case may happen if, e.g., the global internet is partitioned such that $50\%$ of validators are on one side and $50\%$ on the other. 

A different extreme case is where a chain ``flip-flops'' continuously such that there is no distinct partition, and instead the majority just keeps switching forks each epoch. The ultimate backup plan for this situation or more severe cases is making \emph{hard forks} (or \emph{manual forks}), points at which the community running a blockchain chooses to alter the consensus rules (famous such cases include Bitcoin vs. Bitcoin Cash, or ETH vs. ETH Classic). This might be due to upgrading or adding features, fixing critical issues in production, or addressing a fractured community or political split. Usually, this results in a new set of protocol rules being run starting at some particular point in the blockchain. It is an important engineering problem, again outside of our mathematical scope, that the state transition and fork-choice functions can handle these changes smoothly. In most considerations the HLMD GHOST fork-choice rule would be untouched. However, when the alterations are ``deep'' inside the logic, considerations beyond the built-in forking mechanism would have to be taken and manual forks might be implemented.



\section{Conclusion}
\label{sec:conclusion}

We presented an abstract protocol, $\mainprotocol$, that combines LMD GHOST and Casper FFG for a full proof-of-stake based blockchain design. This is the first formal proof-of-concept of Casper FFG to a complete blockchain protocol. Our goal was to separate the ``mathematically clean'' part of the Ethereum 2.0 design from the implementation details. Some of the analytical techniques are general enough to be useful for studying other protocols, such as the proofs of probabilistic liveness and the equivocation game construction. In practice, much of the worst-case worlds from the analysis can be patched by ideas such as in Section~\ref{sec:practice-vs-theory}.

For transparency, it is important to note that it is possible for ``patches'' or even ``implementation details'' to change safety/liveness assumptions, or even for $2$ otherwise innocuous ``patches'' to combine in undesirable ways. However, it is impossible to predict all the potential implementation changes the protocol will eventually receive, or which of the current proposed patches would be included or removed from implementation for possibly unrelated reasons. Thus, we wrote this paper to analyze the main theoretical foundations of the protocol and we simply sketched an analysis of each implementation detail as an isolated change instead of in conjunction with others. As the protocol solidifies, possibly with real-life testing, it would be meaningful to reassess the effects of patches that would become ``core'' to the protocol. In particular, some version (possibly significantly differently from what we proposed) of dynamic validator sets would be indispensable; once its design is finalized, it would be important to revisit the safety/liveness implications.

There exist many other proof-of-stake based protocols in the cryptocurrency space. Some examples include:

\begin{itemize}
    \item Tendermint \cite{buchman2018latest}: a very ``pure'' design that is a simplification of PBFT applied to a blockchain proof-of-stake context. This design favors safety over liveness. All validators participate in every consensus round and no consensus happens faster than those rounds, with progress only possible if $\geq 2/3$ stake worth of validators are online. 
    \item Casper CBC \cite{caspercbc}: an alternative proposal to Casper FFG, focusing on mathematical correctness by construction. Casper CBC does not have fixed in-protocol thresholds, based on an emergent approach to safety arising from nodes following the majority of what other nodes have done in the past and making finality inferences about blocks. Besides being a proposed protocol, Casper CBC is also meant to be a general framework for analyzing consensus designs. 
    \item Hotstuff (v6)~\cite{yin2018hotstuff}: has many similar properties as Casper FFG, with a flexible mathematical framework; one key difference is that Hotstuff uses an exponential backoff mechanism to be able to make progress under arbitrary network delays. 
    \item Ouroboros~\cite{kiayias2017ouroboros}: a protocol under the ``synchronous'' school which assumes synchrony for fault-tolerance (and thus gets stronger $50\%$ bounds for fault-tolerance). It uses the longest-chain rule and reward mechanisms to incentivize participation and prevent passive attacks. In Ouroboros Praos~\cite{kiayias2017ouroboros}, the protocol is updated in response to vulnerabilities against ``message delay'' attacks. In Ouroboros Genesis in~\cite{badertscher2018ouroboros}, the protocol is further expanded to allow for the ability for a participant to bootstrap the consensus from the genesis block (hence the name), proving globally universally composable (GUC) security against a $50\%$ adversary.
    \item Snow White~\cite{daian2017snow}: a protocol where the committee leader can extend the blockchain with a block that includes a reference to the previous block, similar to block dependencies, and a nonce that can be used to verify the block’s validity against some a priori difficulty constant. Valid timestamps must increase and “any timestamp in the future will cause a chain to be rejected”. Participants only accept incoming chains that have not been modified too far in the past, similar to accepting views within a certain number of epochs.
    \item Nxt \cite{alias2018nxt}:
    a proof-of-stake protocol that has a finite number of tokens. Nxt uses a stake-based probability for the right to generate a block, but it imposes transaction fees to circulate currency since it does not generate new tokens.  Nxt contrasts themselves from Peercoin, noting that Peercoin's algorithm grants more power to miners who have had coins on the network for a long time. The protocol also describes certain restrictions on block creation and transferring tokens to prevent standard attacks and moving stake between accounts.
    \item Thunderella \cite{pass2018thunderella}:
    a proof-of-stake protocol that aims to achieve optimal transaction verification relative to message delay, assuming ``good conditions'' that the leader of a committee is honest and has a super-majority of honest validators. A new leader will be implemented to build upon a back-up network in the event of a stall. The protocol also allows for dynamic validators with cool-down periods.
    \item Dfinity \cite{hanke2018dfinity}: beacon-notarization protocol that works with both proof-of-work and proof-of-stake.  The system uses a random beacon that selects block proposers, and a decentralized notary chooses the highest-ranked block based on a criteria described in \cite{hanke2018dfinity}. Their analogues of \textit{validators} and \textit{committees} are called \textit{replicas} and \textit{groups} respectively, and the decentralized notary depends on the same Byzantine fault-tolerance to notarize blocks and reach consensus. They only have a passive notion of finalization, but the notarization process is quite fast, and blocks do get published in a timely manner. Replicas do need permission to leave the chain.
\end{itemize}

Our goal is not to show that our design is strictly better than any of these other designs. All of these protocols contain tradeoffs based on different design goals and assumptions about the network. Our design is guided by a balance between simplicity, understandability, and practicality, with a mixed emphasis on safety and liveness. For example, if the context of a proposed blockchain prioritizes safety over speed, then one may, e.g., select or modify a proposal more in the direction of Tendermint.

\section*{Acknowledgments}
We thank the Ethereum Foundation and the San Jose State University Research Foundation for support for this program. This project is made possible by the SJSU Math and Statistics Department's CAMCOS (Center for Applied Mathematics, Computation, and Statistics) program.  We also thank Aditya Asgaonkar, Carl Beekhuizen, Dankrad Feist, Brian Gu, Brice Huang, Ryuya Nakamura, Juan Sanchez, Yi Sun, Mayank Varia, and Sebastien Zany for helpful comments. We thank Runtime Verification (Musab Alturki, Elaine Li, Daejun Park) for helpful comments and proof verification with Coq, building off work by Karl Palmskog, Lucas Pena and Brandon Moore. Author Yan X Zhang is the director of CAMCOS;  authors Vitalik Buterin and Danny Ryan are members of the Ethereum Foundation.


\Urlmuskip=0mu plus 1mu\relax
\bibliographystyle{abbrv}
\bibliography{casper}

\begin{thebibliography}{10}

\bibitem{badertscher2018ouroboros}
C.~Badertscher, P.~Ga{\v{z}}i, A.~Kiayias, A.~Russell, and V.~Zikas.
\newblock Ouroboros genesis: Composable proof-of-stake blockchains with dynamic
  availability.
\newblock In {\em Proceedings of the 2018 ACM SIGSAC Conference on Computer and
  Communications Security}, pages 913--930. ACM, 2018.

\bibitem{hashgraph}
L.~Baird, M.~Harmon, and P.~Madsen.
\newblock Hedera: A public hashgraph network \& governing council.
\newblock 2018.
\newblock \url{https://www.hedera.com/hh-whitepaper-v1.4-181017.pdf}.

\bibitem{bracha1985asynchronous}
G.~Bracha and S.~Toueg.
\newblock Asynchronous consensus and broadcast protocols.
\newblock {\em Journal of the Association for Computing Machinery (JACM)},
  32(4):824--840, 1985.

\bibitem{buchman2018latest}
E.~Buchman, J.~Kwon, and Z.~Milosevic.
\newblock The latest gossip on {BFT} consensus.
\newblock {\em CoRR}, abs/1807.04938, 2018.

\bibitem{caspercbc}
V.~Buterin.
\newblock A {CBC} casper tutorial.
\newblock 2018.
\newblock \url{https://vitalik.ca/general/2018/12/05/cbc_casper.html}.

\bibitem{Sharding}
V.~Buterin.
\newblock Sharding.
\newblock \url{https://vitalik.ca/files/Ithaca201807_Sharding.pdf}, 2018.

\bibitem{buterin2017casper}
V.~Buterin and V.~Griffith.
\newblock Casper the friendly finality gadget.
\newblock {\em CoRR}, abs/1710.09437, 2017.

\bibitem{castro1999practical}
M.~Castro, B.~Liskov, et~al.
\newblock Practical byzantine fault tolerance.
\newblock In {\em Operating Systems Design and Implementation}, volume~99,
  pages 173--186, 1999.

\bibitem{daian2017snow}
P.~Daian, R.~Pass, and E.~Shi.
\newblock Snow white: Robustly reconfigurable consensus and applications to
  provably secure proofs of stake.
\newblock {\em Cryptology ePrint Archive}, 2017.

\bibitem{dwork1988consensus}
C.~Dwork, N.~Lynch, and L.~Stockmeyer.
\newblock Consensus in the presence of partial synchrony.
\newblock {\em Journal of the Association for Computing Machinery (JACM)},
  35(2):288--323, 1988.

\bibitem{beacon}
{Ethereum Developers}.
\newblock Ethereum 2.0 phase 0: The beacon chain.
\newblock 2019.
\newblock
  \url{https://github.com/ethereum/eth2.0-specs/blob/dev/specs/phase1/beacon-chain.md}.

\bibitem{hanke2018dfinity}
T.~Hanke, M.~Movahedi, and D.~Williams.
\newblock {DFINITY} technology overview series, consensus system.
\newblock {\em CoRR}, abs/1805.04548, 2018.

\bibitem{jain2018proof}
A.~Jain, S.~Arora, Y.~Shukla, T.~Patil, and S.~Sawant-Patil.
\newblock Proof of stake with casper the friendly finality gadget protocol for
  fair validation consensus in ethereum.
\newblock {\em International Journal of Scientific Research in Computer
  Science, Engineering and Information Technology}, 3(3):291--298, 2018.

\bibitem{kiayias2017ouroboros}
A.~Kiayias, A.~Russell, B.~David, and R.~Oliynykov.
\newblock Ouroboros: A provably secure proof-of-stake blockchain protocol.
\newblock In {\em Annual International Cryptology Conference}, pages 357--388.
  Springer, 2017.

\bibitem{nakamoto2008bitcoin}
S.~Nakamoto et~al.
\newblock Bitcoin: A peer-to-peer electronic cash system.
\newblock 2008.

\bibitem{alias2018nxt}
{Nxt Community}.
\newblock Nxt whitepaper.
\newblock 2018.
\newblock \url{https://nxtwiki.org/wiki/Whitepaper:Nxt}.

\bibitem{palmskog2018verification}
K.~Palmskog, M.~Gligoric, L.~Pe{\~n}a, B.~Moore, and G.~Rosu.
\newblock Verification of casper in the coq proof assistant.
\newblock Technical report, 2018.

\bibitem{pass2018thunderella}
R.~Pass and E.~Shi.
\newblock Thunderella: Blockchains with optimistic instant confirmation.
\newblock In {\em Annual International Conference on the Theory and
  Applications of Cryptographic Techniques}, pages 3--33. Springer, 2018.

\bibitem{serfling1974}
R.~J. Serfling.
\newblock Probability inequalities for the sum in sampling without replacement.
\newblock {\em The Annals of Statistics}, 2(1):39--48, 01 1974.

\bibitem{sompolinsky2015secure}
Y.~Sompolinsky and A.~Zohar.
\newblock Secure high-rate transaction processing in bitcoin.
\newblock In {\em International Conference on Financial Cryptography and Data
  Security}, pages 507--527. Springer, 2015.

\bibitem{yin2018hotstuff}
M.~Yin, D.~Malkhi, M.~K. Reiter, G.~G. Gueta, and I.~Abraham.
\newblock Hotstuff: Bft consensus in the lens of blockchain.
\newblock {\em CoRR}, abs/1803.05069, 2018.

\bibitem{lmdghost}
V.~Zamfir.
\newblock Casper the friendly ghost: a "correct-by-construction" blockchain
  consensus protocol.
\newblock 2017.
\newblock
  \url{https://github.com/vladzamfir/research/blob/master/papers/CasperTFG/CasperTFG.pdf}.

\end{thebibliography}


\newpage

\appendix

\section{Technicalities of Views}
\label{sec:view}

In Section~\ref{sec:setup}, we defined \emph{view} in a streamlined treatment. We provide a more detailed treatment in this section that offers further intuition.

First, we give a different definition from \emph{view} that should seem more intuitive to some readers. We define the \emph{full view} of a validator $V$ at a given time $T$ to be the set of all messages (and timestamps) that have been seen by $V$ by time $T$. We can denote the full view as $\fview(V,T)$. Similarly to the network view $\view(\network, T)$, we also define a ``God's-eye-view'' $\fview(\network, T)$, the \emph{full network view}, as the collection of all messages any validator has broadcast at any time to the network. As with views, for any validator $V$ and any given time $T$, $\fview(\network, T)$ includes all the messages for any $\fview(V, T)$, though the timestamps may be mismatched. With this definition, our definitions $\view(V, T)$ and $\view(\network, T)$ respectively are just subsets of the \textbf{accepted} messages in $\fview(V,T)$ and $\view(\network, T)$ respectively. 

Now, it is completely possible to describe everything in terms of full views instead of views. However, there are many reasons why we work with views instead of full views in this paper (and why we leave the concept of full views to the appendix). To give a few:
\begin{itemize}
    \item We can visualize views as a connected tree of blocks and attestations (with edges corresponding to dependencies). To visualize full views, we necessarily have to use a possibly disconnected graph. 
    \item Motivated by real-life concerns, a protocol working with full views would have to contain instructions that differ on how to handle messages a validator $V$ has seen (but shouldn't act on, as it depends on a possibly nonexistent message $V$ has not seen) versus a message a validator has accepted and fully ``understands''. As an example, a cryptocurrency protocol may have to contain statements of the sort ``when a full view sees a transaction about a coin $B$, if the parent blocks of $B$ exist (recursively) and are all signed correctly, then consider the transaction valid.'' This is very clunky, whereas if parent-child relationships were dependencies then using views abstracts away the recursive-checking logic, and we just need to say ``when a view sees a transaction about a coin $B$, consider the transaction valid.''
    \item It is always possible to have a protocol that works on views: if message $B$ depends on message $A$ but we receive $B$ first, then working with views is equivalent to ignoring the existence of message $B$ when making choices as a validator until $A$ (and other dependencies) come. 
\end{itemize} 

To summarize: even though the \emph{full view} is somewhat intuitively easier to understand than the \emph{view}, since all the dependencies are met in a view by construction, what the view sees forms a ``coherent'' state where the validator can reason about any message with no ambiguity, whereas reasoning about the full view will usually come with caveats about checking that all dependencies of the messages involved in the reasoning are met (probably recursively). 

\section{The Equivocation Game}
\label{sec:equivocation-game}



Recall that in Section~\ref{sec:liveness}, we abstracted a single slot of $\mainprotocol$ into a one-shot game called the \emph{equivocation game}, parametrized by $(\validatorset, a, \epsilon_1, \epsilon_2)$, where $\validatorset$ votes on $2$ options $O_1$ and $O_2$. In this Appendix, we give some further details and perform some simulations.

First, we explicitly discuss how time and synchrony conditions are modelled in this game:
\begin{enumerate}
\item There is a single time period in this game, formalized as the real interval $[0,1]$. This interval corresponds to the $1$ slot of time in $\mainprotocol$.
\item Honest validators follow the following protocol: ``vote at [your] $t=0.5$ for the option with more total stake voted in your view; in the case of a tie, vote $O_1$.'' This protocol corresponds to the instruction in $\mainprotocol$ that attestors for slot $i$ are supposed to attest at time $i + 1/2$ (and tiebreakers are broken by a hash, which is an arbitrary but fixed value).
\item Synchrony model:
\begin{enumerate}
    \item We assume that all validators have perfectly synced clocks, but each validator attempting to vote at time $t$ actually votes at ``real'' time $t + X$ (rounding to inside the interval $[0,1]$), where $X$ is a uniform random variable with support $[-\epsilon_1, \epsilon_1]$, independently re-sampled for each message. We can think of $\epsilon_1$ as a ``timing error'' bound, accounting for clock differences, client-side timing issues, etc. 
    
    \item When a validator votes at ``real'' time $t'$, all other validators obtain the vote at time $t'+a+Y$, where $Y$ is a uniformly distributed variable with support $[-\epsilon_2, \epsilon_2]$.  We can think of $a$ as the average delay per message, and  $\epsilon_2$ a noise on top of the delay. Combining with the previous point, we have that a validator attempting to vote at time $t$ actually has his/her vote received by another validator at $t + a + X + Y$, where the $X$ and $Y$ are independently re-sampled for each event.

\end{enumerate}
\end{enumerate}

For our analysis in this section, we make the following additional assumptions and notations: 
\begin{itemize}
    \item Every validator has exactly $1$ unit of stake. Barring further knowledge (such as empirical data) about validator stake distributions, this is the most intuitive choice for our toy model.
\item We define $N_h \geq 2N/3$ to be the total number of honest validators (which is equivalent to their total amount of stake).  In reality (and in Section~\ref{sec:liveness}), we expect $N_h$ to be a bit bigger at $2N/3 + \epsilon N$ for small $\epsilon$.
    \item Similarly, we define $N_b \leq N/3$ to be the number of byzantine validators. We use $p = \frac{N_b}{N}$ to be the proportion of byzantine validators.
\end{itemize}

Finally, for our computer simulations, we additionally set the parameters to $\validatorset = 111$, $N_h = 74$, $N_b = 37$. The total number of validators of $111$ is based on \cite{Sharding}, as a heuristic lower bound for making the committees safe. Erring on the conservative side, we pick the maximally pessimistic parameter of $p=1/3$ fraction of byzantine validators. Having more honest validators than this assumption significantly improves our chances of winning.

\subsection{The Pessimistic Regime - High Latency}\label{subsec: pessimistic}

This pessimistic regime is a mental experiment to show that under certain conditions fair to both sides, the power of collusion is enough to make the game favor the dishonest side.

We assume that $a$ (the delay for messages from one validator to another) is big compared to $\epsilon_1$ (the error in timing a vote). In this situation, we argue that the dishonest validators can perform the following ``smoke bomb attack'': at time close to $0.5-a$, all the dishonest validators vote in a way that the votes are split between the $2$ options, for a total of $pN/2$ votes for each option, all with fake timestamps for time close to $0.5$. 

When this attack activates, from the view of each honest validator, at time $0.5$ they will see some random votes from the dishonest validators. As $a$ is big compared to $\epsilon_1$, they almost never see other votes from honest validators. Thus, the dishonest validators and honest validators both end up splitting the vote: the dishonest validators end up splitting $pN$ of the votes, and the honest validators split the remaining $(1-p)N$ probabilistically.

Assuming the honest validators all have $1$ stake, the honest validator's votes can be modeled by $(1-p)N$ Bernoulli trials between the two outcomes, ending up close to a normal distribution around a tie, with a standard deviation proportional to $\sqrt{(1-p)N}$ votes. In particular, for large $N$, it is very unlikely for one of the options to get anywhere near $2N/3$ stake worth of votes.

\begin{remark} As an extension of this example, we can also consider the case where the protocol tiebreaker is ``flip a coin,'' the situation becomes even worse: the dishonest validators can simply wait until the end to vote. Because $a$ is big compared to $\epsilon_1$, the honest validators are still essentially making coin flips since they have not seen any other votes. Our tiebreaker rule (vote for option $O_1$ if there is a tie) solves this problem. The main takeaway here is that these protocols (including attacks on them) are very sensitive to very small implementation details. 
\end{remark}

In Figure~\ref{fig:equiv-network-0.3}, we have four cases that demonstrate different outcomes of the pessimistic regime with $a = 0.15$, $\epsilon_1 = 0.05$, and $\epsilon_2 = 0.15$. This ensures that messages sent at time $t$ are received uniformly randomly in the interval $[t, t+0.3]$. Note that having smaller $\epsilon_2$  would be even more pessimistic, almost ensuring that honest validators cannot see each other's votes. 

In Case 1, dishonest validators vote much earlier than honest validators do, so most of the attestations made by the former are visible to the latter when they vote. Thus, the honest validators are likely to vote for the same option. For Cases 2 and 3, dishonest validators vote closer and closer to when the honest validators do, which increases the power of the attack; in Case 3 dishonest validators win at 42\% of the time. In Case 4, dishonest validators vote at the same time as honest validators; because of the delay, the dishonest votes do not confuse the honest validators, so honest validators almost always win (because they are likely to vote for the default option $O_1$, barring seeing any votes).

\tikzset{ 
    table/.style={
        matrix of nodes,
        row sep=-\pgflinewidth, column sep=-\pgflinewidth,
        nodes={
            rectangle,
            draw=black,
            align=center
        },
        minimum height=1.5em,
        text depth=0.5ex,
        text height=2ex,
        nodes in empty cells,
        every even row/.style={
            nodes={fill=gray!20}
        },
        column 1/.style={
            nodes={text width=12em,font=\bfseries}
        },
        row 1/.style={
            nodes={
                fill=black,
                text=white,
                font=\bfseries
            }
        }
    }
}

\begin{figure}[H]
\begin{center}

\begin{tikzpicture} 
\draw (0,0) -- (10,0);

\draw (1.5,0.2) -- (1.5,0);
\draw (2.5,0.2) -- (2.5,0);
\draw (4.5,0.2) -- (4.5,0);
\draw (5.5,0.2) -- (5.5,0);

\draw (0,0.2) -- (0,0);
\draw (10,0.2) -- (10,0);

 \draw[purple,semithick] (2,0) ellipse [x radius=0.5,y radius=.1];
 
 \draw[blue,semithick] (5,0) ellipse [x radius=0.5,y radius=.1];
 
 \draw[orange,thick] (1.5, -0.1) rectangle (5.5, 0.1);

 \node[text width=5, font=\fontsize{8}{0}\selectfont]  at (0,-0.5) {0};
 \node[text width=5, font=\fontsize{8}{0}\selectfont]  at (10,-0.5) {1};
 \node[text width=5, font=\fontsize{8}{0}\selectfont]  at (1.5,-0.5) {0.15};
 \node[text width=5, font=\fontsize{8}{0}\selectfont]  at (2.5,-0.5) {0.25};
  \node[text width=5, font=\fontsize{8}{0}\selectfont]  at (4.5,-0.5) {0.45};
  \node[text width=5, font=\fontsize{8}{0}\selectfont]  at (5.5,-0.5) {0.55};
\end{tikzpicture}

\begin{tikzpicture} 
\draw (0,0) -- (10,0);

\draw (2.5,0.2) -- (2.5,0);
\draw (3.5,0.2) -- (3.5,0);
\draw (4.5,0.2) -- (4.5,0);
\draw (5.5,0.2) -- (5.5,0);
\draw (6.5,0.2) -- (6.5,0);
\draw (0,0.2) -- (0,0);
\draw (10,0.2) -- (10,0);

\draw[purple,semithick] (3,0) ellipse [x radius=0.5,y radius=.1];
 
\draw[blue,semithick] (5,0) ellipse [x radius=0.5,y radius=.1];
 
\draw[orange,thick] (2.5,-0.1) rectangle (6.5, 0.1);

 \node[text width=5, font=\fontsize{8}{0}\selectfont]  at (0,-0.5) {0};
 \node[text width=5, font=\fontsize{8}{0}\selectfont]  at (10,-0.5) {1};
 \node[text width=5, font=\fontsize{8}{0}\selectfont]  at (2.4,-0.5) {0.25};
 \node[text width=5, font=\fontsize{8}{0}\selectfont]  at (3.4,-0.5) {0.35};
 \node[text width=5, font=\fontsize{8}{0}\selectfont]  at (4.4,-0.5) {0.45};
 \node[text width=5, font=\fontsize{8}{0}\selectfont]  at (5.4,-0.5) {0.55};
 \node[text width=5, font=\fontsize{8}{0}\selectfont]  at (6.4,-0.5) {0.65};
\end{tikzpicture}

\begin{tikzpicture} 
\draw (0,0) -- (10,0);

\draw (2.5,0.2) -- (2.5,0);
\draw (3.5,0.2) -- (3.5,0);
\draw (4.5,0.2) -- (4.5,0);
\draw (5.5,0.2) -- (5.5,0);
\draw (6.5,0.2) -- (6.5,0);
\draw (7.5,0.2) -- (7.5,0);
\draw (0,0.2) -- (0,0);
\draw (10,0.2) -- (10,0);

\draw[purple,semithick] (4,0) ellipse [x radius=0.5,y radius=.1];
 
\draw[blue,semithick] (5,0) ellipse [x radius=0.5,y radius=.1];
 
\draw[orange,thick] (3.5,-0.1) rectangle (7.5, 0.1);
 
 \node[text width=5, font=\fontsize{8}{0}\selectfont]  at (0,-0.5) {0};
 \node[text width=5, font=\fontsize{8}{0}\selectfont]  at (10,-0.5) {1};
 \node[text width=5, font=\fontsize{8}{0}\selectfont]  at (2.4,-0.5) {0.25};
 \node[text width=5, font=\fontsize{8}{0}\selectfont]  at (3.4,-0.5) {0.35};
 \node[text width=5, font=\fontsize{8}{0}\selectfont]  at (4.4,-0.5) {0.45};
 \node[text width=5, font=\fontsize{8}{0}\selectfont]  at (5.4,-0.5) {0.55};
 \node[text width=5, font=\fontsize{8}{0}\selectfont]  at (6.4,-0.5) {0.65};
 \node[text width=5, font=\fontsize{8}{0}\selectfont]  at (7.4,-0.5) {0.75}; 
\end{tikzpicture}

\begin{tikzpicture} 
\draw (0,0) -- (10,0);

\draw (2.5,0.2) -- (2.5,0);
\draw (3.5,0.2) -- (3.5,0);
\draw (4.5,0.2) -- (4.5,0);
\draw (5.5,0.2) -- (5.5,0);
\draw (6.5,0.2) -- (6.5,0);
\draw (7.5,0.2) -- (7.5,0);
\draw (8.5,0.2) -- (8.5,0);
\draw (0,0.2) -- (0,0);
\draw (10,0.2) -- (10,0);

 \draw[purple,semithick] (5,0) ellipse [x radius=0.5,y radius=.1];
 
 \draw[blue,semithick] (5,0) ellipse [x radius=0.5,y radius=.1];
 
 \draw[orange,thick] (4.5,-0.1) rectangle (8.5, 0.1);
 
 
 

 
 \node[text width=5, font=\fontsize{8}{0}\selectfont]  at (0,-0.5) {0};
 \node[text width=5, font=\fontsize{8}{0}\selectfont]  at (10,-0.5) {1};
 \node[text width=5, font=\fontsize{8}{0}\selectfont]  at (2.4,-0.5) {0.25};
 \node[text width=5, font=\fontsize{8}{0}\selectfont]  at (3.4,-0.5) {0.35};
 \node[text width=5, font=\fontsize{8}{0}\selectfont]  at (4.4,-0.5) {0.45};
 \node[text width=5, font=\fontsize{8}{0}\selectfont]  at (5.4,-0.5) {0.55};
 \node[text width=5, font=\fontsize{8}{0}\selectfont]  at (6.4,-0.5) {0.65};
 \node[text width=5, font=\fontsize{8}{0}\selectfont]  at (7.4,-0.5) {0.75};
 \node[text width=5, font=\fontsize{8}{0}\selectfont]  at (8.4,-0.5) {0.85};
\end{tikzpicture}

\begin{tikzpicture}

\node[text width=1, font=\fontsize{8}{0}\selectfont, thick, black]  at (-1.4, 1.1) {*};
\draw[purple,semithick] (-.25,1) ellipse [x radius=0.8,y radius=.1];
\node[text width=240, font=\fontsize{8}{0}\selectfont, black]  at (5.1, 1) {: dishonest validators' voting time frame};

\node[text width=1, font=\fontsize{8}{0}\selectfont, thick, black]  at (-1.4, 0.6) {*};
\draw[blue,semithick] (-.25,0.5) ellipse [x radius=0.8,y radius=.1];
\node[text width=240, font=\fontsize{8}{0}\selectfont, black]  at (5.1, 0.5) {: honest validators' voting time frame};

\node[text width=1, font=\fontsize{8}{0}\selectfont, thick, black]  at (-1.4, 0.1) {*};
\draw[orange,thick] (-1.05, -0.1) rectangle (0.55, 0.1);
\node[text width=240, font=\fontsize{8}{0}\selectfont, black]  at (5.1, 0) {: interval of possible time to obtain dishonest validators' votes};

\end{tikzpicture}

\begin{tikzpicture}
\matrix (first) [table,text width=6em]
{
Dishonest voting time & Win \\
$0.2$   & $96\%$ \\
$0.3$   & $74\%$ \\
$0.4$   & $58\%$ \\
$0.5$   & $100\%$ \\
};
\end{tikzpicture}

\caption{Pessimistic regime simulations with $a = 0.15$, $\epsilon_1 = 0.05$, $\epsilon_2 = 0.15$ for the $4$ cases above in order. Each row has a different coordinated dishonest voting time, which affects the winning rates of honest validators. }
    \label{fig:equiv-network-0.3}
\end{center}
\end{figure}
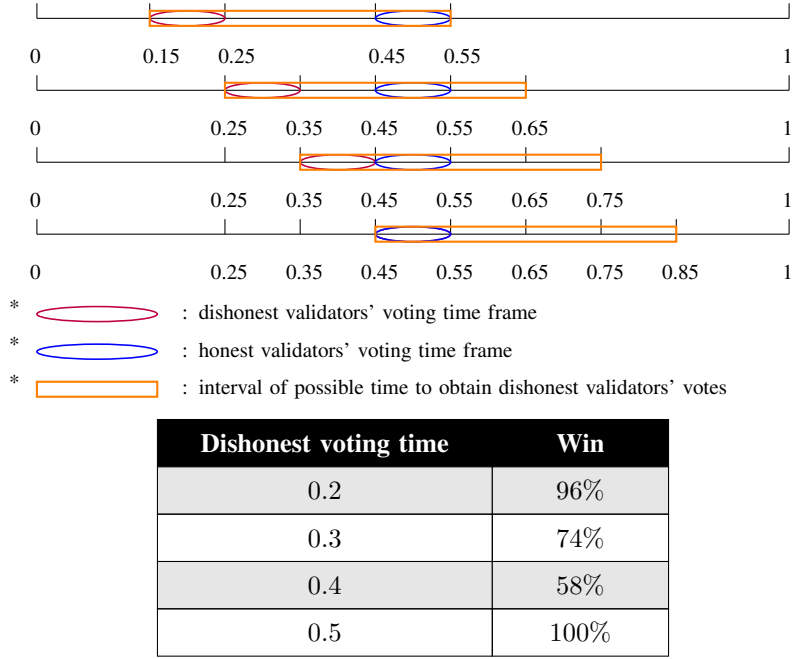

\subsection{The Optimistic Regime - Low Latency}\label{subsec: optimistic}

For this regime, we go to the other extreme and assume perfect synchrony ($a = \epsilon_2 = 0$). This means all decisions are immediately propagated to all other validators on the network. 

In this game, the moment one choice has a lead (by default, $O_1$), everybody is aware of the lead and all honest validators will vote for the choice in the lead. Thus, the optimal strategy for dishonest validators is to keep the voting exactly at a tie (in which case the next honest validator will vote for $O_1$) or such that $O_2$ has one extra vote (in which case the next validator will vote $O_2$). However, as it takes more votes to get the count to under $O_1$, the dishonest validators really can do no better than as if they voted at the very end in the wrong direction, which still gives the honest validators $(1-p)N$ votes in the correct direction. In our situation where $p=1/3$, this gives a total vote differential of $(1-2p)N = N/3$, which is good for us. One can see a simulated outcome of this case in  Figure~\ref{fig:equiv-network-0.0}.

\begin{figure}[H]
\begin{center}
\begin{tikzpicture} 
\draw (0,0) -- (10,0);

\draw (4.9,0.2) -- (4.9,0);
\draw (5.1,0.2) -- (5.1,0);
\draw (0,0.2) -- (0,0);
\draw (10,0.2) -- (10,0);

 \draw[purple,semithick] (5,0) ellipse [x radius=0.1,y radius=.1];
 
 \draw[blue,semithick] (5,0) ellipse [x radius=0.1,y radius=.1];
 
 
 

 
 \node[text width=5, font=\fontsize{8}{0}\selectfont]  at (0,-0.5) {0};
 \node[text width=5, font=\fontsize{8}{0}\selectfont]  at (10,-0.5) {1};
 \node[text width=5, font=\fontsize{8}{0}\selectfont]  at (2.4,-0.5) {0.25};
 \node[text width=5, font=\fontsize{8}{0}\selectfont]  at (3.4,-0.5) {0.35};
 \node[text width=5, font=\fontsize{8}{0}\selectfont]  at (4.4,-0.5) {0.45};
 \node[text width=5, font=\fontsize{8}{0}\selectfont]  at (5.4,-0.5) {0.55};
 \node[text width=5, font=\fontsize{8}{0}\selectfont]  at (6.4,-0.5) {0.65};
 \node[text width=5, font=\fontsize{8}{0}\selectfont]  at (7.4,-0.5) {0.75};
 \node[text width=5, font=\fontsize{8}{0}\selectfont]  at (8.4,-0.5) {0.85};
\end{tikzpicture}

\begin{tikzpicture}
\matrix (first) [table,text width=6em]
{
Dishonest voting time & Win\\
$0.5$   & $100\%$\\
};
\end{tikzpicture}

\caption{Optimistic regime simulation outcome with $a = 0$, $\epsilon_1 = 0.05$, $\epsilon_2 = 0$ with both dishonest and honest validators voting in the same time frame. Honest validators always win.}
    \label{fig:equiv-network-0.0}
\end{center}
\end{figure}
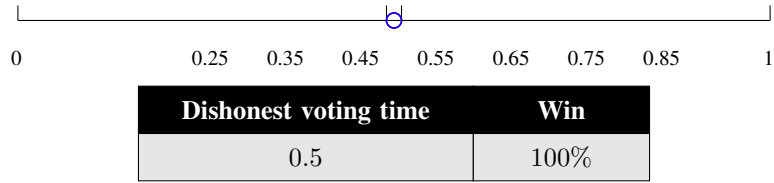

\subsection{An Example Inbetween}

We make a specific set of assumptions somewhere between the extremes presented in sections ~\ref{subsec: pessimistic} and~\ref{subsec: optimistic}. These assumptions are fairly arbitrary and hopefully presents a realistic example. It would be good to find principled ``bottom truth'' parameters somehow (such as after the actual blockchain is implemented).

In Figure~\ref{fig:equiv-network-0.2}, we have three cases that demonstrate the outcomes of the inbetween examples with $a = 0.1, \epsilon_1 = 0.05$ and $\epsilon_2 = 0.1$. Thus, $a$ is comparable to $\epsilon_1$ (avoiding the pessimistic case) but fairly big (avoiding the optimistic case). Making $\epsilon_2$ bigger doesn't really change the results much until $\epsilon_2$ becomes around the order of $a$.

In Case 1, some dishonest validators vote earlier than honest validators do and their ``smoke bomb attack'' has some impact; however, the amount of votes are not distracting enough compared to the amount of votes honest validators are able to vote following the protocol. In case 2, as we earlier discussed in the pessimistic regime, dishonest validators vote closer to when the honest validators do, which boosts the effectiveness of the attack. In Case 3, dishonest validators vote at the same time as honest validators do; as in the pessimistic regime, honest validators still almost always win.

\begin{figure}[H]
\begin{center}

\begin{tikzpicture} 
\draw (0,0) -- (10,0);

\draw (2.5,0.2) -- (2.5,0);
\draw (3.5,0.2) -- (3.5,0);
\draw (4.5,0.2) -- (4.5,0);
\draw (5.5,0.2) -- (5.5,0);
\draw (0,0.2) -- (0,0);
\draw (10,0.2) -- (10,0);

 \draw[purple,semithick] (3,0) ellipse [x radius=0.5,y radius=.1];
 
 \draw[blue,semithick] (5,0) ellipse [x radius=0.5,y radius=.1];
 
 \draw[orange,thick] (2.5, -0.1) rectangle (5.5, 0.1);

 \node[text width=5, font=\fontsize{8}{0}\selectfont]  at (0,-0.5) {0};
 \node[text width=5, font=\fontsize{8}{0}\selectfont]  at (10,-0.5) {1};
 \node[text width=5, font=\fontsize{8}{0}\selectfont]  at (2.4,-0.5) {0.25};
  \node[text width=5, font=\fontsize{8}{0}\selectfont]  at (3.4,-0.5) {0.35};
  \node[text width=5, font=\fontsize{8}{0}\selectfont]  at (4.4,-0.5) {0.45};
  \node[text width=5, font=\fontsize{8}{0}\selectfont]  at (5.4,-0.5) {0.55};
\end{tikzpicture}

\begin{tikzpicture} 
\draw (0,0) -- (10,0);

\draw (3.5,0.2) -- (3.5,0);
\draw (4.5,0.2) -- (4.5,0);
\draw (5.5,0.2) -- (5.5,0);
\draw (6.5,0.2) -- (6.5,0);
\draw (0,0.2) -- (0,0);
\draw (10,0.2) -- (10,0);

\draw[purple,semithick] (4,0) ellipse [x radius=0.5,y radius=.1];
 
\draw[blue,semithick] (5,0) ellipse [x radius=0.5,y radius=.1];
 
\draw[orange,thick] (3.5,-0.1) rectangle (6.5, 0.1);
 
 \node[text width=5, font=\fontsize{8}{0}\selectfont]  at (0,-0.5) {0};
 \node[text width=5, font=\fontsize{8}{0}\selectfont]  at (10,-0.5) {1};
 \node[text width=5, font=\fontsize{8}{0}\selectfont]  at (3.4,-0.5) {0.35};
 \node[text width=5, font=\fontsize{8}{0}\selectfont]  at (4.4,-0.5) {0.45};
 \node[text width=5, font=\fontsize{8}{0}\selectfont]  at (5.4,-0.5) {0.55};
 \node[text width=5, font=\fontsize{8}{0}\selectfont]  at (6.4,-0.5) {0.65};
\end{tikzpicture}

\begin{tikzpicture} 
\draw (0,0) -- (10,0);

\draw (4.5,0.2) -- (4.5,0);
\draw (5.5,0.2) -- (5.5,0);
\draw (6.5,0.2) -- (6.5,0);
\draw (7.5,0.2) -- (7.5,0);
\draw (10,0.2) -- (10,0);

 \draw[purple,semithick] (5,0) ellipse [x radius=0.5,y radius=.1];
 
 \draw[blue,semithick] (5,0) ellipse [x radius=0.5,y radius=.1];
 
 \draw[orange,thick] (4.5,-0.1) rectangle (7.5, 0.1);

 \node[text width=5, font=\fontsize{8}{0}\selectfont]  at (0,-0.5) {0};
 \node[text width=5, font=\fontsize{8}{0}\selectfont]  at (10,-0.5) {1};
 \node[text width=5, font=\fontsize{8}{0}\selectfont]  at (4.4,-0.5) {0.45};
 \node[text width=5, font=\fontsize{8}{0}\selectfont]  at (5.4,-0.5) {0.55};
 \node[text width=5, font=\fontsize{8}{0}\selectfont]  at (6.4,-0.5) {0.65};
 \node[text width=5, font=\fontsize{8}{0}\selectfont]  at (7.4,-0.5) {0.75};
\end{tikzpicture}

\begin{tikzpicture}
\matrix (first) [table,text width=6em]
{
Dishonest voting time & Win\\
$0.3$   & $93\%$\\
$0.4$   & $79\%$\\
$0.5$   & $99\%$\\
};
\end{tikzpicture}

\caption{Simulation outcomes with $a = 0.1$, $\epsilon_1 = 0.05$, $\epsilon_2 = 0.1$. }
    \label{fig:equiv-network-0.2}
\end{center}
\end{figure}
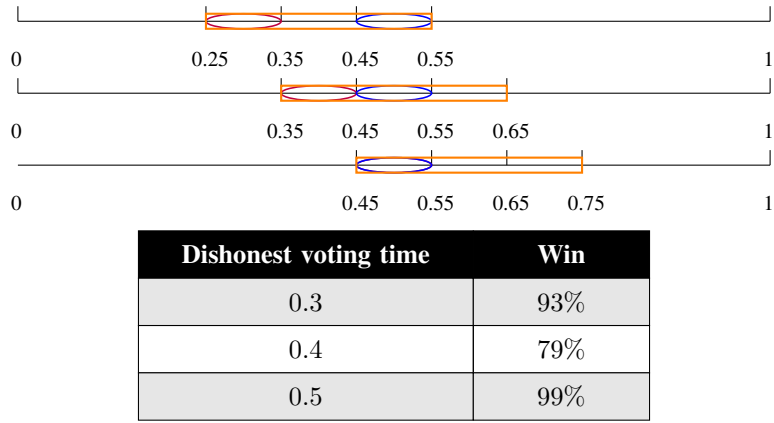
\end{document}